%
\documentclass[runningheads,cleveref,thm-restate]{lipics-v2021}
\usepackage[T1]{fontenc}
\usepackage{float}
\usepackage{amsmath}
\usepackage{amssymb}
\usepackage{algorithm, algpseudocode, float}
\usepackage{xspace}
\usepackage{graphicx}
\usepackage{capt-of}
\usepackage{tikz}
\usetikzlibrary{positioning}
\usetikzlibrary{decorations.markings}
\usetikzlibrary{matrix}
\usepackage{complexity}
\usepackage{hyperref}




\providecommand {\calO}      {{\mathcal O}\xspace}

\providecommand {\calO}      {{\mathcal O}\xspace}






\newcommand{\T}{T_G}
\newcommand{\Ty}{T_{G'}}

\newcommand{\twh}{\tau_H}
\newcommand{\rnode}{S}
\newcommand{\tnode}{T}

\newcommand{\parConf}{\mathcal{P}}
\newcommand{\parConfBad}{{\mathcal{P}}_{bad}}
\newcommand{\bk}[1]{{#1}_{bad}}
\newcommand{\bkc}[1]{\widetilde{#1}_{bad}}
\newcommand{\subc}{\subseteq_c}

\newcommand{\main}{main}
\newcommand{\branch}{branch}
\newcommand{\lnode}{L}
\newcommand{\lnodex}{L'}

\newcommand{\csum}[1]{\oplus_{#1}}

\newcommand{\dirs}[1]{\emph{Dir}_{#1}}
\newcommand{\vias}[1]{\emph{Via}_{#1}}
\newcommand{\dir}[3]{\text{\emph{Dir}}_{#1}(#2,#3)}
\newcommand{\via}[4]{\text{\emph{Via}}_{#1}(#2,#3,#4)}

\newcommand{\projset}[1]{\mathcal{N}(#1)}

\newcommand{\leftR}[1]{left(#1)}
\newcommand{\rightR}[1]{right(#1)}

\newcommand{\leftm}[1]{{#1}^{L}}
\newcommand{\rightm}[1]{{#1}^{R}}

\newlength{\ml}
\setlength{\ml}{0.1cm}
\newlength{\ld}
\setlength{\ld}{0.5cm}
\newlength{\hd}
\setlength{\hd}{2cm}



\newcommand{\phidisj}[2]{\phi_{disj}(#1,#2)}
\newcommand{\phipart}[3]{\phi_{part}(#1,#2,#3)}
\newcommand{\phieven}[1]{\phi^{e}_{s,t}(#1)}
\newcommand{\phiodd}[1]{\phi^{o}_{s,t}(#1)}

\newcommand{\phidppeven}{\phi^{e}_{s_1,t_1,s_2,t_2}}


\title{The Even-Path Problem in Directed Single-Crossing-Minor-Free Graphs}

\author{Archit Chauhan}{Chennai Mathematical Institute, India}{archit@cmi.ac.in}{}
{Partially supported by a grant from Infosys foundation and TCS PhD fellowship.}
\author{Samir Datta}{Chennai Mathematical Institute \& UMI ReLaX, India}{sdatta@cmi.ac.in}{}
{Partially supported by a grant from Infosys foundation}
\author{Chetan Gupta}{Indian Institute of Technology, Roorkee, India}
{chetan.gupta@cs.iitr.ac.in}{}{}
\author{Vimal Raj Sharma}{Indian Institute of Technology, Jodhpur, India}
{vimalraj@iitj.ac.in}{}{}

\authorrunning{A. Chauhan, S. Datta, C. Gupta and V. \,R. Sharma}
\Copyright{Archit Chauhan, Samir Datta, Chetan Gupta, Vimal Raj Sharma} 

\ccsdesc[100]{Theory of computation~Design and analysis of algorithms} 
\ccsdesc[100]{Theory of computation~Graph algorithms analysis}





\acknowledgements{We would like to thank anonymous reviewers for 
  their useful comments and corrections in past submissions. 
  We would also like to thank Martin Grohe 
for pointing out some references. 
We are also grateful to Geevarghese Philip and Vishwa Prakash H .V. 
for their help in reading some sections and 
improving the presentation of this paper.}

\nolinenumbers 


\begin{document}

\maketitle              
\begin{abstract}
Finding a simple path of even length between two designated vertices in a 
directed graph is a fundamental \NP-complete problem \cite{LP84} known as 
the $\EP$ problem.
Nedev \cite{Nedev99} proved in 1999, that for directed planar 
graphs, the problem can be solved in polynomial time. More than two decades since then, 
we make the first progress in extending the tractable classes of  
graphs for this problem. We give a polynomial time algorithm to solve the 
$\EP$ problem for classes of
$H$-minor-free directed graphs,\footnote{Throughout this paper, when 
referring to concepts like treewidth 
or minors of directed graphs, we intend them to apply to the underlying undirected graph.} 
where $H$ is a single-crossing graph. 

We make two new technical contributions along the way, that might be of independent interest. 
The first, and perhaps our main, contribution is the construction of 
small, planar, \emph{parity-mimicking networks}. 
These are graphs that mimic parities of all possible paths
between a designated set of terminals of the original graph.

Finding vertex disjoint paths between given source-destination pairs of vertices is 
another fundamental problem, known to be \NP-complete in directed 
graphs~\cite{FHW80}, though known 
to be tractable in planar directed graphs~\cite{Schrijver94}. 
We encounter a natural variant of this problem, that of finding disjoint paths 
between given pairs of vertices, but with constraints on parity of the total 
length of paths. The other significant contribution of our paper is 
to give a polynomial time algorithm for the \emph{$3$-disjoint paths with total parity 
problem}, in directed planar graphs with some restrictions (and also in 
directed graphs of bounded treewidth).

\keywords{Graph Algorithms, \and EvenPath, \and Polynomial-time Algorithms, \and Reachability}
\end{abstract}

\section{Introduction}
Given a directed graph $G$, and two vertices $s$ and $t$ in it, checking for the 
existence of a 
a simple directed path from $s$ to $t$ is a fundamental problem in graph theory, 
known as the $\REACH$ problem.  
The $\EP$ problem is a variant of $\REACH,$  where given a directed 
graph $G$ and two 
vertices $s$ and $t$ we need to answer whether there exists a simple path of even 
length from $s$ to $t$. 
$\EP$ was shown to be $\NP$-$\mbox{\textsf{complete}}$ by 
LaPaugh and Papadimitriou \cite{LP84} via a reduction from 
an $\NP$\textsf{-complete} 
problem, the \textsf{Path}-\textsf{Via}-\textsf{A}-\textsf{Vertex} problem. 
On the other hand, they also show in~\cite{LP84} that its undirected counterpart 
is solvable in linear time. 
Several researchers have recently studied both the space, and simultaneous time-space complexity 
of $\EP$ for special classes of graphs \cite{CT15,DGKT12}. A similar problem, that of
finding a simple directed cycle of even length, called $\EC$ (which easily reduces
to $\EP$), has also received significant attention. While polynomial-time algorithms
have been known since long for the undirected version (\cite{LP84,YU97}), 
the question of tractablility of the directed version was open for over 
two decades before polynomial-time algorithms were given by McCuiag, and by 
Robertson, Seymour and Thomas~\cite{MRST97}. 
More recently, Bj\"orklund, Husfeldt and Kaski~\cite{BHK22} gave a randomized 
polynomial-time algorithm for finding a \emph{shortest} even directed cycle in directed 
graphs.

Although $\EP$ is $\NP$-$\textsf{complete}$ for general directed 
graphs, it is natural and interesting to investigate the classes 
of graphs for which it can be solved efficiently.
In 1994, before the algorithm of~\cite{MRST97}, Gallucio and Loebl~\cite{GL94}
gave a polynomial-time algorithm for $\EC$ in planar directed graphs. They did
so by developing a routine for a restricted variant of $\EP$ (when
$s,t$ lie on a common face, and there are no even directed cycles left on removal
of that face).
Following that, Nedev in 1999, showed that $\EP$ in planar graphs is 
polynomial-time solvable~\cite{Nedev99}. 
Planar graphs are an example of a \emph{minor-closed} family, which are families of 
graphs that are closed under edge contraction and deletion. 
Minor-closed families include many 
more natural classes of graphs, like graphs of bounded genus, 
graphs of bounded treewidth, apex graphs.
A theorem of Robertson-Seymour~\cite{RS04} shows that 
every minor-closed family can be characterized by a set of finite forbidden 
minors. Planar graphs, for example, are exactly graphs with $K_{3,3,},K_5$ as 
forbidden minors~\cite{Wagner1937}. 
In this paper, we consider the family of $H$-minor-free graphs, 
where $H$ is any fixed single-crossing graph, i.e., $H$ can be drawn on the plane with 
at most one crossing. 
Such families are called single-crossing-minor-free graphs.
They include well-studied classes of graphs like $K_5$-minor-free graphs, $K_{3,3}$-minor-free 
graphs.\footnote{Both $K_5,K_{3,3}$ have crossing number one. Also, note that 
both the families, $K_5$-minor-free, and $K_{3,3}$-
minor-free, have graphs of $O(n)$ genus.} 

Robertson and Seymour showed that single-crossing-minor free graphs admit 
a decomposition by (upto) $3$-clique-sums, into pieces that are either of 
bounded treewidth, or planar~\cite{RS93}. This is a simpler version of their 
more general theorem regarding decomposition of $H$-minor free graphs, 
(where $H$ is any fixed graph) by clique sums, into more complex pieces, 
involving apices and vortices~\cite{RS03}.
Solving $\EP$ on single-crossing-minor free graphs would therefore be 
a natural step to build an attack on more general minor closed familes.

Many results on problems like reachability,
matching, coloring, isomorphism, for planar graphs have been extended to 
$K_{3,3}$-minor-free graphs and $K_{5}$-minor-free graphs as a next step
(see \cite{TW09,Khuller90_k5,Khuller90_k33,Vazirani89,STF14,DNTW09,AGGT14}). 
Chambers and Eppstein showed in \cite{CE} that 
using the results of \cite{BK09,Hagerup-et-al} for maximum flows in planar and
bounded treewidth graphs, respectively, maximum flows in
single-crossing-minor-free graphs can be computed efficiently. 
Following the result of \cite{AV20} which showed that perfect matching
in planar graphs can be found in $\NC$, Eppstein and Vazirani in \cite{EV}, 
extended the result to single-crossing-minor-free graphs.  
\subsection{Our Contributions}
From here onwards, we will drop the term `directed' and assume by default that 
the graphs we are referring to are directed, unless otherwise stated. 
Operations like clique sums, decomposing the graphs along separating triples, 
pairs, etc., will be applied on the underlying undirected graphs.
The following is the main theorem we prove in this paper:
\begin{theorem}
 Given an $H$-minor-free graph $G$ for any fixed single-crossing graph $H$, the $\EP$
 problem in $G$ can be solved in polynomial time.
\end{theorem}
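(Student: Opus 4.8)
The plan is to reduce the $\EP$ problem on $G$, via the Robertson--Seymour decomposition of single-crossing-minor-free graphs, to auxiliary problems on planar graphs and on graphs of bounded treewidth, and to solve those using Nedev's planar even-path algorithm~\cite{Nedev99}, standard treewidth dynamic programming, and the two new tools announced above. Concretely, first apply the decomposition theorem of Robertson and Seymour~\cite{RS93}: since $H$ is single-crossing, $G$ can be written, by iterated clique-sums of order at most $3$, as a tree $\mathcal{T}$ whose pieces (bags) are each planar or of treewidth at most some constant $c(H)$, and such a $\mathcal{T}$ is computable in polynomial time. Root $\mathcal{T}$; every non-root piece $B$ meets its parent in a clique $C_B$ with $|C_B| \le 3$ (the \emph{interface} of $B$), and $C_B$ is the only part of the subtree $\mathrm{sub}(B)$ rooted at $B$ that is visible from the rest of $G$. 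We may also assume the usual normalization that in each planar piece every interface clique bounds a face of that piece.

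Process $\mathcal{T}$ bottom-up. For a piece $B$ whose children have all been processed, the goal is to summarize $\mathrm{sub}(B)$ by a constant-size \emph{parity profile}: a record stating, for every \emph{pattern} --- a set of vertex-disjoint simple paths with endpoints in $C_B \cup (\{s,t\} \cap \mathrm{sub}(B))$, together with a target total-length parity --- whether that pattern is realizable inside $\mathrm{sub}(B)$. Since $|C_B| \le 3$ and at most the two vertices $s,t$ are added, there are only $O(1)$ patterns, each involving a bounded number (at most three) of disjoint paths, so the profile has constant size. To compute the profile of $B$, first \emph{fold} each child $B'$ into $B$: replace $\mathrm{sub}(B')$ by a small \emph{planar parity-mimicking network} placed over the interface clique $C_{B'}$ that realizes exactly the patterns recorded in the profile of $B'$ and no others. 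Because $|C_{B'}| \le 3$, because $C_{B'}$ bounds a face of $B$, and because the network can be drawn with its terminals on its outer face, this clique-sum keeps $B$ planar when $B$ is planar; and folding constant-size gadgets into a bounded-treewidth piece keeps its treewidth $O(c(H))$. This folding \emph{promotes} $s$ and $t$ up the tree: a gadget standing in for a subtree containing $s$ carries a vertex playing the role of $s$, so by the time we reach the root both $s$ and $t$ occur (possibly as gadget vertices) in the fully folded root piece.

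Let $B^{+}$ denote $B$ with all children folded in; then $B^{+}$ is self-contained, planar or of treewidth $O(c(H))$, and has at most five special vertices. Computing the profile of $B$ now reduces, pattern by pattern, to an instance of the \emph{$k$-disjoint paths with total parity} problem on $B^{+}$ with $k \le 3$: for $k = 1$ this is essentially Nedev's routine for even (and odd) $s$--$t$ paths in planar graphs~\cite{Nedev99} (and a treewidth dynamic program in the bounded-treewidth case), and for $k \le 3$ it is the new planar routine that is the paper's second main contribution (again with a treewidth dynamic program for bounded-treewidth pieces); in the planar case the interface vertices sit in the restricted position --- on a common face --- that these routines require. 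Running this at every piece and combining the profiles up $\mathcal{T}$, the final answer is obtained as a single even $s$--$t$ path query on the planar (or bounded-treewidth) root graph $R^{+}$. The whole computation runs in polynomial time: $\mathcal{T}$ has $O(n)$ pieces, each processed by $O(1)$ calls to polynomial-time subroutines, and only constant-size information ever crosses an interface.

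The step I expect to be the main obstacle is the construction of the planar parity-mimicking networks: one must produce, for every realizable parity behaviour of a piece with respect to up to five terminals, a gadget of bounded size that reproduces precisely that behaviour --- in particular it must get the parity of \emph{every} simple path right, including paths that leave the gadget and re-enter it --- while remaining drawable with its terminals on a face, so that folding it through clique-sums of order at most $3$ never destroys planarity. Closely tied to this in difficulty is the $3$-disjoint paths with total parity routine in (suitably restricted) planar directed graphs: the disjoint-paths machinery of Robertson--Seymour and Schrijver finds a fixed number of vertex-disjoint directed paths but is blind to their lengths, and Nedev's algorithm controls the parity of a single path only, so handling the total parity of three simultaneously disjoint paths calls for a genuinely new argument --- presumably a face-based or homology-style parity analysis adapted to planar directed graphs. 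The remaining pieces --- computing the decomposition, the bookkeeping showing that patterns compose consistently into simple $s$--$t$ paths across clique-sums, and the treewidth dynamic programs --- are comparatively routine.
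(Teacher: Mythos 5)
Your high-level plan (Robertson--Seymour clique-sum decomposition, bottom-up replacement of subtrees by small planar gadgets with terminals on a face, and parity/disjoint-path subroutines on the planar or bounded-treewidth pieces) is the same as the paper's, but your uniform bottom-up folding hides a gap that the paper's architecture is specifically designed to avoid. In your scheme, every subtree containing $s$ or $t$ must be compressed into a \emph{single} parity-mimicking gadget over its interface clique \emph{plus} the promoted vertex $s$ (or $t$), i.e.\ a gadget with four (and at the end five) terminals that reproduces exactly the realizable patterns --- including configurations of two or three disjoint paths --- while staying planar with terminals on a common face. The paper explicitly states that it does not know how to construct such networks, or even whether they exist, for four terminals under these constraints; its Lemma~\ref{lem:par_mim_net} (an $18$-vertex construction via a heavy case analysis of ``bad kernels'') works only for three terminals. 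This is precisely why the paper splits the computation into two phases: branch pieces, whose interface has at most three terminals and whose snapshots are single terminal-to-terminal paths, are replaced by single three-terminal mimicking networks (where the equality of the \emph{Dir}/\emph{Via} parity configuration is what prevents false solutions, cf.\ Lemma~\ref{lem:par_mim_net_correct}); whereas the pieces on the $S$--$T$ path, which have a fourth terminal, are handled by \emph{sets} of projection networks, one per snapshot type, computed with the three-disjoint-paths-with-total-parity routine of Lemma~\ref{lem:3dpp} and propagated along the path. Carrying a set rather than a single gadget is affordable there only because that part of the decomposition tree is a path; at branch nodes of unbounded degree it would force exponentially many combinations, which is the paper's stated reason one cannot simply use profiles-plus-sets everywhere. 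Your proposal, as written, needs the four/five-terminal single gadget and therefore does not go through without either constructing such gadgets (open, and likely the hardest missing piece) or restructuring the argument along the paper's two-phase lines.

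A secondary point: your notion of ``pattern'' (vertex-disjoint paths with endpoints in the interface, plus a total parity) does not record whether a single interface-to-interface path passes through a third interface vertex. The paper's \emph{Dir} versus \emph{Via} distinction exists exactly because a gadget offering a direct $v_i$--$v_j$ path of some parity, when the original piece only has such a path through $v_k$, frees $v_k$ for use outside the gadget and creates false solutions (\cref{fig:via1}). Your phrase about ``paths that leave the gadget and re-enter it'' gestures at this but your profile, as defined, would have to be refined to capture it before the ``realizes exactly these patterns and no others'' replacement step is sound.
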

We first apply the theorem of Robertson-Seymour (theorem~\ref{thm:RS_scm_free}), and decompose 
$G$ using $3$-clique sums into pieces that are either planar or of bounded 
treewidth.
Though $\EP$ is tractable in planar graphs, and can also be solved in bounded treewidth 
graphs by Courcelle's theorem, straightforward dynamic programming  
does not yield a polynomial-time algorithm for the problem, as we will explain in 
subsequent sections.
One of the technical ingredients that we develop to overcome the issues is that of
\emph{parity-mimicking networks}, which are graphs that preserve the parities of
various paths between designated terminal vertices of the graph it mimics. We
construct them for upto three terminal vertices.
The idea of mimicking networks has been used in the past in other 
problems, like flow computation \cite{CE,CSWZ,Hagerup-et-al,KR14,KR13}, 
and in perfect matching \cite{EV}. 
The ideas we use for constructing
parity mimicking networks however, do not rely on any existing work that we know of.
For technical reasons, we require our parity mimicking networks to be of bounded treewidth
\emph{and} planar, with all terminals lying on a common face. 
These requirements make it more challenging to construct them (or 
even to check their existence), than might seem at a first glance. 
One of our main contributions is to show (in lemma~\ref{lem:par_mim_net}) the 
construction of such networks, for upto three terminals. 
It might be of independent interest to see if a more simpler construction exists 
(perhaps a constructive argument to route paths, that has eluded us so far), 
that avoids the hefty case analysis we do, and also if they can be constructed 
for more than three terminals. 

We also come across a natural variant of another famous problem.
Suppose we are given a graph $G$ and vertices $s_1,t_1,s_2,t_2 \ldots s_k,t_k$ 
(we may call them terminals) in it. 
The problem of finding pairwise vertex disjoint paths, from each $s_i$ to 
$t_i$ is a well-studied problem called the disjoint paths problem. In undirected 
graphs, the problem is in $\P$ when $k$ is fixed~\cite{RS95,RRSS91}, but 
$\NP$-$\mbox{\textsf{complete}}$ otherwise~\cite{Lynch75}. 
For (directed) graphs, the problem is $\NP$-$\mbox{\textsf{complete}}$ even 
for $k=2$~\cite{FHW80}. In planar graphs, it is known to be in $\P$ for 
fixed $k$~\cite{Schrijver94,CDPP13,LMPSZ20}. We consider this problem, with the 
additional constraint that the sum of lengths of the $s_i$-$t_i$ paths must 
be of specified parity. We hereafter refer to the parity of the sum of lengths as 
total parity, and refer to the problem as $\DPTP$.  
In the undirected setting, a stricter version of this 
problem has been studied, where each $s_i$-$t_i$ path must have parity $p_i$ 
that is specified in input. This problem was shown to be in $\P$ for fixed 
$k$, by Kawarabayashi et al.~\cite{KRW11}. However much less is known in directed 
setting.
While $\DPTP$ can be solved for fixed $k$ in bounded treewidth graphs using Courcelle's 
theorem~\cite{COURCELLE}, 
we do not yet know if it is tractable in planar graphs, even for $k=2$. 
The other main technical contribution of our paper is in lemma~\ref{lem:3dpp}, where 
we show that in some special cases, i.e., when 
there are four terminals, three of which lie on a common face of a planar graph, 
$\DPTP$ can be solved in polynomial time for $k=3$. 
We do this by showing that under the extra constraints,
the machinery developed by \cite{Nedev99} can be further generalized 
and applied to find a solution in polynomial
time. The question of tractability of $\DPTP$ in planar graphs, without 
any constraint of some terminals lying on a common face is open, and would be 
interesting to resolve.  
A polynomial-time algorithm for it (for fixed $k$), would 
yield a polynomial-time algorithm for 
$\EP$ in graphs with upto $k$ crossings, which 
is currently unknown.

Though the proofs of lemmas~\ref{lem:par_mim_net},\ref{lem:3dpp} form the meat 
of technical contributions 
of the paper, we defer them to~\cref{app:par_mim_net_proof,app:3dpp_planar}.

\section{Preliminaries}\label{prelims}
From now onwards we will refer to simple, directed paths as just paths.
For a path $P$, and a pair of vertices $u$ and $v$ on $P$, such that $u$ 
occurs before $v$ in $P$, $P[u,v]$ 
denotes the subpath of $P$ from $u$ to $v$. If $P_1$ and $P_2$ are two paths that are 
vertex disjoint, except possibly sharing starting or ending vertices, then we say that
$P_1,P_2$ are \emph{internally disjoint} paths. If $P_1$'s ending vertex is same as the 
starting vertex of $P_2$, then we denote the concatenation of 
$P_1$ and $P_2$ by $P_1.P_2$. We will use the numbers $0,1$ to refer to 
  \emph{parities}, $0$ for even parity and $1$ for odd
  parity. We say a path $P$ is of \emph{parity} $p$ $(p\in \{0,1\})$, if its length 
  modulo $2$ is $p$.
We will use a well-known structural decomposition of $H$-minor-free graphs due to 
\cite{RS93}. 
We recall the definition of clique sums:
\begin{definition}
 A $k$-clique-sum of two graphs $G_1$, $G_2$ can be obtained from the
 disjoint union of $G_1,G_2$ by identifying a clique in $G_1$ of at
 most $k$ vertices with a clique of the same number of vertices in $G_2$,
 and then possibly deleting some of the edges of the merged clique.
\end{definition}
Thus when separating $G$ along a separating pair/triplet, we can add
virtual edges if needed, to make the separating pair/triplet a clique. 
The virtual edges will not be used in computation of path parities, 
they are only used to compute the decomposition. We can keep track of which edges in the 
graph are virtual edges and which are the real edges throughout the algorithm. 
We can repeatedly apply this procedure to decompose any graph $G$ into smaller 
pieces.
The following is a theorem from \cite{RS93}.
\begin{theorem}[Robertson-Seymour \cite{RS93}]\label{thm:RS_scm_free}
 For any single-crossing graph $H$, there is an integer $\twh$ such 
 that every graph with no minor isomorphic to H is either 
 \begin{enumerate}
  \item the proper $0$-, $1$-, $2$- or $3$-clique-sum of two graphs, or
  \item planar 
  \item of treewidth $\leq \twh$. 
 \end{enumerate}
\end{theorem}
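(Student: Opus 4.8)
A short preliminary reduction lets me assume $H$ is $3$-connected; this is what guarantees that the torsos produced below inherit $H$-minor-freeness. The three alternatives in the statement are set up so that, granting this, the theorem falls out by induction on $|V(G)|$. If $G$ has a vertex cut $S$ with $|S|\le 3$, I make $S$ into a clique by adding virtual edges and split $G$ along $S$; this writes $G$ as a \emph{proper} $0$-, $1$-, $2$- or $3$-clique-sum of two strictly smaller graphs (alternative~1). Each part is $H$-minor-free --- here the $3$-connectivity of $H$ is used in the standard way --- and smaller, so it is handled by the inductive hypothesis, and $G$ is done. Hence only the irreducible case remains: $G$ has no vertex cut of size $\le 3$, i.e.\ $G$ is $4$-connected (or has at most $4$ vertices, in which case alternative~3 holds trivially). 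For such $G$ I must prove the dichotomy: $G$ is planar, or $\mathrm{tw}(G)\le\twh$ for a constant depending only on $H$.

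\textbf{The dichotomy --- the crux.} This is where single-crossingness of $H$ is used, and it is the step I expect to be the real obstacle. Let $G$ be $4$-connected, $H$-minor-free and non-planar, and suppose for contradiction that $\mathrm{tw}(G)$ is enormous compared with $|V(H)|$. By the Grid Minor Theorem, $G$ contains a huge grid (or wall) minor $W$, a large planar ``canvas'' sitting inside $G$. Since $G$ is non-planar it also contains a subdivision $J$ of $K_5$ or $K_{3,3}$, which (being a minimal Kuratowski subgraph) acts as a small ``crossing gadget'': its chords come in pairs that must cross in any plane drawing. Because $G$ is $4$-connected and $W$ is huge, the bounded set of branch vertices of $J$ can be linked to $W$ by disjoint paths landing on far-apart vertices of $W$. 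Now fix a drawing of $H$ in the plane with a single crossing between edges $x_1y_1$ and $x_2y_2$: embed the planar part $H-\{x_1y_1,x_2y_2\}$ as a minor into a bounded window of $W$ (a bounded planar graph is a minor of a large enough grid), route the two crossing edges through $J$ using the crossing gadget, and connect the four terminals $x_1,y_1,x_2,y_2$ (sitting in $W$) to $J$ through the linking paths and the rest of the grid. If $W$ was taken large enough this produces an $H$-minor of $G$, a contradiction; hence $\mathrm{tw}(G)\le\twh$. The delicate, work-heavy point is quantitative: one must choose $W$ large enough and control exactly how $J$ and its linking paths meet $W$ so that the prescribed \emph{labelled} branch structure of $H$ --- not merely ``some non-planar minor'' --- can be realised. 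This localisation-and-rerouting analysis is the technical core of \cite{RS93}.

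\textbf{Conclusion.} Combining the dichotomy with the inductive reduction above proves the theorem: non-trivially decomposable graphs fall under alternative~1, $4$-connected planar atoms under alternative~2, and all remaining atoms (small, or of bounded treewidth by the dichotomy) under alternative~3; taking $\twh$ to be the dichotomy constant (at least $3$, so that the tiny exceptional graphs are covered) finishes the proof.
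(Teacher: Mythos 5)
First, a point of comparison: the paper does not prove this statement at all. Theorem~\ref{thm:RS_scm_free} is imported verbatim from Robertson and Seymour \cite{RS93} and used as a black box, so there is no in-paper proof to measure your attempt against; your proposal has to stand on its own as a proof of a deep structure theorem, and as written it does not.

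The concrete gaps are these. (1) Your reduction step asserts that after splitting $G$ along a cut $S$ with $|S|\le 3$ and adding virtual clique edges, ``each part is $H$-minor-free\ldots\ here the $3$-connectivity of $H$ is used in the standard way.'' This is where the argument fails. The torso at a $3$-cut need not be a minor of $G$: contracting the far side onto $S$ yields at best one new vertex joined to all three cut vertices, which realizes two of the three virtual triangle edges but never the third, so $H$-minor-freeness of the parts does not follow from $H$-minor-freeness of $G$ (the standard patch, as in proofs of Wagner's theorem, is to pass to an edge-maximal $H$-minor-free graph and show that small separators are already cliques --- a step you never take). Moreover, the ``standard'' clique-sum fact you implicitly invoke is the converse direction (clique-sums of $H$-minor-free graphs remain $H$-minor-free), and even that needs more than $3$-connectivity of $H$ for $3$-sums: $K_{3,3}$ is $3$-connected, yet $3$-clique-sums of $K_{3,3}$-minor-free graphs can create $K_{3,3}$-minors, which is precisely why Wagner's characterization of $K_{3,3}$-minor-free graphs uses only $2$-sums. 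Your preliminary ``assume $H$ is $3$-connected'' is likewise unjustified: single-crossing graphs need not be $3$-connected, and excluding $H$ is not equivalent to excluding its $3$-connected pieces, so the theorem cannot simply be transferred. (2) The dichotomy for the $4$-connected case --- non-planar plus huge treewidth forces an $H$-minor --- is the actual content of \cite{RS93}, and your grid-minor-plus-Kuratowski-crossing-gadget sketch is a plausibility argument, not a proof: controlling where the crossing structure sits relative to a large (flat) wall and rerouting so that the \emph{labelled} branch structure of $H$, with its four crossing terminals placed correctly against the planar part of $H$ drawn in the wall, is realized is exactly the tangle/flat-wall work you defer. As you yourself flag, that is the crux; with it missing and with step (1) broken, the proposal does not establish the theorem.
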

Thus, every $H$-minor-free graph, where $H$ is a single-crossing graph, 
can be decomposed by $3$-clique sums into graphs that are either planar or
have treewidth at most $\twh$. 
Polynomial time algorithms are known to compute this decomposition
~\cite{DHNRT04,KW11,GKR13} (and also $\NC$ algorithms~\cite{EV}). 
The decomposition can be thought of as a two colored tree (see 
\cite{DHNRT04,CE,EV} for further details on the decomposition), where the blue colored 
nodes represent \emph{pieces} (subgraphs that are either planar or have bounded treewidth), 
and the red nodes represent cliques at which two or more pieces
are attached. We call these nodes of the tree decomposition as $piece$ $nodes$ and 
$clique$ $nodes$, respectively.
The edges of the tree describe the incidence relation between pieces and cliques 
(see~\cref{fig:clique_decomp}).
We will denote this decomposition tree by $\T$.
We will sometimes abuse notation slightly and refer to a piece of $\T$ (and also phrases 
like \emph{leaf} piece, \emph{child} piece), 
when it is clear from
the context that we mean the piece represented by the corresponding node of $\T$.
Note that the bounded treewidth and planarity condition 
on the pieces we get in the decomposition, is along with their virtual edges. 
As explained in~\cite{CE,EV}, we can assume that in any planar piece of the decomposition, 
the vertices of a separating pair or triplet lie on a common face
(Else we could decompose the graph further).

Suppose $G$ decomposes via a $3$-clique sum at clique $c$ into $G_1$ and $G_2$. Then we 
write $G$ as $G_1{\oplus}_{c} G_2$.
More generally, if $G_1, G_2, \ldots, G_{\ell}$ all share a common clique $c$, then we 
use $G_1\oplus_c G_2 \oplus_c \ldots \oplus_c G_{\ell}$ 
to mean $G_1, G_2, \ldots, G_{\ell}$ are glued together 
at the shared clique. 
If it is clear from the context which clique we are referring to, we will sometimes
drop the subscript and simply use $G_1\oplus G_2 \oplus \ldots \oplus G_{\ell}$ instead.
Suppose $G'_2$ is a graph that contains the vertices of the clique $c$ shared by 
$G_1$ and $G_2$. 
We denote by $G[G_2 \rightarrow G'_2]$, the graph $G_1{\oplus}_c G'_2$, 
i.e., replacing the subgraph $G_2$ of $G$, by $G'_2$, keeping the clique vertices intact. 
We will also use the notion of \emph{snapshot} of a path in a subgraph. If $G$ can be decomposed 
into 
$G_1$ and $G_2$ as above, and $P$ is an $s$-$t$ path in $G$, its snapshot in $G_1$ 
is the set of maximal subpaths of $P$, restricted to vertices of $G_1$.
Within a piece, we will sometimes refer to the vertices of 
separating cliques, and $s$ and $t$, as $terminals$. 

In figures, we will generally use the convention that a single arrow denotes
a path segment of odd parity and double arrow denotes a path segment of even
parity, unless there is an explicit expression for the parity mentioned beside
the segment.

\begin{figure}
\begin{minipage}{0.4\textwidth}
  \includegraphics[scale=0.7]{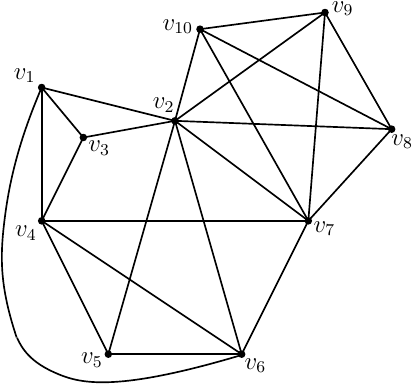}
  \caption{An example of a graph $G$. We ignore directions here.}
\end{minipage}
\begin{minipage}{0.6\textwidth}
  \includegraphics[scale=0.5]{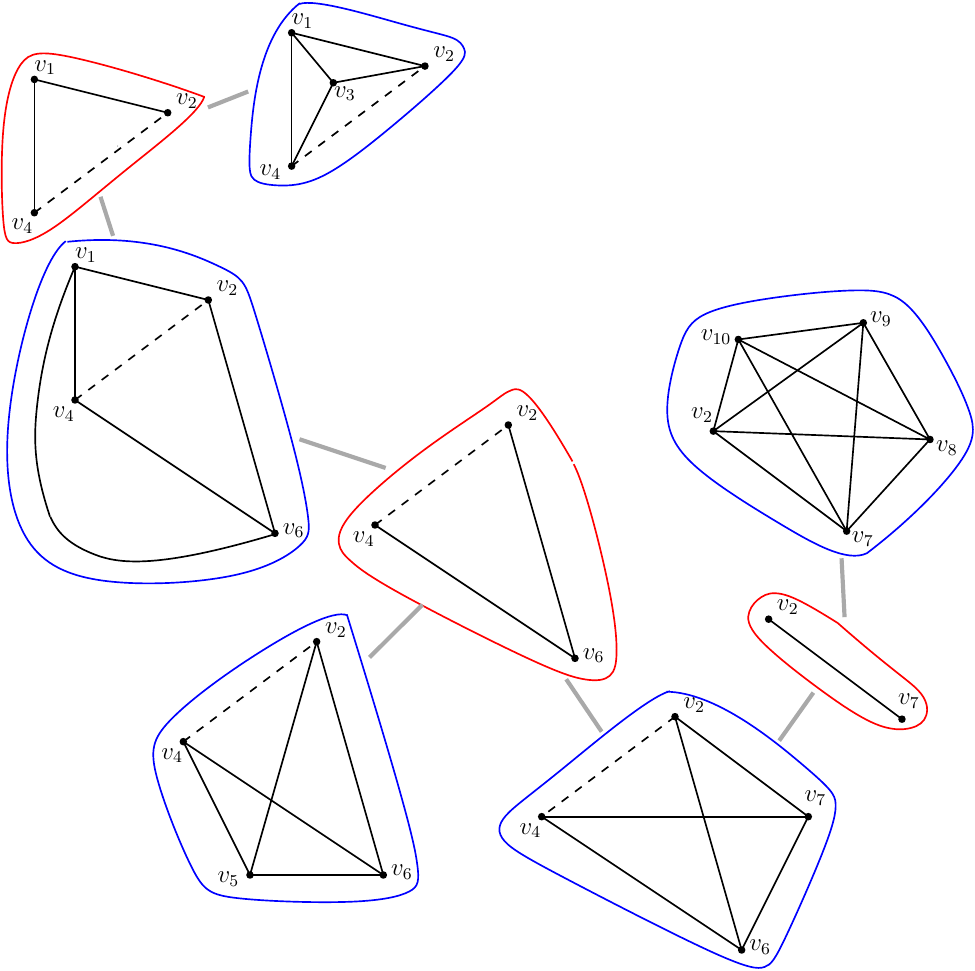}
  \caption{A clique sum decomposition of $G$. Red nodes
   are the clique nodes and blue node the piece nodes.
   Dashed edges denote virtual edges.}\label{fig:clique_decomp}
\end{minipage}
\end{figure}

\section{Overview and Technical Ingredients}\label{sec:overview}
We first compute the $3$-clique sum decomposition tree of $G$, $\T$. 
We can assume that $s,t,$ each occur in only one piece of $\T$,
$S$ and $T$, respectively.\footnote{If they are part
of a separating vertex/pair/triplet then they may occur in multiple pieces of $\T$. Say $s$
is a part of many pieces in $\T$. 
To handle that case, we can introduce a dummy $s'$ and add an edge
from $s'$ to $s$ and reduce the problem to finding an odd length path from $s'$ to $t$.
The vertex $s'$ now will occur in a unique piece in $\T$. Vertex $t$ can be handled
similarly.} We call the pieces $S$ and $T$, along with the 
pieces corresponding to nodes that lie in the unique path in $\T$ 
joining $S$ and $T$, as the $\main$ pieces of $\T$, and the remaining pieces are 
called the $\branch$ pieces of $\T$.  We will assume throughout that $\T$ is rooted at $S$. 

The high level strategy of our algorithm follows that of \cite{CE}. The algorithm has two phases. 
In the first phase, we simplify the branch pieces of the decomposition tree. 
Any $s$-$t$ even path $P$ must start and end inside the main pieces $S$ and $T$, respectively. 
However, it may take a detour into the branch pieces. 
Suppose $L$ is a leaf branch piece of $\T$, attached to its parent piece, 
say $G_i$, via a $3$-clique $c$.
Using Nedev's algorithm or Courcelle's theorem, we can find paths of various parities 
between vertices of $c$ in $L$, which constitutes the \emph{parity configuration} 
of $L$ with respect to $c$ (formally defined in next subsection).
We will replace $L$ by a parity mimicking network of $L$ with respect to
vertices of $c$, $L'$. 
$L'$ will mimic the parity configuration of $L$ and hence preserve the 
parities of all $s$-$t$ paths of original graph.
The parity mimicking networks we construct are small 
and planar, with 
the terminals (vertices of $c$) all lying on a common face, as decribed 
in lemma~\ref{lem:par_mim_net}. 
Therefore, if $G_i$ is of bounded treewidth, then $G_i\oplus L'$ will be 
of bounded treewidth. And if $G_i$ is planar, then we can plug $L'$ 
in the face of $G_i$ that is common to vertices of $c$, 
and $G_i\oplus L'$ will be planar. 
This allows us compute the parity configurations of the merged piece, and repeat 
this step until a single branch, i.e. a path, 
remains in the decomposition tree, consisting only of the main pieces 
(connected by cliques), 
including $\rnode$ and $\tnode$.

In the second phase, we start simplifying the main pieces, 
starting with the leaf piece $\tnode$. 
Instead of a single mimicking network for $\tnode$, 
we will store a set of small networks, each of them mimicking a particular 
snapshot of a solution. We call them \emph{projection networks}. Since a snapshot 
of an $s$-$t$ even path in $T$ can possibly be a set of disjoint paths between the (upto) four 
terminals in $T$, we require the 
$\DPTP$ routine of lemma~\ref{lem:3dpp} to compute these projection networks. 
We combine the parent piece with each possible projection network. 
The merged piece will again be either planar or of bounded treewidth, allowing us to 
continue this operation 
towards the root node until a single
piece containing both $s$ and $t$ remains. We query
for an $s$-$t$ even path in this piece and output yes iff there exists one. 
At each step, the number of projection networks used to replace the leaf piece, 
and their combinations with its parent piece will remain bounded 
by a constant number. 

%
Once we have the decision version of $\EP$, we show a poly-time self-reduction 
using the decision oracle of $\EP$ to construct a solution, 
in~\cref{app:search_to_dec}. 
\paragraph*{Necessity of a two phased approach}
We mention why we have two phases and different technical ingredients for each.
\begin{itemize}
 \item Instead of a single parity mimicking network, we need a set of projection
  networks for the leaf piece in the second phase because it
  can have upto four terminals (three vertices of the separating
  clique and the vertex $t$), and we do not yet know how to find
  (or even the existence of) parity mimicking networks with the 
  constraints we desire, for graphs with four terminals. 
 \item We cannot however use a set of networks for each piece in phase I because of 
  the unbounded degree of $\T$.
  Suppose a branch piece $G_i$ is connected to its parent piece by clique $c$, and suppose $G_i$
  has child pieces $L_1,L_2,\ldots, L_{\ell}$, attached to $G_i$ via disjoint cliques
  $c_1,c_2,\ldots, c_{\ell}$, respectively.
  An even $s$-$t$ path can enter $G_i$ via a vertex of $c$, then visit
  any of $L_1,L_2,\ldots, L_{\ell}$ in any order and go back to the 
  parent of $G_i$ via another vertex of $c$. 
  If we store information regarding parity configurations of $L_1,L_2,\ldots, L_{\ell}$ as
  sets of projection networks, we could have to try exponentially many combinations to compute
  information of parity configurations between vertices of $c$ in the subtree rooted at $G_i$
  (note that $\ell$ could be $O(n)$). 
  Therefore, we compress the information related to parity configurations of
  $L_i$ into a single parity mimicking network $L'_i$, while preserving solutions,
  so that the combined graph $(((G_i\csum{c_1} L'_1)\csum{c_2} L'_2\ldots)
 \csum{c_{\ell}} L'_{\ell})$ is either planar or of bounded treewidth.
 \end{itemize}
We will now describe these ingredients formally in the remaining part of this section.
\subsection{Parity Mimicking Networks}
We first define the parity configuration of a graph, which consists of subsets of 
$\{0,1\}$ for each pair, triplet of terminals, depending on whether there exists
`direct' or `via' paths of parity even, odd, or both 
(we use $0$ for even parity and $1$ for odd). 
We formalise this below.
\begin{definition}\label{def:parity_config}
  Let $\lnode$ be a directed graph and $T(L)=\{v_1,v_2,v_3\}$ be the set of
 terminal vertices of $\lnode$.
 Then, $\forall i,j,k \in \{1,2,3\}$, such that $i,j,k$ are distinct, 
 we define the sets $\dir{\lnode}{v_i}{v_j}$, and 
 $\via{\lnode}{v_i}{v_k}{v_j}$ as:
 \begin{itemize}
   \item $\dir{\lnode}{v_i}{v_j} =$\{$p$ $\mid$ there exists a path of 
    parity $p$ from $v_i$ to $v_j$ in $L-v_k$ \}  
   \item $\via{\lnode}{v_i}{v_k}{v_j} =$\{$p$ $\mid$ there exists a path of 
    parity $p$ from $v_i$ to $v_j$ via $v_k$ \emph{and}\\ 
    \hspace{5cm} there does not exist a path of parity p from $v_i$ to 
    $v_j$  in  $L - v_k\}$  
 \end{itemize}
We say that the $\dirs{\lnode},\vias{\lnode}$ sets constitute the 
\emph{parity configuration} of the graph $\lnode$ with respect to $T(L)$. 
We call the paths corresponding to elements in $\dirs{\lnode},\vias{\lnode}$ 
sets as \emph{Direct paths} and \emph{Via paths}, respectively. 
\end{definition}
The parity configuration of a graph can be visualised as a table. 
We have defined it for three terminals, it can be defined in a 
similar way for two terminals. 
It is natural to ask the question that given a parity 
configuration $\parConf$ independently with respect to some terminal vertices, 
does there exist a graph with those terminal vertices, 
realising that parity configuration. If not, we say that $\parConf$ is unrealisable.
It is easy to see that the number of parity configurations for a set of
three terminals is bounded by $4^{12}$, many of which are unrealizable.
We now define parity mimicking networks.
\begin{definition}\label{def:par_mim_net}
  A graph $\lnodex$ is a parity mimicking network of a another graph $\lnode$ 
  (and vica versa),
  if they share a common set of terminals, and have the same parity 
  configuration, $\parConf$, w.r.t. the terminals. 
  We also call them parity mimicking networks of parity configuration $\parConf$.
\end{definition}
\begin{figure}
  \includegraphics[scale=0.7]{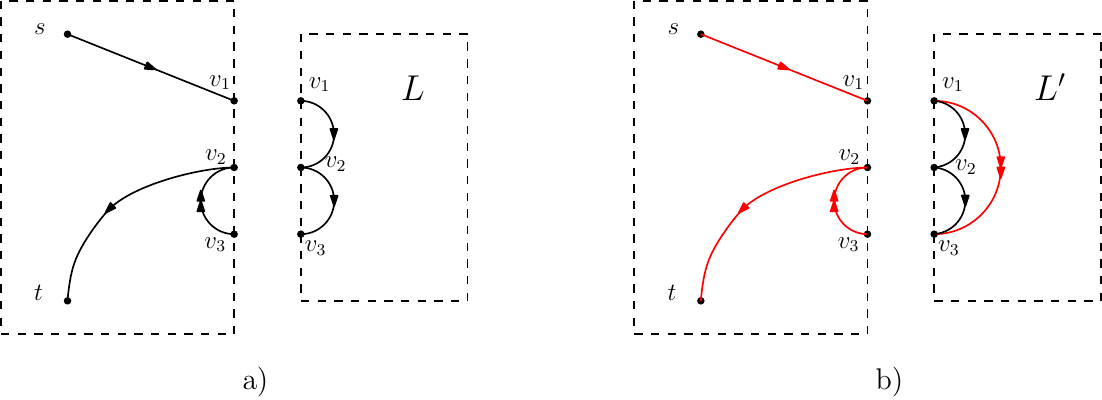}
\caption{Figure a) shows the input graph and b) shows the graph with $L$ replaced by an 
   erroneous mimicking network $L'$. 
   Suppose the original graph in a) has no $s$-$t$ even path 
   but does have an $s$-$t$ even walk as shown in the figure, using vertex $v_2$ twice. 
   If we query for a path from $v_1$ to $v_3$ in $L$, and add a direct $v_1$ 
   to $v_3$ path of that parity in $L'$, 
   we end up creating a false solution since $v_2$ is freed up to be used 
   outside $L'$. Hence there must
   be equality between corresponding direct sets.}\label{fig:via1} 
\end{figure}
The reason we differentiate between direct paths and via paths while defining parity
configurations is to ensure that no false solutions are introduced on replacing a 
leaf piece of $\T$ by its mimicking network (see~\cref{fig:via1}).
Note that in our definition of \emph{Via} sets, 
we exclude parity entries of via paths between two terminals if that parity is already 
present in \emph{Dir} set between the same terminals. 
We do so because this makes the parity configurations easier to 
enumerate in our construction of parity mimicking networks.
In~\cref{fig:via2}, we describe why doing this will still preserve 
solutions.

We also need to consider the case where multiple leaf pieces, in $\T$ are attached to a 
common parent piece via a shared clique (as seen in \cref{fig:clique_decomp}). 
In this case, we will replace the entire subgraph corresponding to the clique sum
of the sibling leaf pieces by one parity mimicking network.
To compute the parity configuration of the combined subgraph of leaf pieces, we make
the following observation:
\begin{restatable}{myObs}{obsMergeSiblings}\label{obs:obs1}
 Let $L_1, L_2, \ldots, L_{\ell}$ be leaf branch pieces that are 
 pairwise disjoint except for a common set 
 of terminal vertices, say $\{v_1,v_2,v_3\}$. 
 Let $L= L_1\oplus L_2 \oplus \ldots \oplus L_{\ell}$. Then
 the parity configuration of $L$ with respect to $\{v_1,v_2,v_3\}$ can be computed by:
 \begin{flalign}
  &\dir{\lnode}{v_i}{v_j} = \bigcup\limits_{a=1}^{\ell} \dir{L_a}{v_i}{v_j}\\
  &\via{\lnode}{v_i}{v_k}{v_j} =  \nonumber\\
  &\left (\bigcup\limits_{a=1}^{\ell} \via{L_a}{v_i}{v_k}{v_j}
   \cup \bigcup\limits_{a,b=1}^{\ell} 
   (\dir{L_a}{v_i}{v_k}\boxplus\dir{L_b}{v_k}{v_j}) \right) \backslash \dir{L}{v_i}{v_j} 
 \end{flalign}
 where $A \boxplus B$ denotes the set formed by addition modulo $2$ between 
 all pairs of elements 
 in sets $A,B$, and $i,j,k\in \{1,2,3\}$ are distinct.
\end{restatable}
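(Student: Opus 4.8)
The plan is to prove the two equalities by double inclusion, treating the $\dirs{}$ case first as a warm-up and then reducing the $\vias{}$ case to it plus a single structural lemma about where a path in $L$ can leave and re-enter a sibling piece. Throughout, the key structural fact I would use is that since $L_1,\dots,L_\ell$ pairwise intersect only in $\{v_1,v_2,v_3\}$, any path $P$ in $L$ decomposes uniquely into maximal subpaths each of which lies entirely within a single $L_a$, and the ``breakpoints'' between consecutive such subpaths can only occur at the shared terminals $v_1,v_2,v_3$. This is exactly the snapshot notion from the preliminaries, specialized to a $0$-/$1$-/$2$-/$3$-clique sum along $\{v_1,v_2,v_3\}$.

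First, for $\dir{\lnode}{v_i}{v_j}$: a Direct path witnessing $p \in \dir{\lnode}{v_i}{v_j}$ is, by definition, a $v_i$-$v_j$ path in $L - v_k$. But $L - v_k = (L_1 - v_k) \oplus (L_2 - v_k) \oplus \cdots \oplus (L_\ell - v_k)$, where these pieces now share only the \emph{two} vertices $v_i, v_j$. A path in a graph that is a clique-sum of pieces along a two-vertex set, starting at $v_i$ and ending at $v_j$, can visit at most one piece: once it leaves $L_a - v_k$ it can only do so through $v_i$ or $v_j$, and a simple path visits each of those at most once, at its endpoints. Hence the path lies entirely within some single $L_a - v_k$, giving $p \in \dir{L_a}{v_i}{v_j}$; this proves $\subseteq$. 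The reverse inclusion $\supseteq$ is immediate since each $L_a - v_k$ is a subgraph of $L - v_k$. This establishes equation (1).

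For $\vias{}$, I would argue as follows. Take $p \in \via{\lnode}{v_i}{v_k}{v_j}$, witnessed by a path $P$ from $v_i$ to $v_j$ through $v_k$ with $p \notin \dir{\lnode}{v_i}{v_j}$. Split $P$ at $v_k$ into $P_1 = P[v_i,v_k]$ and $P_2 = P[v_k,v_j]$. Each of $P_1, P_2$ avoids the third terminal of $\{v_1,v_2,v_3\}$ once $v_k$ is pinned — more carefully, $P_1$ is a $v_i$-$v_k$ path whose internal vertices avoid $v_j$ and $v_k$, hence (by the two-terminal argument above, applied to the clique-sum along $\{v_i,v_k\}$ after deleting $v_j$) lies in a single piece $L_a$, so $\mathrm{par}(P_1) \in \dir{L_a}{v_i}{v_k}$; symmetrically $\mathrm{par}(P_2) \in \dir{L_b}{v_k}{v_j}$ for some $b$. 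Then $p = \mathrm{par}(P_1) \oplus \mathrm{par}(P_2) \in \dir{L_a}{v_i}{v_k} \boxplus \dir{L_b}{v_k}{v_j}$, and since $p \notin \dir{\lnode}{v_i}{v_j}$ the element survives the set-minus. The one subtlety — and the step I expect to be the main obstacle — is when $P$ is \emph{internal} to a single piece $L_a$ in the sense that it never leaves $L_a$ at all except possibly touching shared terminals: then the contribution is captured by the $\via{L_a}{v_i}{v_k}{v_j}$ term, \emph{but} one must be careful that ``$p$ is a via-parity for $L_a$'' requires $p \notin \dir{L_a}{v_i}{v_j}$, which follows from $p \notin \dir{\lnode}{v_i}{v_j} \supseteq \dir{L_a}{v_i}{v_j}$ by equation (1). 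So the Via term of the union is really $\via{L_a}{v_i}{v_k}{v_j} \cup (\text{something already in the }\boxplus\text{ terms})$, and the global set-minus by $\dir{\lnode}{v_i}{v_j}$ does the cleanup uniformly; I would make this bookkeeping explicit. For the reverse inclusion $\supseteq$: given $p$ in the right-hand side, if it comes from $\via{L_a}{v_i}{v_k}{v_j}$ a witnessing path in $L_a$ is a witnessing path in $L$ (and it survives the outer set-minus by hypothesis); if it comes from $\dir{L_a}{v_i}{v_k} \boxplus \dir{L_b}{v_k}{v_j}$, concatenate a Direct $v_i$-$v_k$ path $Q_1$ in $L_a - v_j$ with a Direct $v_k$-$v_j$ path $Q_2$ in $L_b - v_i$; these are vertex-disjoint except at $v_k$ because $L_a - v_j$ and $L_b - v_i$ share only $v_k$ (the third terminal $v_i$ resp. $v_j$ having been deleted), so $Q_1.Q_2$ is a simple path through $v_k$ of parity $p$, and again the outer set-minus ensures $p \notin \dir{\lnode}{v_i}{v_j}$. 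Combining both directions yields equation (2), completing the proof.

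The only genuinely delicate point beyond routine verification is the disjointness argument in the backward direction when $a = b$: one must check that a Direct $v_i$-$v_k$ path and a Direct $v_k$-$v_j$ path, both living in the \emph{same} piece $L_a$, need not be internally disjoint, so concatenating them need not give a simple path — however, in that case the closed walk $Q_1.Q_2$ contains a simple $v_i$-$v_j$ path of parity $p$ or $p\oplus 1$ shortcutting through a shared internal vertex, and a short parity-parity case check (using that $Q_1,Q_2$ have fixed parities summing to $p$) shows the resulting parity is still accounted for, \emph{or} one observes that such $p$ would then lie in $\dir{\lnode}{v_i}{v_j}$ and be removed by the set-minus anyway. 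I would state this as a brief sub-claim rather than belabor it.
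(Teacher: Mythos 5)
Your route is the same as the paper's (the paper only records the forward-direction intuition: direct paths stay inside one piece, and via paths are either internal to one piece or a concatenation of direct paths from two pieces), and your handling of the $\dirs{}$ identity, of the forward inclusion for the $\vias{}$ identity, and of the backward inclusion for $a\neq b$ is correct. The genuine gap is exactly the point you flagged and then deferred: the backward direction for the $\boxplus$ terms with $a=b$. Neither of your two proposed patches works. Consider a single piece on vertices $v_1,v_2,v_3,w,u$ with arcs $v_1\to w$, $w\to v_2$, $v_2\to u$, $u\to w$, $w\to v_3$ (add a second, edgeless piece if you want $\ell\ge 2$). Then $\dir{L_1}{v_1}{v_2}=\{0\}$, $\dir{L_1}{v_2}{v_3}=\{1\}$, $\dir{L_1}{v_1}{v_3}=\{0\}$, and there is no simple $v_1$--$v_2$--$v_3$ via path at all, so $\via{L}{v_1}{v_2}{v_3}=\emptyset$; yet the $a=b$ term contributes the parity $0+1=1$, which survives the set-minus because $1\notin\dir{L}{v_1}{v_3}$ (so your fallback claim that such a $p$ ``would then lie in $\dir{L}{v_1}{v_3}$'' is false), and the only shortcut through the shared internal vertex $w$ is $v_1\to w\to v_3$ of parity $0=p\oplus 1$, so no parity case check of the kind you sketch can produce a via path of parity $p$. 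In other words, with the union read literally as ranging over all pairs $(a,b)$ including $a=b$, the sub-claim you defer is not just delicate but unprovable, because the identity itself fails in that reading.

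The intended reading --- visible in the paper's own proof sketch, which takes the two direct paths from one piece and from \emph{another} piece --- is that the $\boxplus$ union ranges over distinct indices $a\neq b$. Under that reading your problematic case simply does not arise: for $a\neq b$ the two direct paths live in $L_a-v_j$ and $L_b-v_i$, which intersect only in $v_k$, so the concatenation is automatically simple (this is precisely your own disjointness argument), while via parities realized inside a single piece are already captured by the $\via{L_a}{v_i}{v_k}{v_j}$ term together with your observation that $p\notin\dir{L}{v_i}{v_j}\supseteq\dir{L_a}{v_i}{v_j}$. So the correct repair is to restrict the union to $a\neq b$ (i.e., prove the statement in that form), not to argue that same-piece concatenations can always be rerouted or cleaned up by the outer set-minus.
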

The intuition behind the observation is simple.
Any direct path in $L$ from $v_i$ to $v_j$
must occur as a direct path in one of $L_1,L_2\ldots L_{\ell}$ since they are disjoint except
for terminal vertices. Any via path in $L$ from $v_i$ to $v_j$ via $v_k$ can occur in two ways,
either as a $v_i$-$v_k$-$v_j$ via path in one of
$L_1,L_2\ldots L_{\ell}$, or as a concatenation of two
direct paths, one from $v_i$ to $v_k$ in some piece $L_i$,
and another from $v_k$ to $v_j$ in another piece $L_j$.
Note that although the observation is for the case when all $L_1, L_2, \ldots, L_{\ell}$ share
a common $3$-clique $\{v_1,v_2,v_3\}$, it is easy to see it can be tweaked easily to handle the
cases when some of the $L_i's$ are attached via a $2$-clique that is a subset of the $3$-clique.

The next lemma states that replacing leaf piece nodes 
in $\T$ by parity mimicking networks obeying some planarity conditions, 
will preserve the existence of $s$-$t$ paths of any particular parity, 
and also preserve conditions
on treewidth and planarity for the combined piece.

\begin{restatable}{lemmma}{parMimCorrectness}\label{lem:par_mim_net_correct}
  Let $G$ be a graph with clique sum decomposition tree $\T$, and let $L_1,L_2\ldots,L_{\ell}$ 
be set of leaf branch pieces of $\T$, attached to their 
parent piece $G_1$ via a common clique $c$. Let $\lnodex$ be a parity mimicking 
network of $L_1\oplus L_2 \oplus \ldots L_{\ell}$ with respect to $c$, such that $\lnodex$ is
planar, and vertices of $c$ lie on a common face in $\lnodex$. 
Then: 
 \begin{enumerate}
  \item There is a path of parity $p$ from $s$ to $t$ in $G$ iff there is a 
    path of parity $p$ from $s$ to $t$ in $G[L_1 \oplus L_2 \oplus \ldots 
    L_{\ell} \rightarrow \lnodex]$.
  \item If $G_1$ is planar, then $G_1\oplus \lnodex$ is also planar.
  \item If $G_1$ has treewidth $\twh$, and $\lnodex$ has treewidth $\tau_{\lnodex}$,
   then $G_1\oplus \lnodex$ has treewidth max($\twh,\tau_{\lnodex}$)
 \end{enumerate}
\end{restatable}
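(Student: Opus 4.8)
\textbf{Proof plan for Lemma~\ref{lem:par_mim_net_correct}.}

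The plan is to prove the three items essentially independently, with item~1 being the substantive one and items~2,~3 following from standard facts about clique-sums. For item~1, set $L := L_1 \oplus L_2 \oplus \ldots \oplus L_{\ell}$ and $G' := G[L \rightarrow \lnodex]$. Both $G$ and $G'$ are obtained from the same ambient graph $G \setminus (L \setminus c)$ by clique-summing in $L$ (respectively $\lnodex$) along $c$; since $L_1,\ldots,L_\ell$ are \emph{leaf branch} pieces, $s$ and $t$ both lie outside $L\setminus c$, so any $s$-$t$ path $P$ in $G$ (or in $G'$) decomposes into its snapshot outside $L$ and a collection of maximal subpaths inside $L$ (resp.\ inside $\lnodex$) whose endpoints are all in $c$. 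I would first argue the ``easy'' structural reduction: because $c$ is a clique of size at most $3$ and $L$ (resp.\ $\lnodex$) is glued to the rest of $G$ \emph{only} through $c$, a simple $s$-$t$ path can enter and leave $L$ at most $|c|-1$ times total across all these subpaths if $|c|\le 3$, and in fact the relevant information carried by $P$ about its behavior inside $L$ is exactly: for each pair of distinct terminals $v_i,v_j$ used as the two endpoints of a maximal subpath, the parity of that subpath, together with which third terminal (if any) it passes through. This is precisely the data recorded by the $\dirs{}$ and $\vias{}$ sets in Definition~\ref{def:parity_config}. So the forward direction reduces to: given an $s$-$t$ path $P$ of parity $p$ in $G$, replace each maximal subpath of $P$ inside $L$ by a path of the same parity between the same terminal endpoints inside $\lnodex$, using that $L$ and $\lnodex$ have the same parity configuration; the reverse direction is symmetric.

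The hard part is making the replacement \emph{simple} — i.e.\ ensuring the swapped-in paths inside $\lnodex$ are internally disjoint from each other and from the part of $P$ outside $\lnodex$, and that no terminal of $c$ gets reused illegally. The part outside is automatically fine since $\lnodex$ touches it only at $c$. For disjointness \emph{among} the swapped subpaths, I would use a case analysis on how many maximal subpaths $P$ has inside $L$. With $|c|\le 3$ a simple path $P$ can have at most two maximal subpaths inside $L$ (it can enter/leave through at most $3$ distinct vertices, and the endpoints of distinct maximal subpaths are distinct), and if there are two, their four endpoints in $c$ together with simplicity of $P$ force a very restricted pattern: either they share exactly one terminal $v_k$ (so $P$ visits $v_k$ once, between the two subpaths — but then $P$ passes \emph{through} $v_k$, i.e.\ these two subpaths should be viewed as a single ``via $v_k$'' subpath) or they are on disjoint vertex sets. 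This is exactly why the definition distinguishes Direct paths from Via paths, and why Observation~\ref{obs:obs1} bundles two direct pieces through a common $v_k$ into a via entry. So the real content is: (a) any simple $s$-$t$ path in $G$ induces either a single Direct-type subpath ($v_i \leadsto v_j$ in $L - v_k$), or a single Via-type subpath ($v_i \leadsto v_j$ through $v_k$), or at most one subpath on each of two disjoint terminal subsets — list the cases; (b) in each case the matching parity data is an element of the corresponding $\dirs{L}$ / $\vias{L}$ set, hence also of $\dirs{\lnodex}$ / $\vias{\lnodex}$, so a witnessing simple path exists in $\lnodex$ (and for the Via sets, which are defined to exclude parities already achievable directly, I would note that if the needed parity is not in the Via set it \emph{is} in the Direct set, so a valid substitute path still exists — this is the point of Figure~\ref{fig:via2}); (c) the substitute paths live in $\lnodex$, which meets the rest of $G'$ only at $c$, and the at-most-two of them occupy disjoint terminal sets (or one is ``through $v_k$'' and the substitute, being a $v_i$-$v_j$ path, automatically uses the remaining structure), so the concatenation is simple. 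The false-solution worry of Figure~\ref{fig:via1} is avoided precisely because $\dirs{\lnode}=\dirs{\lnodex}$ as sets (not just $\subseteq$), so a path that is Direct in $\lnodex$ was already Direct in $L$ and vice versa; I would state this equality explicitly as the crux of correctness.

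For item~2: $G_1$ is planar and, by hypothesis, $\lnodex$ is planar with all of $c$ on one face; since $c$ is a clique of size $\le 3$ its vertices bound a common face of $G_1$ as well (this was noted in the Preliminaries as a property of the pieces of the decomposition). Glue a planar embedding of $\lnodex$ (with $c$ on the outer face) into that face of $G_1$, identifying the $c$-vertices; deleting any extra copies of merged clique edges only helps. This yields a planar embedding of $G_1 \oplus \lnodex$. For item~3: this is the standard fact that treewidth is well-behaved under clique-sums — take a tree decomposition of $G_1$ of width $\twh$ and one of $\lnodex$ of width $\tau_{\lnodex}$; since $c$ is a clique in each, some bag of each contains all of $c$; add an edge between those two bags to join the two decomposition trees. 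The result is a tree decomposition of $G_1 \oplus \lnodex$ of width $\max(\twh,\tau_{\lnodex})$. I expect items~2 and~3 to be a couple of lines each; the bulk of the write-up, and the only real obstacle, is the simplicity bookkeeping in item~1, i.e.\ verifying that every way a simple $s$-$t$ path can interact with the shared $3$-clique is captured by a Direct entry, a Via entry, or a disjoint pair thereof, so that the parity-configuration equality suffices.
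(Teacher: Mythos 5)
Your proposal is correct and follows essentially the same route as the paper's (terse) proof: item~1 by replacing the snapshot of an $s$-$t$ path inside $L_1\oplus\cdots\oplus L_\ell$ with a path of matching parity in $\lnodex$ using the equality of the \emph{Dir}/\emph{Via} sets (with Observation~\ref{obs:obs1} handling via paths split across sibling pieces), item~2 by embedding $\lnodex$ into the face of $G_1$ containing the clique vertices, and item~3 by gluing tree decompositions along a bag containing $c$. Your only imprecision is the snapshot count: since $s,t$ lie outside the leaf branch pieces and $|c|\le 3$, a simple $s$-$t$ path has at most one \emph{nontrivial} maximal subpath inside $L$ (plus possibly a trivial touch of the third terminal), so your ``two disjoint subpaths'' case is vacuous here — harmless, as it only makes your case analysis over-inclusive.
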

 \begin{proof}
 \begin{enumerate}
   \item The proof essentially follows from the definition of parity mimicking networks and
     observation~\ref{obs:obs1}, since we can replace the snapshot of any $s$-$t$ path $P$
     in $L_i$ by a path of corresponding parity in $L'_i$ and vice-versa. 
   \item This follows since in the decomposition, vertices of separting cliques in every piece lie
     on the same face, and so is the case for $L'$ by assumption. 
     Therefore we can embed $L'$ inside
     the face in $G_1$, on the boundary of which $v_1,v_2,v_3$ lie.
   \item This follows since we can merge tree decompositons of $G_1,L'$ along bags
     consisting of the common clique.
 \end{enumerate}
\end{proof}
\begin{figure}
  \includegraphics[scale=0.7]{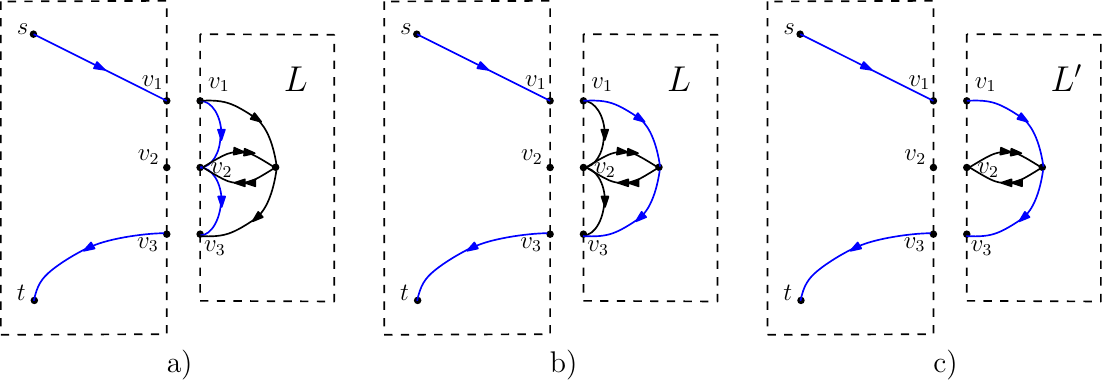}
  \caption{Figure a) denotes the original graph which has both a direct path,
  as well as a via path of even parity from $v_1$ to $v_3$. 
  Suppose the via path is part of an even
  $s$-$t$ path solution, as marked by blue. Then in $L$ itself, we could replace the
  via path by the direct path and it would still be a valid even $s$-$t$ path, as marked
  in blue in b). Hence in the mimicking network $L'$, too (shown in c)), 
  we could use the direct $v_1$ to $v_3$ path of the
  same parity. Therefore we do not need to put the parity of the $v_1$-$v_2$-$v_3$ path in 
  $\via{L}{v_1}{v_2}{v_3}$, since the same parity is already present in 
  $\dir{L}{v_1}{v_3}$, and $\dir{L}{v_1}{v_3} = \dir{L'}{v_1}{v_3}$.}\label{fig:via2}
\end{figure}
Now we will show how to compute parity mimicking networks that are small 
in size (and hence of bounded treewidth), and also planar, with terminal vertices 
lying on the same face, for a given parity configuration of a graph $\lnode$. 
\begin{figure}[h]
\begin{minipage}{\textwidth}
 \includegraphics[scale=0.56]{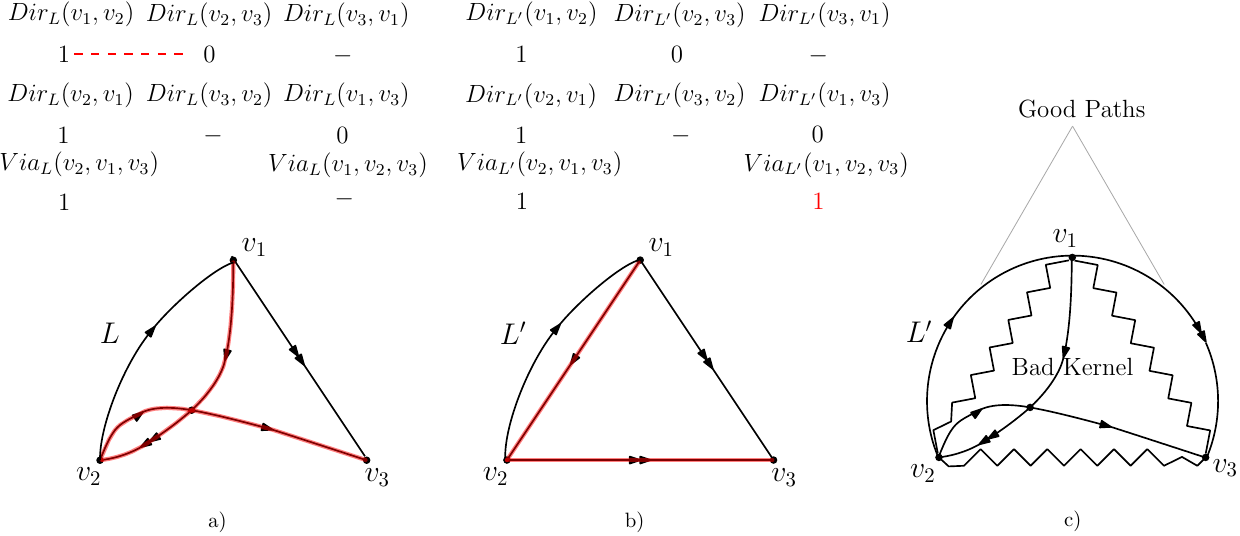}
 \caption{Fig a) denotes a graph $L$ with its parity configuration table (only relevant sets). 
 Fig b) denotes a `parity mimicking network', if for each pair of terminals, 
 we just independently put 
 paths of correct parity, disjoint from each other. It leads to an extra 
 path (highlighted in red) from 
 $v_1$ to $v_3$ via $v_2$ in $L'$, of odd parity. 
 Pairs of such entries, for which we cannot add disjoint paths are called bad entries as marked
 by the dashed red line in the parity configuration table in a). Fig c) outlines 
 the approach used to construct the correct mimicking network. The two paths corresponding to
 bad pair entries, form the bad kernel, for which we construct a mimicking network 
 by enumerating cases. The remaining paths
 can be added iteratively, disjoint from all existing paths, 
 on the outer face.}\label{fig:bad_kernel_eg}
\end{minipage}
 \begin{minipage}{\textwidth}
  \includegraphics[scale=0.55]{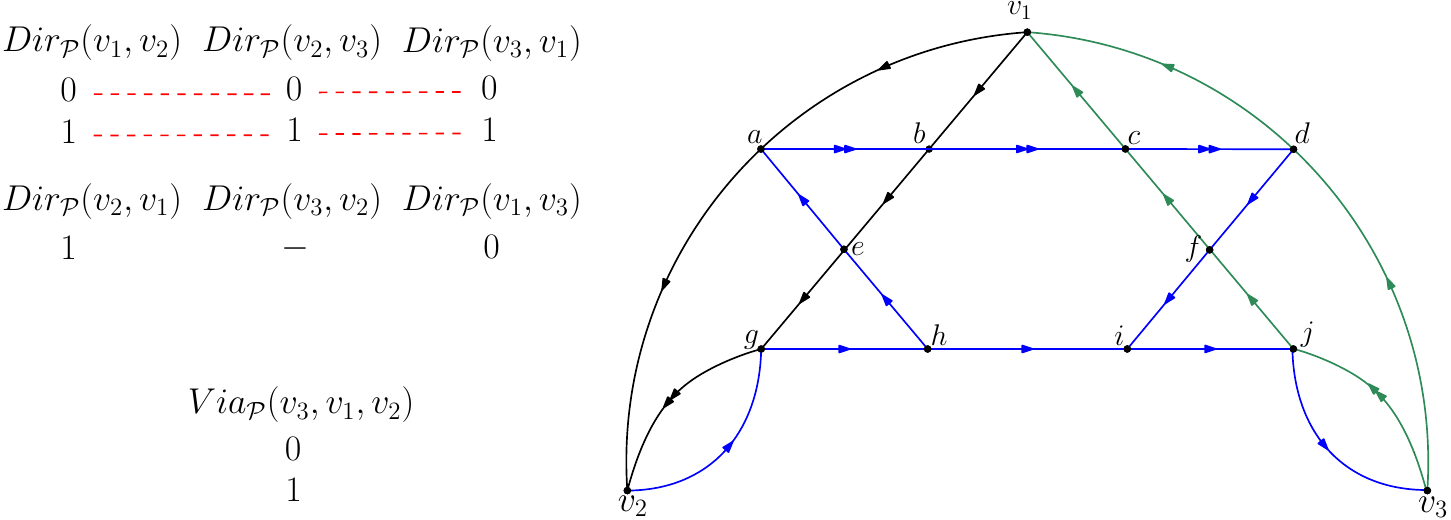}
  \caption{An example of a more non-trivial bad kernel, and a mimicking network realising 
  its closure. This is a subcase of case $(4,0)$ described in the full proof 
  in~\cref{app:par_mim_net_proof}. We give a list of paths along with their lengths, 
  for ease of reader to check that the network obeys the parity configuration.} 
 \end{minipage}
\end{figure}
\begin{restatable}{lemmma}{abc}\label{lem:par_mim_net}
  Suppose $\lnode$ is a graph with terminals $T(L) = \{v_1,v_2,v_3\}$, 
  and suppose we know the
  parity configuration of $\lnode$ with respect to $\{v_1,v_2,v_3\}$. 
  We can in polynomial-time, find a parity mimicking network $\lnodex$ of 
  $\lnode$, with respect to $\{v_1,v_2,v_3\}$ which 
 consists of at most $18$ vertices, and is also planar, with 
 $v_1,v_2,v_3$ lying on a common face.
\end{restatable}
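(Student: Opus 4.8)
\textbf{Proof proposal for Lemma~\ref{lem:par_mim_net}.}

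The plan is to reduce the problem, via Observation~\ref{obs:obs1} and some normalization, to a finite case analysis over a small set of ``hard cores'' and then handle each by an explicit construction. First I would observe that the \emph{Dir} and \emph{Via} sets contain at most a constant amount of information: three pairs, each with a \emph{Dir} set $\subseteq\{0,1\}$ and a \emph{Via} set $\subseteq\{0,1\}$, plus the symmetric data for the reverse directions, so there are only boundedly many (at most $4^{12}$, minus the unrealizable ones) parity configurations to consider, and it suffices to exhibit one network of at most $18$ vertices for each realizable configuration. The key structural simplification is the one sketched in Figure~\ref{fig:bad_kernel_eg}: if for every pair of required path-parities we could route the corresponding paths pairwise internally disjoint on the outer face, we would be done immediately — just lay down short gadget paths of the right parities between the terminals, nested on the outer boundary, which keeps everything planar with all terminals on a common face and uses only $O(1)$ vertices. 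The only obstruction is a pair of required entries whose paths are forced to \emph{share} an internal vertex (a ``bad pair''); the union of such conflicting paths forms what the paper calls the \emph{bad kernel}, and I would argue that the bad kernel always involves at most a bounded number of terminal-to-terminal route requirements (because there are only three terminals, and \emph{Via} entries are defined to exclude parities already realized by \emph{Dir} entries, which limits how many conflicts can coexist).

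The main work, then, is: (i) classify the possible bad kernels up to symmetry of the three terminals and up to reversal; (ii) for each, hand-construct a small planar network realizing the \emph{closure} of that bad kernel (the set of all \emph{Dir}/\emph{Via} parity constraints it must certify), with all three terminals on the outer face; and (iii) show that the remaining, non-conflicting parity requirements can be added one at a time as fresh internally-disjoint paths routed along the outer face of the current network, each using a constant number of new vertices, without destroying planarity, the common-face property, or the already-realized parities — crucially, without creating any \emph{new} path of a parity that should be absent. Step (iii) needs a monotonicity-style argument: adding a path disjoint from the interior of the current graph cannot introduce a shorter or differently-parity path between two old terminals except through the new path itself, whose parity we control; and since the new path touches the old graph only at its two endpoint terminals, any new $v_i$–$v_j$ route using it is a concatenation of an old $v_i$–(endpoint) route with the new path, whose parity is accounted for by the $\boxplus$ bookkeeping of Observation~\ref{obs:obs1}. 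Counting the gadget sizes across the worst case (a non-trivial bad kernel plus the leftover direct/via paths) is what pins the bound at $18$ vertices; this is the numerically delicate part but is a finite check.

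I expect the main obstacle to be the case analysis of bad kernels in step (ii): enumerating exactly which combinations of \emph{Dir}/\emph{Via} parities genuinely force a shared internal vertex, discarding the unrealizable combinations, and for each realizable one producing a concrete planar gadget (with an explicit list of paths and lengths, as in the figure for case $(4,0)$) that certifies \emph{all} the required parities \emph{and no spurious ones}. The ``no spurious parities'' half is the subtle direction: it is easy to add paths that realize the wanted parities, but one must verify that the finished network does not, for instance, admit an even $v_1$–$v_3$ via-$v_2$ path when the configuration forbids it (cf.\ Figure~\ref{fig:bad_kernel_eg}b). I would organize this by naming each case by a pair of integers (as the paper's $(4,0)$ notation suggests — presumably encoding how many \emph{Dir}/\emph{Via} entries are ``full'' $\{0,1\}$ versus singletons), checking realizability of each, and for the realizable ones drawing the gadget and tabulating every terminal-to-terminal path parity to confirm it matches the target configuration exactly. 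The planarity and common-face conditions are then immediate from the fact that every gadget is built by nesting paths on a single face. Polynomial time is trivial since, given the configuration, we simply look up the precomputed constant-size network for its (symmetry class of) bad kernel and append the leftover paths.
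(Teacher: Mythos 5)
Your plan follows essentially the same route as the paper's proof: define bad pairs of direct entries (those whose disjoint routing would create a spurious via parity), isolate the bad kernel and its closure, hand-construct a constant-size planar gadget with the terminals on the outer face for each realizable bad kernel, and then argue that the remaining (good) entries can be appended as internally disjoint paths along the outer face without introducing forbidden parities. The only caveats are minor: your guess about what the case label $(4,0)$ encodes is off (it counts bad pairs in the ``upper'' versus ``lower'' row of direct sets, not full versus singleton entries), and the substantive work you defer --- proving the bound of at most six bad pairs via structural observations and actually exhibiting the gadgets case by case --- is precisely the case analysis the paper carries out in the appendix.
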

\begin{proof}
 We give a brief idea of the proof and defer the full proof to~\cref{app:par_mim_net_proof}.
 As noted above, the number of possible parity configurations are finite (bounded by $4^{12}$
 for three terminals), but the number is too large to 
 enumerate over all of them and individually construct the mimicking networks.
 We use some observations to make the case analysis tractable.
 We refer to elements of sets $\dir{\lnode}{v_i}{v_j}$ as \emph{entries}. But we 
 abuse notation slightly and distinguish them from the boolean values $0,1$. 
 For example, we always distinguish between an entry of $\dir{\lnode}{v_1}{v_2}$, 
 and an entry of $\dir{\lnode}{v_2}{v_3}$, even if they have the same 
 \emph{value} ($0$ or $1$).
 A natural constructive approach would be to iteratively do the following 
 step for all $i,j,p$ : add a path of length $2-p$ from  
 $v_i$ to $v_j$ in $\lnodex$, disjoint from existing paths of $\lnodex$, 
 if there is an entry of parity $p$ present in \emph{$\dir{\lnodex}{v_i}{v_j}$}. 
 Its easy to check that this will result in a
 planar $\lnodex$ with terminals on a common face. However, this could lead to wrong parity 
 configurations in $\lnodex$. For example, $\lnode$ could have a direct paths of parity 
 $1$ from $v_1$ to $v_2$, and a direct path of parity $0$ from $v_2$ to $v_3$,  
 but no path of parity $1$ from $v_1$ to $v_3$, either direct or
 via $v_2$ (see~\cref{fig:bad_kernel_eg}). We will call pairs of such entries 
 as \emph{bad pairs}. The entries that are part of \emph{any}
 bad pair are called \emph{bad entries}. Though the example in~\cref{fig:bad_kernel_eg} 
 has a simple fix for the bad pair, it becomes more complicated to maintain the 
 planarity conditions as the number of bad pairs increase.
 Let $\parConf_{\lnode}$ be the parity configuration of $L$.
 The idea of the proof is to define a \emph{bad kernel} of  
 $\parConf_{\lnode}$, as the \emph{sub-configuration} consisting 
 of all the \emph{bad entries} of $\parConf_{\lnode}$.
 The \emph{closure} of the bad kernel is defined as the parity 
 configuration obtained from it by adding `minimal' number of entiries 
 to make it realizable. We observe that the closure remains a sub-configuration 
 of $\parConf_{\lnode}$.
 Suppose that we can somehow construct a planar mimicking network for the 
 \emph{closure} of the \emph{bad kernel} of $\parConf_{\lnode}$, 
 with terminals lying on a common face. 
 Then we show that paths corresponding to leftover parity entries of $\lnode$ 
 can be safely added using the constructive approach described above. 
 Hence it suffices to construct parity mimicking networks for closures of 
 all possible bad kernels. We use some observations to show that the number of possible 
 types of bad kernels
 cannot be too large, and enumerate over each type, explicitly constructing
 the parity mimicking networks of their closures.
\end{proof}
\subsection{Disjoint Paths with Parity Problem}\label{sec:3dpp}
In this section, we will define and solve the $\DPTP$ problem for some 
special cases and types of graphs.  
We define the problem for three paths between four terminals.
\begin{definition}
  Given a graph $G$ and four distinct terminals $v_1, v_2, v_3,$ and $v_4$ in $V(G)$, the 
  $\DPTP$ problem is to find a set of three pairwise disjoint 
  paths, from $v_1$ to $v_2$, $v_2$ to $v_3$, and from $v_3$ to $v_4$, such that the total parity
  is even, if such a set of paths exist, and output no otherwise.
\end{definition}	
The problem where total parity must be odd can be easily reduced to this by adding a dummy
neighbour to $v_4$. The problem is $\NP$-$\textsf{hard}$ in general graphs since the even path
problem trivially reduces to this.
We show that the above problem can be solved in polynomial time in following
two cases:
\begin{restatable}{lemmma}{threeDpp}\label{lem:3dpp}
  Let $G$ be a graph, and $v_1,v_2,v_3,v_4$ be four vertices of $G$. 
 Both decision as well as search 
 versions of $\DPTP$ for these vertices as defined above can be solved 
 in polynomial time in the following cases:
 \begin{enumerate}
  \item If $G$ has constant treewidth. 
  \item If $G$ is planar and $v_1,v_2,v_3$ lie on a common face of $G$. 
 \end{enumerate}
\end{restatable}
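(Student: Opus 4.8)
The plan is to handle the two cases by very different means. For the bounded-treewidth case, I would simply invoke Courcelle's theorem: the property ``there exist three pairwise vertex-disjoint paths $P_1$ from $v_1$ to $v_2$, $P_2$ from $v_2$ to $v_3$, $P_3$ from $v_3$ to $v_4$, with $|P_1|+|P_2|+|P_3|$ even'' is expressible in monadic second-order logic (quantifying over the three edge sets, asserting each is a simple directed path with the prescribed endpoints, asserting pairwise disjointness of the internal vertices, and encoding the parity of the total edge count by an MSO formula that guesses a $2$-coloring / uses a standard parity gadget on a linear order along the concatenated path). Since $G$ has constant treewidth, model-checking this fixed MSO sentence takes linear time by Courcelle's theorem, and the search version follows by the standard self-reduction (repeatedly test-and-fix edges, or use the constructive version of Courcelle's theorem). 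This case is short.

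**The substance is case 2**, the planar case with $v_1,v_2,v_3$ on a common face. Here I would follow the strategy advertised in the overview: generalize Nedev's machinery~\cite{Nedev99}. Nedev's algorithm solves $\EP$ in planar graphs; at its core it reduces to a restricted routing problem in the spirit of Gallucio–Loebl~\cite{GL94}, exploiting that once two terminals lie on a common face one can control the topology of candidate paths (which side of a ``reference'' curve each path runs on) and reduce parity bookkeeping to a polynomially-sized search over ``configurations''. The idea is: place the three face-sharing terminals $v_1,v_2,v_3$ in cyclic order on the boundary of a face $f$; a solution consists of three disjoint paths forming, together with arcs of $f$, a planar structure, so the three paths $P_1,P_2,P_3$ can only interleave in a bounded number of topological types. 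For each type, I would cut the plane along the (as yet unknown) paths to separate the instance into bounded-genus-free regions, and within each region run Nedev-style dynamic programming that tracks, for each partial path, its parity and its homotopy class relative to the cuts. The key new ingredient over plain $\EP$ is that we must simultaneously track parities of \emph{three} paths whose \emph{total} parity is constrained; since only the sum matters (not the individual parities), this adds only a constant-size parity coordinate to Nedev's state, so the running time stays polynomial. The fourth terminal $v_4$ need not be on $f$, but $P_3$'s endpoint $v_3\in f$ anchors it, so $v_4$ can be reached by an essentially unconstrained planar even/odd path search of the kind Nedev already handles.

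**The main obstacle** I anticipate is exactly this last point: making precise how Nedev's ``$s,t$ on a common face and no bad even cycles remain'' hypothesis transfers to our setting, where only three of four terminals share a face and we have three interacting disjoint paths rather than one. Concretely, after guessing the topological type of the solution and cutting, one of the three paths ($P_3$, ending at the off-face vertex $v_4$) lives in a subgraph where the relevant pair of terminals need \emph{not} be cofacial, so one cannot directly apply Nedev's routine there; I would need to argue that the cuts along $P_1,P_2$ (or along boundary arcs of $f$) restore the cofacial condition in each region, or else reduce that sub-instance to one where a direct planar even-path computation suffices. A second, more mechanical obstacle is bounding the number of topological types and showing each yields a clean decomposition — this is where I expect the real case analysis to live, mirroring the ``hefty case analysis'' the authors warn about. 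I would organize the write-up as: (i) reduce odd-total-parity to even via the dummy neighbor of $v_4$; (ii) dispatch the treewidth case via Courcelle; (iii) for the planar case, classify solution topologies relative to $f$; (iv) for each topology, exhibit a decomposition into regions each amenable to a Nedev-style parity-tracking dynamic program with an added constant-size total-parity counter; (v) combine and derive the search version by self-reduction.
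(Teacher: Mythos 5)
Your treatment of the bounded-treewidth case matches the paper: both express the three-disjoint-paths-with-total-parity property in $\mathsf{MSO_2}$ and invoke Courcelle's theorem, with search recovered by self-reduction. That part is fine.

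For the planar case, however, your proposal has a genuine gap, and it sits exactly where you yourself flag the ``main obstacle.'' The paper does \emph{not} proceed by guessing topological types of the three paths, cutting the plane along them, and running a Nedev-style dynamic program with a homotopy/parity state; indeed the unknown paths cannot be cut along, and a DP of this kind would have to enforce \emph{vertex-disjointness} of three interacting directed paths in a planar digraph, which is precisely the hard part (the paper explicitly notes that $\DPTP$ in planar graphs without the cofacial restriction is open even for $k=2$, so any argument that does not use the cofaciality of $v_1,v_2,v_3$ in an essential, quantitative way is suspect). The paper's actual route is: first run Schrijver's fixed-$k$ planar disjoint-paths algorithm~\cite{Schrijver94} to get some instance $(P_1,P_2,P_3)$; if its total parity is wrong, prove a structural ``switch'' lemma (lemma~\ref{lem:3dpp_olsf}) asserting that whenever two instances of different total parity exist, there is a \emph{simple odd list superface} $F$ together with four pairwise disjoint connecting segments $Q_b,Q_e,Q_2,Q_3$; then enumerate the polynomially many simple odd list superfaces (as in~\cite{Nedev99}) and, for each, test for the four disjoint connecting paths by another call to~\cite{Schrijver94}. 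The existence proof is where the cofaciality is used: one takes the pair $P_i,P'_i$ of differing parity, forms the \emph{leftmost} (or rightmost) directed path of $P_i\cup P'_i$ from a terminal on the outer face, proves it is a simple directed path disjoint from the other two paths of the instance (claim~\ref{claim:leftm_paths}), and extracts the odd list superface from it via an extension of Nedev's lemma, with a case analysis over which $i$ differs and on which side $v_3,v_4$ lie; the off-face terminal $v_4$ is handled in its own case using Nedev's original lemma inside $P_3\cup P'_3$ rather than by restoring any cofacial condition. Your sketch contains neither the enumeration-plus-disjoint-paths-oracle algorithmic skeleton nor the existence argument for the switch structure, and your remark that ``only the sum matters, so this adds only a constant-size parity coordinate'' does not address disjointness at all; as written, the planar case is a plan rather than a proof.
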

\begin{proof}
 Proofs of both parts can be found in~\cref{app:3dpp_planar},~\cref{app:3dpp_btw}. 
 We give a high level idea of the proof of the second part. The argument of Nedev 
 for $\EP$ uses two main lemmas. One lemma states that if there are two paths $P_1,P_2$, of 
 different parities from $s$ to $t$, then their union forms a (at least one) 
 structure, which they call an \emph{odd list superface}. 
 It (roughly) consists of two internally disjoint paths 
 of different parities, with a common starting vertex, say $b$ and a common 
 ending vertex, say $e$. Let $F$ denote such a superface. 
 They show that there exist two disjoint paths in 
 $P_1\cup P_2 - F$, one from $s$ to $b$, and one from $e$ to $t$. 
 This provides a `switch' in $P_1 \cup P_2$, and if we can 
 find this switch efficiently, then we can use existing $2$-disjoint path algorithms 
 to connect $s$ and $t$ via this switch. But the number of odd list superfaces in a 
 graph can be exponential. The second lemma of Nedev says that we can exploit the 
 structure of planarity and show that each of the odd list superfaces formed by 
 $P_1\cup P_2$, `contain' a `minimal' odd list superface, which they call a 
 \emph{simple} odd list superface, that obeys the 
 same conditions. The set of simple odd list superfaces is small and can be 
 enumerated in polynomial time. In our setting, we start from the case that two 
 instances of three disjoint paths between the specified terminals exist, 
 such that they have different total parity. Say the instances are $P_1,P_2,P_3$, 
 and $P'_1,P'_2,P'_3$. At least one of $P_i,P'_i$ must be of different parity. 
 We show that using the constraints of three terminals on a face, and using 
 ideas of $\emph{leftmost}$ (and rightmost) paths of $P_i\cup P'_i$, for each 
 case of $i\in \{1,2,3\}$, there does exist an analogous structure: a simple 
 odd list super face, and four disjoint path segments connecting the required 
 vertices.
 A point to note is that in Nedev's argument, \emph{any} odd list 
 superface formed by $P_1,P_2$ could be trimmed to a simple odd list superface that 
 would give a valid solution. That does not hold true here. 
 We generalise their lemma, and argue that there does exist \emph{at least one} odd 
 list superface between $P_i,P'_i$ that will work in our setting.
\end{proof}
\section{Main Algorithm}\label{sec:main_algo}
We now explain the two phases of the algorithm.
\subsection{Phase 1}
\begin{enumerate}
 \item Find the $3$-clique sum decomposition tree $\T$. 
  Mark the piece that contains the vertex $s$ as the root of $\T$.
 \item Pick any maximal set of leaf branch pieces of $\T$, say 
   $L_1,L_2,\ldots, L_{\ell}$, which are attached to a parent piece $G_i$ via a common clique.
  Compute their parity configurations using Nedev's algorithm, or using Courcelle's
    theorem, as explained in Appendix~\ref{app:3dpp_btw}. 
  Then compute the parity configuration of $L_1 \oplus L_2 \oplus \ldots \oplus L_{\ell}$
  using observation~\ref{obs:obs1}.
\item Compute the parity mimicking network, $\lnodex$, of $L_1 \oplus L_2 \oplus 
    \ldots \oplus L_{\ell}$ using lemma~\ref{lem:par_mim_net}. 
    Replace $L_1 \oplus L_2 \oplus 
  \ldots \oplus L_{\ell}$ by $\lnodex$ and merge it with $G_i$.
 \item Since $G_i \oplus \lnodex$ is either of bounded treewidth
  or is planar by lemma~\ref{lem:par_mim_net_correct}, we can 
  repeat this step until no branch pieces remain.
\end{enumerate}
\subsection{Phase II}\label{subsec:phase_2}
Let $G'$ denote the graph after phase I. After phase I, the modified tree $\Ty$ 
looks like a path of pieces, $G_1,G_{2},\ldots, G_m$, joined
at cliques $c_1,c_2\ldots c_{m-1}$.\footnote{Note that the vertices of a clique, 
say $c_i$ need no longer lie on the same face of $G_i$ after phase I, 
since we might have merged the parity mimicking network 
of the branch pieces incident at $c_i$ into the face corresponding to $c_i$.}
The vertex $s$ is in root piece $G_1$, and $t$ in leaf piece $G_m$ (we use $G_1,G_m$ instead 
of $S,T$ here for notational convenience). 
We can write $G'=G_1\csum{c_1} G_{2}\csum{c_2} \ldots 
\csum{c_{m-1}} G_m$. Since it is clear in this phase that $c_i$ is the 
clique joining $G_{i},G_{i+1}$, we will omit the subscript for notational 
convenience and just write $G_1+ G_2+\ldots +G_m$ instead.  
Let $c_{m-1}=\{v_1,v_2,v_3\}$ and let $i,j,k\in \{1,2,3\}$ be distinct.
The snapshot of any even $s$-$t$ path $P$ in $G_m$, can be one of the following four types
(see figure~\ref{fig:phase2}):
\begin{itemize}
 \item Type $1$ : A path from $v_i$ to $t$ without using $v_j,v_k$.
 \item Type $2$ : A path from $v_i$ to $t$ via $v_j$, without using $v_k$.
 \item Type $3$ : A path from $v_i$ to $t$ via $v_j,v_k$.
 \item Type $4$ : A path from $v_i$ to $v_j$ and a path from $v_k$ to $t$, 
   both disjoint from each other.
\end{itemize}
We call any path/set of paths in $G_m$ of one of the above types as a 
\emph{potential snapshot} of $G_m$. 

We now construct the \emph{projection networks} of potential snapshots of $G_m$. 
\begin{definition}
  Let $G_m$ be the leaf piece as described above with clique $c_{m-1} = \{v_1,v_2,v_3\}$,
  and vertex $t$ present in $G_m$.
  \begin{itemize}
    \item For each of the types described above, for all $i\in \{1,2,3\}$, 
      and for all $p\in \{0,1\}$,
      find a potential snapshot (if it exists) in $G_m$ from $v_i$ to $t$, 
      of total parity $p$, using lemma~\ref{lem:3dpp}. 
    \item Let $J$ be a potential snapshot found in the previous step, 
      Its \emph{projection network}, is defined as the \emph{graph} obtained from
      $J$ by keeping terminal vertices intact, and replacing every terminal to 
      terminal path in $J$ by a path of length $2-p$.  
   \end{itemize}   
  The type of the projection network is the type of the corresponding potential snapshot.\\
  The \emph{set of projection networks} of $G_m$, denoted by $\projset{G_m}$, 
  is the set of all projection networks obtained for $G_m$ by the above procedure. 
\end{definition}
See~\cref{fig:phase2} for an example. 
Since the total number of terminals is at most $4$ (with one fixed as $t$), it is easy to see 
that the number of possible projections networks for $G_m$ is bounded. 
Therefore $\projset{G_m}$ can be computed in polynomial time.
Note that $\projset{G_m}$ is not uniquely defined. 
But it is sufficient for our purpose, to compute any one of the various 
possible choices of the set $\projset{G_m}$ as explained in~\cref{fig:phase2}. 
\begin{figure}
  \includegraphics[scale=0.6]{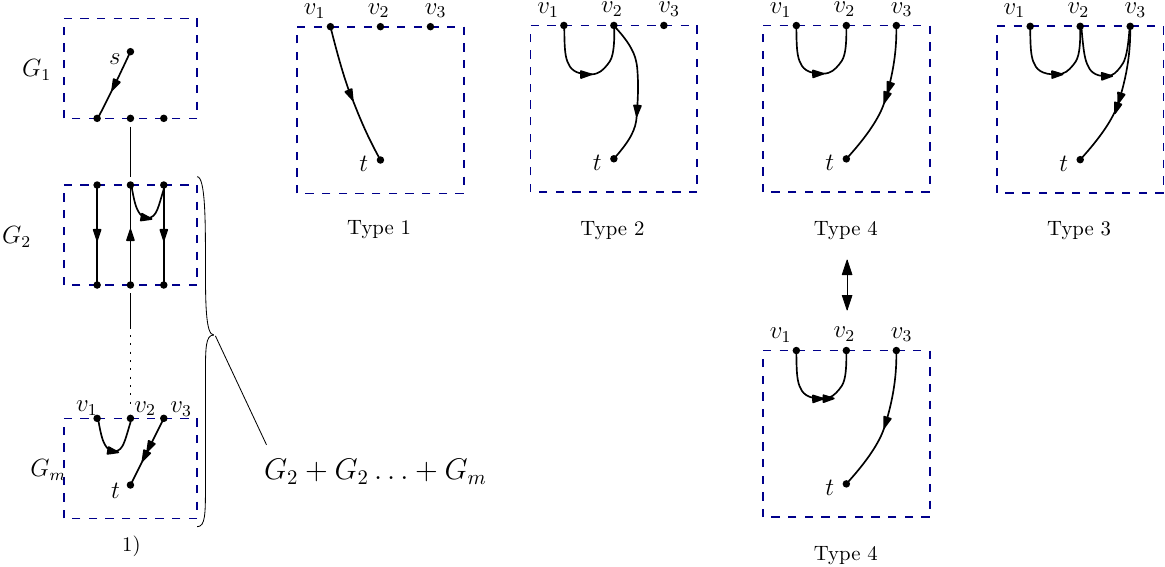}
  \caption{ Fig 1) Denotes the decomposition tree after phase $1$, with $G_1,G_2\ldots,G_m$ 
  denoting the pieces. We skip drawing clique nodes here. On the right are examples of 
  projection networks of different types. 
  In fig 1), the snapshot of the
  $s$-$t$ path in $G_m$ is of type $4$. 
  The two projection networks of type $4$ drawn on the right have 
  same total parity, but different parities of
  individual segments. 
  It is sufficient for our purpose to find any one of them since 
  they are interchangeable.}
  \label{fig:phase2}
\end{figure}
The next lemma shows that the projection networks of $G_m$ preserve solutions, 
and also maintain invariants on planarity and treewidth, when merged with the 
parent piece.
\begin{restatable}{lemmma}{projProp}\label{lem:proj_prop}
 Given $G'= G_1+\ldots +G_m$ as described above. 
  \begin{enumerate}
  \item Given a $\projset{G_m}$, there is an $s$-$t$ path in $G'$ of parity $p$ iff $\exists
    N \in \projset{G_m}$ such that $G'[G_m\rightarrow N]$ has an $s$-$t$ path of 
    parity $p$.
   \item If $G_{m-1}$ is planar/of bounded treewidth, then for 
     any projection network $N\in \projset{G_m}$, 
   $G_{m-1}+N$ is planar/of bounded treewidth, respectively.
  \end{enumerate}
\end{restatable}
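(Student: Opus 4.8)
\textbf{Proof plan for Lemma~\ref{lem:proj_prop}.}
The plan is to prove the two parts separately, with part~1 being the substantive one and part~2 following from the same embedding/merging arguments already used in lemma~\ref{lem:par_mim_net_correct}.

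For part~1, I would argue both directions by translating between $s$-$t$ paths in $G'$ and $s$-$t$ paths in $G'[G_m \rightarrow N]$ via their snapshots in $G_m$. First, suppose $P$ is an $s$-$t$ path of parity $p$ in $G' = G_1 + \ldots + G_m$. Since $s \in G_1$ and $t \in G_m$, and the pieces meet only along the cliques $c_i$, the snapshot of $P$ in $G_m$ is a set of internally disjoint terminal-to-$t$ paths; the key observation (essentially the same case split as in Phase~II's description of snapshot types) is that because $|c_{m-1}| \le 3$ and $P$ enters $G_m$ and terminates at $t$ inside it, the snapshot must be exactly one of the four enumerated types (Type~1--4), entering at some $v_i$ and ending at $t$, with some total parity $p'$. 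By construction of $\projset{G_m}$, for that type, that entry vertex $v_i$, and that parity $p'$, there is a projection network $N \in \projset{G_m}$ realizing a potential snapshot of the same type, entry vertex and total parity (note: if the search in lemma~\ref{lem:3dpp} returned a witness of type/entry/parity matching a snapshot that actually occurs in $G_m$, existence is guaranteed because the true snapshot of $P$ is itself such a witness, so the search cannot fail). Replacing the snapshot of $P$ inside $G_m$ by the corresponding terminal-to-terminal paths of $N$ — which have the prescribed lengths $2 - p$ summing to the same parity $p'$ — and keeping the part of $P$ outside $G_m$ intact yields a walk in $G'[G_m \rightarrow N]$; since $N$'s internal vertices are fresh and the replacement paths are pairwise disjoint (they are disjoint terminal-to-terminal paths in the distinct pieces/segments of $N$ by definition of a potential snapshot of that type), this walk is in fact a simple path, and it has parity $p' + (\text{parity of } P \text{ outside } G_m) = p$. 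The converse direction is symmetric: an $s$-$t$ path of parity $p$ in $G'[G_m \rightarrow N]$ has a snapshot inside $N$ that is one of the four types with some entry $v_i$ and total parity $p'$ (again forced by $|c_{m-1}|\le 3$); since $N$ was built from an actual potential snapshot $J$ of $G_m$ of that type/entry/parity, we substitute $J$ back in, preserving disjointness and total parity, to recover an $s$-$t$ path of parity $p$ in $G'$. The only subtlety to address carefully is the Type~4 case, where the snapshot is a pair of disjoint paths ($v_i$-to-$v_j$ and $v_k$-to-$t$): I must check that the two pieces of the path outside $G_m$ (a segment reaching $v_i$, a segment between $v_j$ and $v_k$, and the path being rerouted through $v_k$) glue up consistently with the replacement, which is exactly where internal disjointness of the two segments of the potential snapshot is used.

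For part~2, the argument mirrors items~2 and~3 of lemma~\ref{lem:par_mim_net_correct}. If $G_{m-1}$ is planar, then since the vertices of the clique $c_{m-1}$ lie on a common face of $G_{m-1}$ (the decomposition guarantees separating cliques lie on a common face, modulo the footnote caveat about $c_i$ after phase~I — but $c_{m-1}$ is the clique toward the leaf $G_m$, and a projection network $N$ is a union of at most two disjoint terminal-to-terminal paths plus $t$, hence trivially planar with all of $v_1,v_2,v_3,t$ placeable on its outer face), we can embed $N$ inside that face of $G_{m-1}$, giving a planar $G_{m-1}+N$. If $G_{m-1}$ has bounded treewidth, then since $|V(N)|$ is bounded by a constant (each of the at most three terminal-to-terminal paths has length $\le 2$), $N$ has constant treewidth, and we merge tree decompositions of $G_{m-1}$ and $N$ along bags containing $c_{m-1}$, so $G_{m-1}+N$ has treewidth at most $\max(\mathrm{tw}(G_{m-1}), O(1))$, which is still bounded.

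The main obstacle I expect is purely in part~1, and specifically in verifying that the snapshot of \emph{any} $s$-$t$ path in $G_m$ (resp. in $N$) is forced to be one of exactly the four listed types, and that the substitution both ways genuinely preserves \emph{simplicity} of the path (not just being a walk) — this requires carefully using that pieces share only clique vertices, that a projection network's internal vertices are fresh, and that the potential snapshots of Types~2--4 are internally disjoint by definition. Everything else is bookkeeping on parities modulo~2.
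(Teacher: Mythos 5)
Your part~1 is essentially the paper's argument, worked out in more detail: you substitute snapshots of $G_m$ by the corresponding paths of a projection network, and you correctly handle the non-uniqueness of $\projset{G_m}$ by noting that the network found by lemma~\ref{lem:3dpp} need only agree with the true snapshot in type, entry vertex and \emph{total} parity, since the portion of the path outside $G_m$ only sees the sum of the segment parities; this is exactly the interchangeability point the paper makes. The treewidth half of part~2 (merging tree decompositions along a bag containing $c_{m-1}$) also matches the paper.

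The gap is in the planar case of part~2. You assume that $v_1,v_2,v_3$ lie on a common face of $G_{m-1}$ and conclude that $N$ can be embedded inside that face, but the footnote in the Phase~II section warns that precisely this can fail after Phase~I: branch pieces attached at $c_{m-1}$ were replaced by a parity mimicking network that was itself embedded into the face bounded by $v_1,v_2,v_3$, so in the post-Phase-I piece $G_{m-1}$ the three clique vertices need not share any face. Your parenthetical remark — that $N$ is planar with $v_1,v_2,v_3,t$ placeable on its outer face — addresses the wrong object; the obstruction is not the planarity of $N$ or where its terminals sit in its own embedding, but where inside the fixed embedding of $G_{m-1}$ the paths of $N$ can be drawn. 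The missing idea, which is how the paper resolves it, is that every \emph{pair} $v_i,v_j$ of vertices of $c_{m-1}$ still shares a common face of $G_{m-1}$ (the mimicking network was attached along the $3$-clique), and that $N$ consists of at most three paths, at most two of which join pairs of clique vertices while the last ends at the fresh vertex $t$; hence each clique-to-clique path of $N$ can be embedded separately in the face shared by its two endpoints, and the path to $t$ added at the end. Without this pairwise-face argument the planarity claim in part~2 is unproven.
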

\begin{proof}
 \begin{enumerate}
   \item This follows from the definition of projection networks. 
    The only minor technical point to note is that $\projset{G_m}$ is not
    unique. For example, suppose there are two potential
    snaphots in $G_m$ of type 4. One is $J_1$, consisting of a path $P_1$ from $v_1$ to $v_2$
    of even parity, and a path $P_2$ from $v_3$ to $t$ of odd parity. The other is $J_2$,
    consisting of a path $P'_1$ from $v_1$ to $v_2$ of odd parity, and a path $P'_2$ from $v_3$ to
    $t$, of even parity. Since the total parity of $J_1$ and $J_2$ is same,~\cref{lem:3dpp} could
    output either one of them. We don't have control over it to find both. But finding any one of
    them is sufficient for us, since if $J_1$ is a snapshot of an actual solution, then replacing $J_1$
    by $J_2$ would also give a valid solution and vice versa (See~\cref{fig:phase2}). 
   \item Suppose $c_{m_1}=\{v_1,v_2,v_3\}$ is the clique where $G_{m-1},G_m$ are attached. 
    The argument of treewidth bound
    is same as that of~\cref{lem:par_mim_net_correct} in previous phase when we attached mimicking networks to parent pieces. 
    However if $G_{m-1}$ is planar, there could have been a parity mimicking 
    network $L'$ attached to $G_{m-1}$ via $c_{m-1}$ during phase I.
    Hence $v_1,v_2,v_3$ might not lie on a common face in $G_{m-1}$ after phase I. 
    We observe however, 
    since $L'$ was attached at a $3$-clique, $c_{m-1}$, 
    every pair $v_i,v_j$ of vertices of $c_{m-1}$, must share a common face in $G_{m-1}$.
    Now, the projection networks consist of at most three paths, 
    two between $v_1,v_2,v_3$, and one from them to $t$. 
    For any $v_i,v_j$, we can embed the path between $v_i,v_j$ in $N$, in the face in 
    $G_{m-1}$ shared by $v_i,v_j$, and finally just add the path leading to $t$. 
    Therefore if $G_{m-1}$ is planar, all projection networks of $G_m$ 
    can be embedded in their parent nodes. 
 \end{enumerate} 
\end{proof}
 We make the following observation to compute a $\projset{G_i+\ldots G_m}$ recursively:
\begin{equation}
 \projset{G_i+\ldots G_m} = \bigcup\limits_{N\in 
 \projset{G_{i+1}+\ldots G_m}} \projset{G_i + N} 
\end{equation}
Thus we can proceed as follows:
\begin{enumerate}
 \item Compute $\projset{G_m}$ using lemma~\ref{lem:3dpp}
 \item For all $N\in \projset{G_m}$, compute $G_{m-1}+N$, and hence compute
   $\projset{G_m+G_{m-1}}$ using the observation above.
 \item For all $N\in \projset{G_m+G_{m-1}}$, compute $G_{m-2}+N$, and hence compute
   $\projset{G_m+G_{m-1}+G_{m-2}}$. Repeat until we reach $G_1$.
\end{enumerate}

\bibliographystyle{plainurl}
\bibliography{references}
\appendix

\section{Appendix 1}\label{appendix1}
\subsection{Proof of lemma~\ref{lem:3dpp}, planar case}\label{app:3dpp_planar}
We restate the lemma for the disjoint path problem with parity
in a special case of planar graphs.
\threeDpp*
\begin{proof}{\emph{(Of part 2)}}
 We first reiterate some of the machinery developed
 in \cite{Nedev99}.
 We give the definitions of \emph{list superface} and \emph{simple list superface} in planar graphs as defined in \cite{Nedev99}.
 \begin{definition}
 Let $G$ be a planar graph and $G'$ be its planar embedding. Let $P$ and 
 $P'$ be two paths in $G$, such 
 that (1) both $P$ and $P'$ start from the same vertex, say $b$, and end at the same vertex, 
 say $e$, and (2) $P$ and $P'$ are vertex disjoint except at $b$ and $e$. Then, we call $P$ and $P'$ 
 together with their interior region (the finite region enclosed by $P$ and $P'$) or exterior 
 region (the region on the plane apart from the interior), a list superface.
 \end{definition}
 Note that $P$ and $P'$ as described above can form two list superfaces, one with the interior 
 region on the plane and one with the exterior region on the plane with respect to the boundary formed by them.
 \begin{definition}
 A list superface $F$ is called a simple list superface if for every directed path 
 $Q = (q_1, q_2, \dots, q_k)$, where $q_1$ and $q_k$ lie on the boundary of $F$ and 
 $q_2, q_3, \dots, q_{k - 1}$ lie on the region of $F$, there exists a directed path from 
 $q_k$ to $q_1$ using only a subset of the edges of the boundary of $F$.
 \end{definition}
 \begin{figure}[H]
  \centering
  \begin{minipage}[b]{0.3\textwidth}
   \begin{tikzpicture}[scale=.5]
   \begin{scope}[thick,decoration={markings, mark=at position .5 with {\arrow{>}}}] 
   \draw[postaction={decorate}][draw, thick] (0,0) to[out=45,in=130, distance=4cm] (8,0);
   \draw[postaction={decorate}][draw, thick] (0,0) to[out=330,in=230, distance=4cm] (8,0);
   \draw[postaction={decorate}][draw, dotted] (2,1.6) to[out=330,in=140, distance=2cm] (5.5,-1.82);
   \draw[postaction={decorate}][draw, dotted] (1,-.55) to[out=60,in=130, distance=2cm] (4.8,-1.9);
   \end{scope}
   
   \draw[fill] (0,0) circle [radius=0.1];
   \draw[fill] (8,0) circle [radius=0.1];
   \draw[fill] (2,1.6) circle [radius=0.07];
   \draw[fill] (5.5,-1.82) circle [radius=0.07];
   \draw[fill] (4.8,-1.9) circle [radius=0.07];
   \draw[fill] (1,-.55) circle [radius=0.07];
   
   \node at (-.4,0) {$b$};
   \node at (8.4,0) {$e$};
   \node at (1.6,1.9) {$u_1$};
   \node at (5.9, -2.1) {$v_1$};
   \node at (4.8,-2.25) {$v_2$};
   \node at (.8,-.85) {$u_2$};
   \end{tikzpicture}
   \caption{A list superface}
   \label{nonsimpleLS}
  \end{minipage}
  \hspace{1in}
  \begin{minipage}[b]{0.3\textwidth}
   \begin{tikzpicture}[scale=.5]
   \label{simplels}
   \begin{scope}[thick,decoration={markings, mark=at position 0.5 with {\arrow{>}}}] 
   \draw[postaction={decorate}][draw, thick] (0,0) to[out=45,in=130, distance=4cm] (8,0);
   \draw[postaction={decorate}][draw, thick] (0,0) to[out=330,in=230, distance=4cm] (8,0);
   \draw[postaction={decorate}][draw] (7,1) to[out=230,in=310, distance=2cm] (3,2);
   \draw[postaction={decorate}][draw] (4.8,-1.9) to[out=130,in=60, distance=2cm] (1,-.55);
   \end{scope}
   
   \draw[fill] (0,0) circle [radius=0.1];
   \draw[fill] (8,0) circle [radius=0.1];
   \draw[fill] (7,1) circle [radius=0.07];
   \draw[fill] (3,2) circle [radius=0.07];
   \draw[fill] (4.8,-1.9) circle [radius=0.07];
   \draw[fill] (1,-.55) circle [radius=0.07];
   
   \node at (-.4,0) {$b$};
   \node at (8.4,0) {$e$};
   \node at (7.35,1.25) {$u_2$};
   \node at (3,2.45) {$v_2$};
   \node at (4.8,-2.25) {$u_1$};
   \node at (.8,-.85) {$v_1$};
   \end{tikzpicture}
   \caption{A simple list superface}
  \end{minipage}
  \caption{List superface in \textsf{\textbf{(a)}} is non-simple because of paths from $u_2$ to $v_2$ and $u_1$ to $v_1$. On the other     hand, \textsf{\textbf{(b)}} is a simple list superface.}
 \end{figure}
 
 As noted earlier, if we can solve the above
 problem for total parity even, we can also find if such paths of total
 parity odd exist, by adding a dummy out neighbour $v'_4$ of $v_4$ and
 finding disjoint paths of total parity even from $v_1$to $v_2$,
 $v_2$ to $v_3$ and from $v_3$ to $v'_4$. Hence we assume that we want
 to find disjoint paths with total parity even.
 
 We first find a set of pairwise disjoint paths from $v_1$ to $v_2$,
 $v_2$ to $v_3$ and from $v_3$ to $v_4$(if it exists) using \cite{Schrijver94}, but the 
 total parity might be odd (If no such paths exist, we output negative, and if we get 
 an instance of paths of total parity even then we output that instance). 
 Suppose a solution exists, and let $P_1,P_2,P_3$, 
 be one set of pairwise disjoint paths from $v_1$to $v_2$,
 $v_2$ to $v_3$ and from $v_3$ to $v_4$ respectively, of total parity even. 
 Let $P'_1,P'_2,P'_3$ be another set of pairwise disjoint paths of
 total parity odd between same corresponding vertices.
  Let $i,j,k$ refer to distinct integers from the set $\{1,2,3\}$ hereafter. 
 At least one of the pairs $P_i,P'_i$ must have
 different parities, otherwise the total parities of both sets would be
 the same. Suppose $P_1,P'_1$ are of different parities. 
 
 Using this we are going to extend the ideas of \cite{Nedev99} and show
 the following lemma:
 \begin{lemma}\label{lem:3dpp_olsf}
  Let $G$ be a plane graph and $v_1,v_2,v_3,v_4$ vertices such that $v_1,v_2,v_3$ 
  lie on the same face. Consider two instances of disjoint paths $(P_1,P_2,P_3)$ 
  and $(P'_1,P'_2,P'_3)$ between $v_1,v_2,v_3,v_4$ as described above, of different
  total parities, and let $P_1,P'_1$ have different parities. Such a pair of instances exists iff 
  a structure consisting of the following exists : 
  \begin{itemize}
   \item A simple odd list superface $F$ with starting and ending 
     vertices $b,e$ and paths $Q_{F},Q'_{F}$ (of different parities) as boundaries of $F$.
   \item Paths $Q_b$ from $v_1$ to $b$, $Q_e$ from $e$ to $v_2$, $Q_2$ from 
     $v_2$ to $v_3$, and $Q_3$ from $v_3$ 
     to $v_4$. 
   \item  $Q_F,Q'_F, Q_b, Q_e, Q_2, Q_3$ are all pairwise disjoint (except at vertices $b,e$). 
  \end{itemize}
  Similar claim with appropriate changes also holds if paths $P_2,P'_2$ are of different parities
  or if paths $P_3,P'_3$ are of different parities.
 \end{lemma}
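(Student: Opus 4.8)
The plan is to prove the two implications separately; the reverse direction is short, and the forward direction carries essentially all the work.

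For the reverse direction, suppose a structure as described exists. Set $Q_1 := Q_b.Q_F.Q_e$ and $Q'_1 := Q_b.Q'_F.Q_e$. Since $Q_b, Q_F, Q'_F, Q_e$ are pairwise disjoint apart from the shared vertices $b$ and $e$, both $Q_1$ and $Q'_1$ are simple $v_1$-$v_2$ paths, and since $Q_F$ and $Q'_F$ have different parities, so do $Q_1$ and $Q'_1$. Because $Q_b, Q_F, Q'_F, Q_e$ are also disjoint from $Q_2$ and $Q_3$, the triples $(Q_1, Q_2, Q_3)$ and $(Q'_1, Q_2, Q_3)$ are each a set of three pairwise disjoint paths from $v_1$ to $v_2$, from $v_2$ to $v_3$, and from $v_3$ to $v_4$, and their total parities differ by the parity difference between $Q_F$ and $Q'_F$. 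This is exactly a pair of instances of the required form.

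For the forward direction, fix a planar embedding of $G$ with $v_1, v_2, v_3$ on the outer face and consider the two given instances, with $P_1, P'_1$ of different parity. I would first produce \emph{some} structure as in the statement, allowing the list superface to be non-simple, and only afterwards make it simple. To obtain the list superface together with its legs, I analyse $P_1 \cup P'_1$: by planarity and $v_1, v_2$ lying on the outer face, the vertices common to $P_1$ and $P'_1$ occur in the same order along both paths, and between two consecutive common vertices the two paths either coincide or bound a lens. Since $|P_1| \not\equiv |P'_1| \pmod{2}$, a parity count over the lenses forces at least one lens $\Lambda$, between some $b$ and $e$, to be \emph{odd}; taking $Q_F$ to be the $P_1$-side of $\Lambda$, $Q'_F$ the $P'_1$-side, $Q_b := P_1[v_1,b]$ and $Q_e := P_1[e,v_2]$, we get an odd list superface and its two legs, pairwise disjoint except at $b$ and $e$ (using that $Q_b, Q_e, Q_F$ all lie on the simple path $P_1$ while the interior of $Q'_F$ avoids $P_1$). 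In particular $Q_b, Q_e, Q_F$, lying on $P_1$, are automatically disjoint from $P_2$ and $P_3$.

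It then remains to choose $\Lambda$ and to route a $v_2$-$v_3$ path $Q_2$ and a $v_3$-$v_4$ path $Q_3$ disjoint from everything; by the previous observation the only possible conflicts are between $\{Q_2, Q_3\}$ and $Q'_F$. This is precisely where our situation departs from Nedev's, where \emph{any} simple odd list superface may be used, whereas here we must exhibit \emph{one} that admits the two extra paths. The plan is to exploit that $v_3$ lies on the outer face: the leftmost and rightmost $v_1$-$v_2$ paths inside $P_1 \cup P'_1$ bound a disk $\Delta$ that contains the whole switch, and $v_3$, being on the outer face, lies outside $\Delta$, so any $v_2$-$v_3$ path enters $\Delta$ only near $v_2$; a leftmost/rightmost argument then lets us select the lens $\Lambda$ so that it is ``shielded'' from such an excursion and reroute $Q_2$ (and then $Q_3$) around $Q'_F$. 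Once a valid, possibly non-simple, structure is in hand, I would make the list superface simple by re-deriving Nedev's second lemma --- repeatedly cutting along an interior chord that has no return path along the boundary, with a parity argument keeping one of the two sides odd --- and check that each such cut preserves the disjointness of the legs and of $Q_2, Q_3$, the cut superface sitting more interior and hence leaving more room. I expect this shielding-and-completion step to be the main obstacle. The degenerate configurations, where $v_3$ or $v_4$ lies on $P_1 \cup P'_1$ or where some paths meet only at terminals, are handled as easy separate cases, and the companion statements for $P_2, P'_2$ or $P_3, P'_3$ of different parity follow by relabelling the terminals.
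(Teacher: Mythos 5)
Your reverse direction is fine and matches the paper. The forward direction, however, has two genuine gaps. First, the structural claim you build on --- that the common vertices of $P_1$ and $P'_1$ occur in the same order along both paths, so that $P_1\cup P'_1$ decomposes into a chain of lenses over which you can do a parity count --- is false for directed paths in a plane graph, even with both endpoints on the outer face. A four-vertex counterexample: arcs $v_1\to a\to b\to v_2$ and $v_1\to b\to a\to v_2$ embed in the plane with $v_1,v_2$ on the outer face, and the common vertices $a,b$ appear in opposite orders on the two paths. This is exactly why the existence of an odd list superface with disjoint connecting legs is a nontrivial lemma in \cite{Nedev99}, and why the paper does not argue inside $P_1\cup P'_1$ directly: it replaces $P'_1$ by the leftmost path $\leftm{P_1}$ of $P_1\cup P'_1$, proves (Claim~\ref{claim:leftm_paths}) that this is a simple directed $v_1$-$v_2$ path lying in $\leftR{P_1}\cap\leftR{P'_1}$, hence disjoint from $P_2,P_3,P'_2,P'_3$, and then extracts the odd list superface from $\leftm{P_1}\cup P_1$ (Lemma~\ref{lem:Nedev_ext}), with both legs on $\leftm{P_1}$.

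Second, even granting your lens, your construction puts one boundary $Q'_F$ on $P'_1$, which may intersect $P_2$ and $P_3$; you correctly identify this as ``the main obstacle'' but only sketch a shielding/rerouting idea, so the proof is incomplete precisely at the point where the lemma departs from Nedev's setting. The paper's choice of $\leftm{P_1}$ as the second boundary dissolves this conflict outright: every piece of the structure lies in $\leftm{P_1}\cup P_1$, so $Q_2=P_2$ and $Q_3=P_3$ can be used unchanged, and the subsequent shrinking to a simple superface stays inside $F$ and hence preserves disjointness. Finally, the remaining cases do \emph{not} ``follow by relabelling the terminals'': the hypothesis is asymmetric ($v_4$ need not lie on the outer face), so when $P_3,P'_3$ have different parities the leftmost-path argument is unavailable for $P_3\cup P'_3$; the paper instead applies Nedev's original lemma there (with legs possibly using edges of both $P_3$ and $P'_3$) and uses $\leftm{P_1}$ together with $\leftm{P_2}$ or $\rightm{P_2}$, with subcases on which side of $P_2,P'_2$ the vertices $v_3,v_4$ lie, to supply the other two disjoint connections.
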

 The odd list superface will act as a switch for parities.
 This gives a recipe, similar to \cite{Nedev99} to find a solution in polynomial time. 
 We can enumerate over all
 simple odd list superfaces as done in \cite{Nedev99} and use the algorithm
 of \cite{Schrijver94} to find the four pairwise disjoint paths described above
 in $G$, after removing boundary vertices of the simple odd list superface. If such
 an odd list superface and the corresponding paths exist, it gives a solution, 
 else we output that no solution exists. 
 One direction of the claim is straightforward: If there exists a simple odd list super
 face $F$ and paths $Q_b,Q_e,Q_2,Q_3$ as described above, then 
 the set of paths $(Q_e.(\text{one of }Q_{F},Q'_{F}).Q_b,Q_2,Q_3)$ form
 two instances of disjoint paths of different parities.
 In order to prove the other direction of the claim, we note some definitions
 and observations.
 
 Let $G$ be a plane graph and $v_1,v_2$ be two vertices on the outer face, and 
 $P_1,P'_1$ be two paths from $v_1$ to $v_2$. 
 By Jordan curve theorem, each path 
 $P_1,P_2$ divides $G$ into two regions. 
 If $P$ is a directed path starting and
 ending on a closed boundary, we call the region on the left of $P$ as 
 $\leftR{P}$ and the region along the right as $\rightR{P}$. 
 
 Let $P_1\cup P'_1$  denote the subgraph of $G$ formed by vertices and edges of paths $P_1,P'_1$.
 The leftmost undirected walk or the of subgraph $P_1\cup P'_1$, denote by $\leftm{P_1}$,
 is defined as the undirected walk obtained by traversing the clockwise 
 first edge(ignoring direction) 
 of $P_1\cup P'_1$ with respect to the parent edge at every step, until it reaches $v_2$ or starts
 repeating. 
 For the first step, we can use an imaginary dart from $v_1$ into the 
 outer face 
 as parent edge for reference. 
 The rightmost walk of $P_1\cup P'_1$, denoted by $\rightm{P_1}$ is defined similarly 
 by taking the counter-clockwise first edge at every step.
 We show some lemmas using these paths.
 \begin{claim}\label{claim:leftm_paths}
   Suppose $v_1,v_2$ lie on outer face of $G$ and $P_1,P'_1$ are paths
   from $v_1$ to $v_2$.
  \begin{enumerate} 
    \item The undirected walks $\leftm{P_1},\rightm{P_1}$ are simple, directed 
      paths from $v_1$ to $v_2$.
   \item Let $P_2$ be a path disjoint from $P_1$ and lying (strictly) in 
    $\rightR{P_1}$. Let $P'_2$ be a path disjoint from $P'_1$, lying (strictly) in $\rightR{P'_1}$. 
    Then the path $\leftm{P_1}$ is disjoint from both $P_2,P'_2$, and 
    lies in $\leftR{P_1}\cap \leftR{P'_1}$.
   \end{enumerate}
 \end{claim}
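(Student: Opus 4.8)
The plan is to run the whole argument inside the plane embedding of $H := P_1\cup P'_1$ induced by that of $G$. Since $v_1,v_2$ lie on the outer face of $G$ and deleting edges only enlarges the unbounded region, $v_1$ and $v_2$ lie on the outer face of $H$ as well. I would first record the elementary facts about $H$ that everything rests on: $H$ is connected; every edge of $H$ lies on a directed $v_1$--$v_2$ path (namely $P_1$ or $P'_1$); every vertex other than $v_1,v_2$ has in-degree equal to out-degree in $H$; and $v_1$ (resp.\ $v_2$) has only out-edges (resp.\ only in-edges). I also fix the turning convention of the definition so that $\leftm{P_1}$, started at $v_1$ with the imaginary dart into the outer face as its first parent edge, keeps a single face of $H$ immediately on its left, and that face is the outer face $\Phi_\infty$; this is the intended meaning of ``leftmost''. (The convention for $\rightm{P_1}$ is the mirror image, keeping $\Phi_\infty$ on the right.)

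For part (1), I would first argue $\leftm{P_1}$ reaches $v_2$ without repeating a vertex. Since the walk keeps $\Phi_\infty$ immediately on its left, it traces along the outer frontier of $H$. If it revisited a vertex $z$ before reaching $v_2$, the portion of the walk between the two visits to $z$ would be a closed sub-walk $C$ bordering $\Phi_\infty$, and $C$ would contain neither $v_1$ (which occurs strictly earlier) nor $v_2$ (not yet reached); but then the edges of $H$ cut off by the sub-loop $C$ at $z$ are reachable from $\{v_1,v_2\}$ only by passing through $z$, so any such edge — which lies on the simple path $P_1$ or $P'_1$ — would force that path to visit $z$ twice, a contradiction. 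Hence $\leftm{P_1}$ is simple; since the leftmost rule always supplies a next edge at any vertex other than $v_2$, the walk terminates exactly at $v_2$. Finally, to see it is \emph{directed}, one inducts along the walk: entering an internal vertex along a forward edge, the leftmost turn leaves along an out-edge — here one uses the local rotation together with the fact that $v_1,v_2$ are the only vertices of $H$ with unbalanced in/out-degrees — and the walk enters $v_2$ along one of its (forward) in-edges. The statement for $\rightm{P_1}$ follows by the mirrored convention.

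For part (2), the crux is the single geometric statement that $\leftm{P_1}$ lies weakly to the left of \emph{every} directed $v_1$--$v_2$ path contained in $H$, in particular of $P_1$ and of $P'_1$; equivalently $\leftm{P_1}$ is contained in the closed regions $\leftR{P_1}$ and $\leftR{P'_1}$. This again follows from ``$\leftm{P_1}$ keeps $\Phi_\infty$ on its left'': any $v_1$--$v_2$ curve $Q$ in $H$, together with $\leftm{P_1}$, bounds a region, and were $Q$ to run strictly to the left of $\leftm{P_1}$ anywhere, a piece of $\Phi_\infty$ would be pinched off between them — impossible, since $\leftm{P_1}$ already has all of $\Phi_\infty$ on its left. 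Granting this, the rest is bookkeeping: $P_2$ is disjoint from $P_1$ and lies strictly to the right of $P_1$, i.e.\ in $\rightR{P_1}$ with its boundary $P_1$ removed, which is disjoint from the closed region $\leftR{P_1}\supseteq\leftm{P_1}$; hence $\leftm{P_1}$ meets neither $P_1$ nor $P_2$. Symmetrically, since $P'_2$ lies strictly to the right of $P'_1$ while $\leftm{P_1}\subseteq\leftR{P'_1}$, $\leftm{P_1}$ is disjoint from $P'_2$. The containment $\leftm{P_1}\subseteq\leftR{P_1}\cap\leftR{P'_1}$ is exactly the geometric statement just proved.

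The \textbf{main obstacle} is making the two local-to-global claims fully rigorous: (i) that the leftmost turn out of a forward in-edge at a balanced vertex is always a forward out-edge, and (ii) that $\leftm{P_1}$ keeps $\Phi_\infty$ on its left (equivalently, is weakly left of every $v_1$--$v_2$ path in $H$). These are standard in flavour for leftmost-walk arguments, but the verification must cope with shared edges and shared vertices of $P_1$ and $P'_1$ — in particular, common vertices can occur in different orders along the two paths, so $H$ need \emph{not} decompose into a tidy chain of ``bubbles'', and $P_1\cup P'_1$ may even contain directed cycles — and with pinning down the rotation/turn conventions precisely enough that the ``crossing'' configurations at shared vertices still behave. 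Pushing the local statements to global ones through all these configurations is where the real work lies; everything else is routine planar topology.
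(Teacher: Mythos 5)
Your write-up is a plan rather than a proof: the two statements that carry all the content---(i) that the leftmost walk never traverses an edge against its orientation, and (ii) that it stays weakly to the left of every $v_1$--$v_2$ path in $P_1\cup P'_1$ (``keeps the outer face on its left'')---are exactly the ones you set aside as ``main obstacles'' and never establish, so the proposal is incomplete precisely where the claim has content. The paper does not obtain (i) by the local induction you sketch; it proves both simplicity and directedness by global Jordan-curve arguments. For simplicity: a first repeated vertex $x$ yields a cycle $C$ traced by the walk; the edge entering $x$ lies on, say, $P_1$, and $P_1[x,v_2]$ must escape $C$ to reach $v_2$, which is impossible because any edge attached to $C$ on the outer-face side at a vertex other than $x$ would have been preferred by the leftmost rule (and the residual case forces the two cycle edges at $x$ into $P'_1$, which then cannot escape either). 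For directedness: the first maximal backward segment of the walk, together with an initial piece of $P_1$ and the initial forward piece of the walk, bounds a closed region with $v_2$ outside, and $P_1$ cannot exit it without again beating the leftmost choice.

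Beyond incompleteness, some of your supporting assertions are wrong or unjustified. The ``elementary fact'' that every internal vertex of $H=P_1\cup P'_1$ is in/out balanced fails when the two paths share edges (a shared in-edge with distinct out-edges gives in-degree $1$ and out-degree $2$), and your induction for directedness leans on that balance. Moreover, at a vertex where the two paths meet, the rotation can place the other \emph{in}-edge clockwise-first after the edge along which the walk arrives, so ``the leftmost turn out of a forward edge is a forward out-edge'' is not a local fact; ruling out a backward step genuinely requires the global argument above, which is presumably why the paper argues globally. In your simplicity argument, the claim that the edges cut off by the loop $C$ are reachable from $\{v_1,v_2\}$ only through the repeated vertex $z$ is also unjustified: a path of $H$ may cross $C$ at any of its vertices, and what forbids such attachments is the leftmost-choice property at those vertices, not separation through $z$ alone. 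Your reduction of part (2) to statement (ii) matches the paper's (essentially one-line) treatment of that part, but it too rests on the unproved geometric statement.
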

 \begin{proof}
  \begin{enumerate}
   \item We first show that $\leftm{P_1}$ is a simple path ignoring directions.
    \\ Let $\leftm{P_1}= \langle v_1,\ldots y,x,z\ldots w,x,\ldots \rangle$, where $x$ is the first 
    vertex on $\leftm{P_1}$ that repeats in the walk. Let $C$ denote the subcycle 
    $\langle x,z\ldots w,x\rangle$ of $\leftm{P_1}$. Suppose the edge $(y,x)$ belongs to
    $P_1$ (See~\cref{fig:leftm}). 
    The path $P_1[x,v_2]$ cannot touch $C$ at any point other than $x$, since $v_2$ lies
    on the exterior of $C$ and any edge touching $C$ from exterior at any point other than $x$
    would condradict the assumption that $C$ was obtained in the leftmost walk. But if it touches
    $C$ at $x$ only, then edges $(x,z),(w,x)$ must both belong to $P'_1$ and by similar argument
    they would have to leave $C$ contradicting the assumption that $C$ occured in a leftmost walk.
    Therefore $\leftm{P_1}$ is a simple undirected path.
    Next we show that $\leftm{P_1}$ is actually a directed path from
    $v_1$ to $v_2$. Suppose that is not the case, 
    which means it consists of segments that are alternately directed
    towards or away from $v_2$. Let $x,y$ be the first maximal segment directed away from $v_2$(i.e. 
    $\leftm{P_1} = \langle v_1,\ldots y,\ldots x_1,x,x_2\ldots v_2\rangle$, where all arcs in the 
    segment from $v_1$ till $y$ are directed towards $v_2$, all arcs in segment from  $y$ till $x$
    are directed away from $v_2$, and the arc $(x,x_2)$ is directed towards $v_2$). 
    Without loss of generality,
    let the arc $(x,x_1)$ be a part of $P_1$, which implies the 
    incoming arc to $x$ in $P_1$ is not on
    $\leftm{P_1}$ and lies in $\rightR{\leftm{P_1}}$. Consider the closed loop formed by segments 
    $P_1[v_1,x],\leftm{P_1}[x,y],\leftm{P_1}[v_1,y]$. 
    Since $P_1$ goes to $x_1$ after $x$, and $v_2$ lies
    on the exterior of the loop, $P_1$ must exit the loop after $x_1$ to reach $v_2$. 
    But cannot exit via that segment since that would contradict that 
    $\leftm{P_1}[v_1,y]$ is a subpath of the lefmost
    undirected path $\leftm{P_1}$. 
    Therefore every arc in $\leftm{P_1}$ must be directed from $v_1$ to $v_2$.
   \item This follows easily from definition of $\leftm{P_1}$.
  \end{enumerate}
 \end{proof}
 \begin{figure} 
 \begin{minipage}{\textwidth}
  \begin{minipage}{0.5\textwidth}
   \includegraphics[scale=0.65]{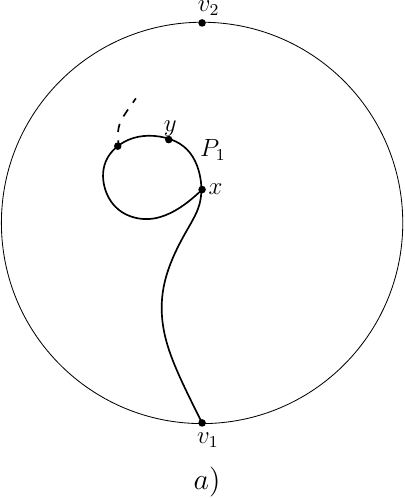}
  \end{minipage}
  \begin{minipage}{0.3\textwidth}
   \includegraphics[scale=0.65]{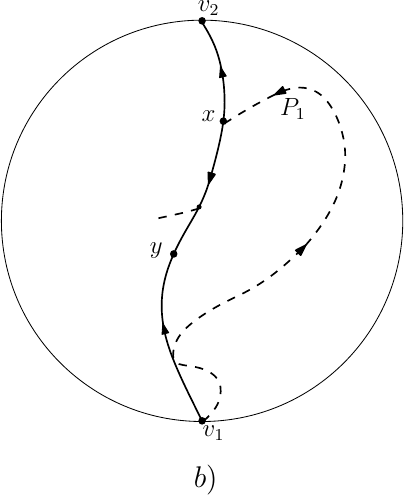}
  \end{minipage}
   \captionof{figure}{Suppose the undirected leftmost walk from $v_1$ in 
    $P_1\cup P'_1$ ends up in the cycle shown in $a)$. 
    Suppose the edge $(x,y)$ (undirected) belongs to $P_1$. Then one of the (undirected)
    segments $P_1[v_1,x],P_1[x,v_2]$ must have an edge touching the 
    cycle from its exterior as shown by the dashed path. 
    This would contradict the cycle being part of the leftmost walk. 
    Therefore the leftmost walk $\leftm{P_1}$ must be a path.
  In figure $b$), suppose the leftmost path $\leftm{P_1}$, drawn 
  in bold, has a segment from $x$ to $y$ directed away from $v_2$. 
    Suppose the outgoing edge of $x$ in this segment belongs to $P_1$. 
    Then $P_1[v_1,x]$ is a subpath of $P_1$ lying in $\rightR{\leftm{P_1}}$, 
    forming a closed loop with the undirected segment $\leftm{P_1}[v_1,x]$. 
    Therefore the only way for $P_1$ to reach $v_2$ is to
    exit the segment $\leftm{P_1}[x,y]$ into $\leftR{\leftm{P_1}}$, 
  contradicting that $\leftm{P_1}$ is the leftmost path.}\label{fig:leftm}
 \end{minipage}
 \end{figure}
 Now we show a tweaked version of Lemma $2$, and restate Lemma $1$ of \cite{Nedev99}.
 \begin{lemma}\label{lem:Nedev_ext}
  \begin{enumerate}
   \item \label{lem:Nedev_lem1_ext} (Extension of Lemma $2$ of~\cite{Nedev99}) Let $P_1,P'_1$ be
    paths as defined above, of different parity. Let $\leftm{P_1}$ 
    and $P_1$ be of different parities. Then there exists an odd list superface $F$
    in $\leftm{P_1}\cup P_1$ with begining and ending vetices $b,e$ respectively
    with boundaries $\leftm{P_1}[b,e], P_1[b,e]$, and paths $Q_b,Q_e$ from $v_1$ to
    $b$ and $e$ to $v_2$ respectively such that $Q_b,Q_e$ and $F$ are pairwise
    disjoint(except at some end points). Moreover $Q_b=\leftm{P_1}[v_1,b], 
    Q_e=\leftm{P_1}[e,v_2]$.
   \item \label{lem:Nedev_lem2}(Reiteration of Lemma $1$ of~\cite{Nedev99})If the odd 
    list superface $F$ found above in $G$ is not simple, we can find a simple odd list 
    superface $F_s$ which is contained inside the region of $F$, and extend $Q_b$ to $Q'_b$, 
    from $v_1$ to beginning of $F_s$, and $Q_e$ to $Q'_e$, from ending of $F_s$ to $v_2$, 
    such that the extensions are also contained inside $F$ and $F_s,Q'_b,Q'_e$ are all pairwise 
    disjoint.
   \end{enumerate}
 \end{lemma}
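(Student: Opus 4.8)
The plan is to prove Part~1 by cutting the leftmost path $L:=\leftm{P_1}$ at its common vertices with $P_1$ into a chain of bigons and extracting an odd one by a parity count, and to obtain Part~2 by invoking Nedev's Lemma~1 directly.

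The structural input I would establish first is that $L$ lies weakly to the left of $P_1$, i.e.\ $L\subseteq\overline{\leftR{P_1}}$: this is essentially the definition of the leftmost walk, since $L$ traces a portion of the outer boundary of the figure $P_1\cup P_1'$ from $v_1$ to $v_2$, so the whole figure — in particular $P_1$ — lies on a single side of it (cf.\ Claim~\ref{claim:leftm_paths}). Two consequences: $L$ and $P_1$ cannot cross at a vertex, since interleaved edges there would push $L$ strictly into $\rightR{P_1}$; and then a Jordan-curve argument forces the common vertices of $L$ and $P_1$ to occur in the \emph{same} order along $L$ and along $P_1$. Writing them $v_1=x_0,x_1,\dots,x_m=v_2$ in this order, each consecutive pair bounds a (possibly degenerate) bigon $B_i$ with boundaries $L[x_i,x_{i+1}]$ and $P_1[x_i,x_{i+1}]$; because they respect the common order, the pieces $L[x_i,x_{i+1}]$ tile $L$ and the pieces $P_1[x_i,x_{i+1}]$ tile $P_1$.

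Now the parity count: the sum over $i$ of the lengths of $B_i$ equals $|L|+|P_1|$, which is odd by the hypothesis that $L$ and $P_1$ have different parities. Hence some $B_i$ is odd, so its two boundary paths have different parity (and $B_i$ is then non-degenerate). Take $b:=x_i$, $e:=x_{i+1}$, let $F$ be the list superface bounded by $L[b,e]$ and $P_1[b,e]$ with its region the bounded one, and set $Q_b:=L[v_1,b]=\leftm{P_1}[v_1,b]$, $Q_e:=L[e,v_2]=\leftm{P_1}[e,v_2]$. Disjointness is bookkeeping: $Q_b,L[b,e],Q_e$ are consecutive pieces of the simple path $L$ and so meet only in $b$ and $e$; the vertices of $Q_b$ on $P_1$ are precisely $x_0,\dots,x_i$, all at or before $b$ on $P_1$, so $Q_b$ meets $P_1[b,e]$ only in $b$, and symmetrically for $Q_e$ and $e$; and since $Q_b,Q_e$ reach $v_1,v_2$ (on the outer face) while touching the boundary cycle of $F$ only at $b$ resp.\ $e$, they stay in its exterior and hence off the region of $F$. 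The degenerate cases $b=v_1$ or $e=v_2$ just make the corresponding $Q$ trivial. This is exactly the structure Part~1 asserts.

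Part~2 is Lemma~1 of~\cite{Nedev99} applied to $F$: it replaces any odd list superface, together with the region it bounds, by a simple odd list superface $F_s$ strictly inside that region, and reroutes the two paths attached at the ends of $F$ so that they run inside the old region and avoid $F_s$; prepending and appending those reroutes to $Q_b,Q_e$ yields $Q_b',Q_e'$, and its hypotheses are met precisely because Part~1 gives $Q_b,Q_e$ touching $F$ only at $b,e$. The main obstacle is the ordering fact in the second step — that $\leftm{P_1}$ and $P_1$ meet their common vertices in the same order; once that is in hand the parity count and the disjointness bookkeeping are routine, but without it the bigons need not tile $P_1$ and the parity sum fails, so the care lies in establishing that the leftmost walk is weakly left of $P_1$ and in converting ``no crossing'' into ``same order'' via Jordan curves.
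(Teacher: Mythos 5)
Your proposal is correct and follows essentially the same route as the paper: the paper's proof also decomposes $\leftm{P_1}\cup P_1$ into successive bigons between consecutive common vertices (phrased iteratively via the first divergence point $x_1$ and the Jordan-curve/leftmost-minimality argument that the first reconvergence points coincide, $y'_1=y_1$, which is exactly your ``same order'' fact), takes $Q_b,Q_e$ as the prefix and suffix of $\leftm{P_1}$, and extracts an odd bigon because $\leftm{P_1}$ and $P_1$ have different parities. Part~2 is likewise handled in the paper by citing Nedev's lemma directly, just as you do.
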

\begin{proof}
 \begin{enumerate}
  \item Let $x_1$ denote the first vertex after $v_1$ where 
    $\leftm{P_1}$ and say, $P_1$ diverge. 
   Let $y_1$ be the first vertex of $\leftm{P_1}$ after $x_1$ which also is in $P_1$.
   Let $y'_1$ be the first vertex of $P_1$ after $x_1$ which is also in $\leftm{P_1}$ and
   $y_1\neq y'_1$. Then $y'_1$ appears after $y_1$ in $\leftm{P_1}$. 
   The segments $\leftm{P_1}[x_1,y'_1] \cup (P_1[x_1,y'_1])$ forms a closed loop.
   Now the vertex $y_1$ lies on the segment $\leftm{P_1}[x_1,y'_1]$ of the loop 
   $\leftm{P_1}[x_1,y'_1]\cup P_1[x_1,y'_1]$. The simple path $P_1[y'_1,v_2]$ must continue from 
   $y_1$ to $v_2$ which is outside the loop, for which it must exit the loop via the segment
   $\leftm{P_1}[x_1,y'_1]$. But any edge going out of the loop from a vertex of 
   $\leftm{P_1}[x_1,y'_1)$ will contradict the assumption that $\leftm{P_1}$ is the leftmost
   walk. Therefore $y'_1=y_1$. Therefore we have a list superface in $\leftm{P_1}\cup P_1$ with
   beginning and ending vertices $x_1,y_1$ respectively and boundaries consisting of
   $\leftm{P_1}[x_1,y_1],P_1[x_1,y_1]$. Moreover, since $x_1$ was the first point of divergence
   and $y_1$ the first point after $x_1$ where both $\leftm{P_1},P_1$ meet each other, the
   segments $Q_b=\leftm{P_1}[v_1,x_1],Q_e=\leftm{P_1}[y_1,v_2]$ are mutually disjoint and
   disjoint from boundaries of the list superface. If both boundaries are of different parity,
   then we are done. 
   Else we keep the segment $\leftm{P_1}[v_1,y_1]$ as part of our initial segment, 
   and repeat the same argument starting from $y_1$ instead of $v_1$. At some point we must get
   a list superface satisfying all the conditions and with different boundaries since the parities
   of $\leftm{P_1},P_1$ are different. Therefore the lemma holds.
   \item The proof of this is the same as that of Lemma 2 of \cite{Nedev99}. 
 \end{enumerate}
\end{proof}
\begin{figure}
 \begin{minipage}{\textwidth}
  \begin{minipage}{0.5\textwidth}
   \includegraphics[scale=0.7]{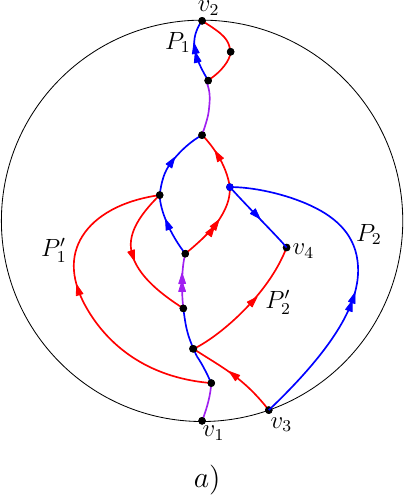}
  \end{minipage}
  \begin{minipage}{0.3\textwidth}
   \includegraphics[scale=0.7]{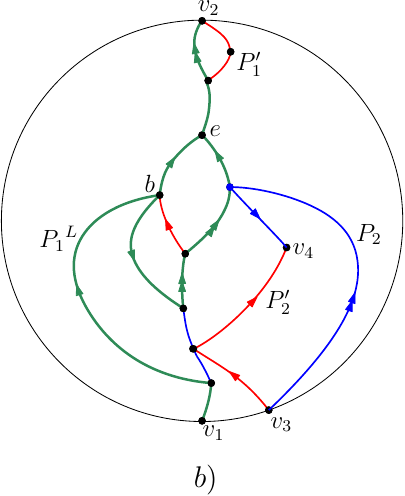}
  \end{minipage}
   \captionof{figure}{The red colored paths in Fig a) are $P'_1,P'_2$, and the blue colored
    paths are $P_1,P_2$(We use purple segments where they share edges). Single arrows denote odd
    length segments and double arrows even length segments. Segments without arrows can be 
    assumed to be of any parity. The paths highlighted in green in Fig b) denote the odd list
  superface and the segments $Q_b,Q_e,P'_2$ that we get in lemma~\ref{lem:Nedev_lem1_ext}.}
 \end{minipage}
\end{figure}
 Now we prove the other direction of~\cref{lem:3dpp_olsf}
 \begin{proof}
  We assume that $v_1,v_2,v_3$ lie on the outer face, and hence
  can consider them on an imaginary boundary that encloses the region $R$ where
  the graph is embedded.(see~\cref{fig:nedev_cases}). 
  We consider the cases of which of the pairs $P_i,P'_i$, $i\in \{1,2,3\}$, 
  consists of paths of different parity. In each case, we assume without loss of generality
  that $\leftm{P_i}$ is of same parity as $P'_i$, and hence of parity different
  from that of $P_i$
  \begin{enumerate}
   \item Case 1. $P_1,P'_1$ are paths of different parities. The vertices
    $v_3,v_4$ must lie on the same side of both $P_1,P'_1$, since they
    are connected by a path disjoint from $P_1$ and a path disjoint from $P'_1$. 
    W.l.o.g., we can assume that they lie strictly on the right side of $P_1,P'_1$. 
    By our assumption, $\leftm{P_1},P_1$ are of different parities.
    By claim~\ref{claim:leftm_paths}, the sets $(P_1,P_2,P_3)$ and
    $(\leftm{P_1},P'_2,P'_3)$ are also instances of disjoint paths of different
    total parities. Let $F$ be the odd list superface formed by $P_1,\leftm{P_1}$,
    as described in~\cref{lem:Nedev_ext}. 
    The boundary $\leftm{P_1}[b,e]$ of $F$, as well as segments $Q_b=\leftm{P_1}[v_1,b],
    Q_e=\leftm{P_1}[e,v_2]$ are disjoint from $P_2,P'_2,P_3,P'_3$ by 
    claim~\ref{claim:leftm_paths}. Since other boundary of $F$ is $P_1[b,e]$, 
    both the boundaries of $F$, and segments $Q_b,Q_e$ are disjoint from $P_2,P_3$. 
    Therefore we have the
    odd list superface $F$, and segments $Q_b,Q_e$ that are
    all mutually disjoint, satisfying all requirements of lemma~\ref{lem:3dpp_olsf}
    except possibly for $F$ being simple. Now if $F$ is not a simple odd list superface, then
    by lemma~\ref{lem:Nedev_ext}, there is a simple odd list superface $F_s$ contained
    inside $F$ and $Q_b,Q_e$ can be extended to beginning and ending of 
    $F_s$ respectively. Since $F_s$ and the extensions of $Q_b,Q_e$ lie
    inside $F$, they are also disjoint from $P_2,P_3$ and therefore all requirements
    of lemma~\ref{lem:3dpp_olsf} are satisfied.
   \item Case 2. $P_2,P'_2$ are paths of different parities. We have two subcases:
    \begin{enumerate}
     \item Both $v_1,v_4$ lie on the same side of $P_2,P'_2$, say right side.
      In this case the same argument as above works.
     \item Vertex $v_1$ lies on, say, the left side of $P_2,P'_2$, and $v_4$ on their
      right side. Since $\leftm{P_2},P_2$ are of different parities,
      let $F$ be the odd list superface formed by $\leftm{P_2},P_2$, with 
      $Q_b=\leftm{P_2}[v_2,b],Q_e=\leftm{P_2}[e,v_3]$ as described in the lemma
      above. 
      Then consider the paths $\leftm{P_1}, \leftm{P_2}[v_2,b], \leftm{P_2}[e,v_3],
      P_3$ and $F$. By claim~\ref{claim:leftm_paths}, $\leftm{P_1}$ is disjoint
      from both $\leftm{P_2},P_2,P_3$ and $\leftm{P_2}$ is disjoint from $P_3$. Now we
      can proceed as in above argument and satisfy conditions of the lemma.
    \end{enumerate}
  \item Case 3. $P_3,P'_3$ are paths of different parities. Since $v_4$ need not lie on 
    the outer face, $\leftm{P_3}$ or $\rightm{P_3}$ need not have the properties we
    showed in above lemma. Let $F$
    be an odd list superface formed by $P_3,P'_3$, with $b,e$ as beginning, 
    ending vertices, and $Q_b,Q_e$ paths from $v_3$ to $b$ and $e$ to $v_4$ as
    described by lemma $2$ of \cite{Nedev99} (Note that each of $Q_b,Q_e$ here
    can have edged from both $P_3,P'_3$). There are two subcases (others are 
    symmetrical).
    \begin{enumerate}
     \item Vertices $v_3,v_4$ lie on right side of $P_1,P'_1$ and also
      on the right side of $P_2,P'_2$. In this case consider the paths
      $\leftm{P_1}, \leftm{P_2}$. They are both mutually disjoint as well as
      disjoint from $F, Q_b, Q_e$ by claim~\ref{claim:leftm_paths}. Then we can 
      proceed as above.
     \item Vertices $v_3,v_4$ lie on right side of $P_1,P'_1$ and on left side 
      of $P_2,P'_2$. Then consider the paths $\leftm{P_1}, \rightm{P_2}$. They
      are both mutually disjoint and also disjoint from $F,Q_b,Q_e$ by 
      claim~\ref{claim:leftm_paths}. Now we can again proceed as above.
    \end{enumerate}
   \end{enumerate}
  Hence the claim holds.
 \end{proof}
\begin{minipage}{\textwidth}
 \begin{minipage}{\textwidth}
  \begin{minipage}{0.4\textwidth}
     \includegraphics[width=\textwidth]{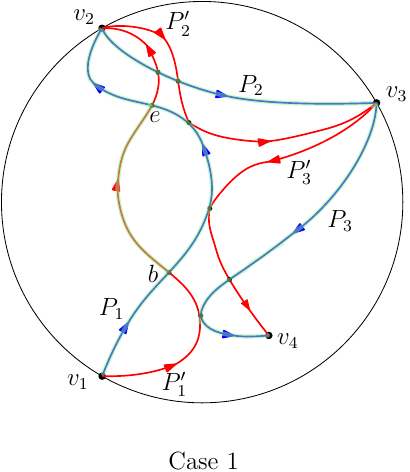}
      \label{fig:first}
  \end{minipage}
  \hfill
  \begin{minipage}{0.4\textwidth}    
     \includegraphics[width=\textwidth]{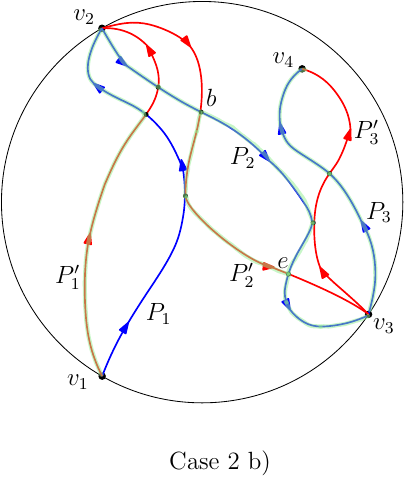}
      \label{fig:second}
  \end{minipage}
  \hfill
  \end{minipage}
  \begin{minipage}{0.4\textwidth}    
     \includegraphics[width=\textwidth]{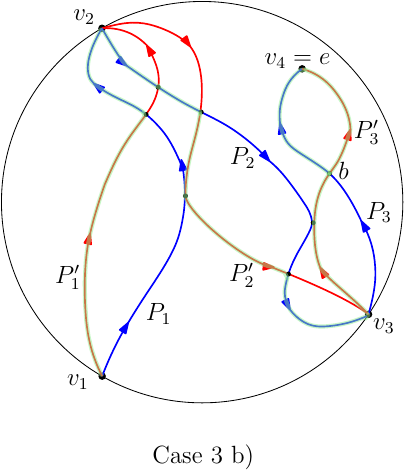}
      \label{fig:third}
  \end{minipage}
 \end{minipage}
 \captionof{figure}{Some cases of~\cref{lem:3dpp_olsf}. The 
   red colored paths are $P'_1,P'_2$, and the blue colored
    paths are $P_1,P_2$. The paths highlighted in green denote the odd list
  superface and the segments $Q_b,Q_e,P'_2$ that we get in lemma~\ref{lem:3dpp_olsf}.}
 \label{fig:nedev_cases}
\end{proof}
\subsection{Proof of lemma~\ref{lem:3dpp}, bounded treewidth case}\label{app:3dpp_btw}
We restate the lemma here and give the proof.
\threeDpp*
\begin{proof}{\emph{(Of Part 1)}}\\
Courcelle's theorem (\cite{COURCELLE}) states that every graph property definable in the monadic 
second-order logic of graphs can be decided in linear time on graphs of bounded treewidth.
The vocabulary of our logic consists of a unary relation symbol $V$ for the set of vertices,
a binary relation symbol $E$ for the set of edges,
and some constant symbols, like $s,t$ used to denote source, destination vertices.
The universe of discourse consists of vertices.
We use the fragment of monadic second order logic called $\sf{MSO_2}$ which allows quantification
over sets of vertices and edges but not more complex relations.
We use variables $A,B,P,P_1,P_2$ to denote sets of vertices, and variable $M$ to denote
set of edges.
For brevity, we use some shorthand notation for phrases like $\exists !x$ for
`there exists a unique $x$', $x \in P$ for `a vertex $x$ which is in vertex set $P$',
$x \in M$, for `a vertex $x$ which is in the set of vertices consisting of end points
of the edges in set $M$.' It is well known that these phrases can be easily expressed in
$\sf{MSO_2}$. We also use $V(M)$ to denote the set of vertices consisting of all end
points of the edges in $M$.
Consider the formula, $\phidisj{P_1}{P_2}$ which is true iff the sets $P_1,P_2$
are disjoint. 
We can write it in MSO as:
\begin{align*}
 \phidisj{P_1}{P_2} : \forall v, v\in P_1 \Rightarrow v \notin P_2
\end{align*}
We define a formula $\phipart{P}{A}{B}$, which is true iff the vertex sets $A,B$ form
a partition of the vertex set $P$, as:
\begin{align*}
 \phipart{P}{A}{B} : (\forall v(v\in P\Leftrightarrow (v\in A\vee v\in B))) \wedge (\phidisj{A}{B})
\end{align*}

The definition $\phipart{P}{A}{B}$ can be tweaked to get $\phipart{M}{A}{B}$ which is true
iff vertex sets $A,B$ for a partition of the vertex consisting of endpoints of edge set $M$.

Now we define a formula $\phieven{P}$ which is true iff the graph induced on vertex set $P$ has
a path from $s$ to $t$ of total length even. 
It is shown in\cite{THstack} that this can be expressed in $\sf{MSO_2}$. 
The idea is to express that there exists a set of edges $M$ in the graph induced by
$P$, whose vertices can be partitioned into vertex sets $(A,B)$, 
such that: 
\begin{itemize}
  \item Every edge of $M$ has one end point in $A$, and other in $B$. Both $s,t$ lie in $A$.  
  \item $s$ has only one neighbour in $M$, an out-neighbour, and $t$ has only one
        neighbour in $M$, an in-neighbour.
  \item Every other vertex of $M$ has exactly one out-neighbour and one in-neighbour.
\end{itemize}
It is clear that the graph induced by 
set of edges $M$ in $P$ must be a collection of disjoint cycles and an $s$-$t$ path. Since the graph
induced by $M$ is also bipartite, with $s,t$ lying in the same partition, it follows that the disjoint
cycles, as well as the $s$-$t$ path must be of even length.

Thus we can write the formula
for $\phieven{P}$ as:
\begin{equation*}
\begin{split}
  \exists &M ( (V(M)\subseteq P), \exists A,B ( \phipart{M}{A}{B} \wedge \\ 
     & \hspace{3mm} ( ((u,v)\in M)\Rightarrow ((u\in A \wedge v\in B)
   \vee(u\in B\wedge v\in A))) \wedge (s,t\in A)) \hspace{2mm} \wedge \\
  & ((\exists v (s,v)\in M)\wedge (\nexists v (v,s)\in M) 
   \wedge(\exists v ((v,t)\in M)) \wedge (\nexists v ((t,v)\in M))  ) \hspace{2mm} \wedge \\
 & (\forall u\in M, (\exists !u \in M, (u,v)\in M) \wedge (\exists !v \in M, (v,u)\in M)))
\end{split}
\end{equation*}
We can similarly define $\phiodd{P}$ which is true iff graph induced on $P$ 
has an $s$-$t$ path of odd length.
With $\phieven{},\phiodd{}$ and $\phidisj{}{}$, we can easily define a 
formula $\phidppeven$, which is true
iff the graph has two disjoint paths from $s_1$ to $t_1$, and from $s_2$ to $t_2$, with total parity
even as follows (we skip argument for brevity):
\begin{align*}
 \phidppeven  : \exists P_1,P_2 (\phidisj{P_1}{P_2} 
   \wedge ((\phieven{P_1}\wedge \phieven{P_2})& \;\vee \\
 ( & \phiodd{P_1}\wedge \phiodd{P_2})))
\end{align*}
We can define a similar formula as above for the case when total parity must be odd, as well as extend
it for three disjoint paths with total parity constraints.
\end{proof}
\subsection{Reducing even path and disjoint path with parity search to decision}\label{app:search_to_dec}
We now give a reduction from search version of even path problem to decision
version. 
Our algorithm will output the $s$-$t$ even path which is lexicographically 
least according to vertex indices.
A path $P_1$ is lexicographically lesser then $P_2$ if the among the first vertices
in $P_1,P_2$ where they diverge (assuming they start at the same vertex), the one in 
$P_1$ has a lesser index than the one in $P_2$.

We give an (unoptimized) algorithm below.

\begin{algorithm}[H]
\caption{Routine to find a path of parity $p$ between given vertices}
\label{parity_path_search}
\begin{algorithmic}[1]
\Procedure{\textsc{findParityPath}}{\textrm{$G,s,t,p$}}
 \If{s==t}
  \State Return <>;
 \EndIf\\ 
 \For{$v_{out}=\textsc{Each out neighbour of s in increasing index order}$}
   \If{$\textsc{searchParityPath}(G-\{s\},v_{out},t,1-p) == True$}
    \State Return $\langle s,v_{out} \rangle + \textsc{findParityPath}(G-\{s\},v_{out},t,1-p)$
   \EndIf
 \EndFor\\
 \State Return NULL\\
  
\EndProcedure
\end{algorithmic}
\end{algorithm}
The correctness of the algorithm can be proved by inducting on length of the unique
lexicographically least path of parity from $s$ to $t$.
The algorithm can be optimized to makes $\calO(n)$ queries to $\textsc{searchParityPath}$ routine,
and so the time complexity is $\calO(n.T(n))$ where $T(n)$ is the complexity of
decision version of parity path in $G$.

The same algorithm can be extended to find disjoint paths between $(s_i,t_i)$ with total
parity even with an oracle to decision version of the same problem. In that case, we will
output the lexicographically least instance of disjoint paths with total parity even, which
consists of solution obtained by  considering solutions with lex-least $(s_1,t_1)$ path, 
and then amongst those, the solutions with lex-least $(s_2,t_2)$ path and so on. Hence we can
also find disjoint paths with parity in time $\calO(n.T(n))$ where $T(n)$ is the complexity 
of decision version of disjoint paths with parity in $G$.
\subsection{Proof of lemma~\ref{lem:par_mim_net}}\label{app:par_mim_net_proof}
We restate the lemma here:
\abc*
\begin{proof}
We will now formalise the proof idea using some definitions and lemmas.
Let $\parConf$ denote the parity configuration of $\lnode$
with respect to terminals $\{v_1,v_2,v_3\}$. We will throughout assume that variables $i,j,k$ 
take distinct values in $\{1,2,3\}$. We will generally use $x_{ij},x_{ij}+1$, where
$x_{ij} \in \{0,1\}$ and addition is modulo $2$, to denote entries of sets 
$\dir{\parConf}{v_i}{v_j}, \via{\parConf}{v_i}{v_k}{v_j}$. We will refer to the 
entries in the \emph{Dir} sets as direct set entries. 
Given two parity configurations $\parConf, \parConf'$, we say $\parConf'$ is a
\emph{sub-configuration} of $\parConf$, if all \emph{Dir,Via} sets of $\parConf'$ are
subsets of the corresponding \emph{Dir,Via} sets of $\parConf$, i.e. 
$\dir{\parConf'}{v_i}{v_j} \subseteq \dir{\parConf}{v_i}{v_j}$, and
$\via{\parConf'}{v_i}{v_k}{v_j} \subseteq \via{\parConf}{v_i}{v_k}{v_j}, \forall i,j,k$.
We denote this as $\parConf' \subseteq_c \parConf$.

We now define \emph{bad pairs} of $\parConf$.

\begin{definition}
  Given the parity configuration $\parConf$, a pair of entries,
  $x_{ij} \in \dir{\parConf}{v_i}{v_j}$ and $x_{jk} \in \dir{\parConf}{v_j}{v_k}$, 
  are called a \emph{bad pair} if $x_{ij}+x_{jk} \notin \left(\dir{\parConf}{v_i}{v_k} 
  \bigcup \via{\parConf}{v_i}{v_j}{v_k}\right )$. A direct set entry $x_{ij}$ is called a 
  \emph{bad} entry
  if it is part of at least one bad pair in $\parConf$. We call direct set entries that
  are not part of \emph{any} bad pair as \emph{good} entries.
  We use the phrase \emph{bad pairs between} $\dir{\parConf}{v_i}{v_j}, 
  \dir{\parConf}{v_j}{v_k}$, to refer to the bad pairs formed by 
  entries of $\dir{\parConf}{v_i}{v_j},\dir{\parConf}{v_j}{v_k}$
\end{definition}

Next we define the \emph{bad kernel}, $\bk{\parConf}$  of $\parConf$, which is the 
sub-configuration of $\parConf$ consisting of all the bad entries of 
$\parConf$. Its closure, $\bkc{\parConf}$ adds a minimal number of entries to 
make it realisable. 
\begin{definition}
  Given a parity configuration, $\parConf$, its \emph{bad kernel}, denoted by $\bk{\parConf}$, 
 is the sub-configuration defined as:
 \begin{itemize}
   \item For every $i,j$ and $x_{ij} \in \{0,1\}$,  $x_{ij} \in \dir{\parConfBad}{v_i}{v_j}$ iff 
    $x_{ij}$ is a bad entry in $\dir{\parConf}{v_i}{v_j}$.
\end{itemize} 
  The \emph{closure} of $\parConfBad$, denoted by $\bkc{\parConf}$, is obtained from
  $\parConfBad$ by augmenting it as follows:
  \begin{enumerate}
   \item For every bad pair $(x_{ij},x_{jk})$ in $\parConfBad$, 
    if $x_{ij}+x_{jk}+1 \notin \dir{\parConfBad}{v_i}{v_k}$, then add $x_{ij}+x_{jk}+1$
    to $\dir{\parConfBad}{v_i}{v_k}$.
   \item For every good pair $(x_{ij},x_{jk})$ in $\parConfBad$, 
    if $x_{ij}+x_{jk} \notin \dir{\parConfBad}{v_i}{v_k}$, then add $x_{ij}+x_{jk}$
    to $\via{\parConfBad}{v_i}{v_j}{v_k}$.
  \end{enumerate}
\end{definition}

The following is a simple claim:
\begin{claim}
  Let $\parConf$ be a realizable parity configuration. Let $\bk{\parConf}$ be the bad kernel
  of $\parConf$, and $\bkc{\parConf}$, its closure. 
  Then $\bk{\parConf} \subc \bkc{\parConf} \subc \parConf$.
\end{claim}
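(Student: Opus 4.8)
The plan is to establish the two containments in turn, the bulk of the work being that every entry the closure introduces already lies in $\parConf$. The inclusion $\bk{\parConf}\subc\bkc{\parConf}$ is immediate: passing from $\bk{\parConf}$ to $\bkc{\parConf}$ only \emph{adds} elements to the \emph{Dir} and \emph{Via} sets, so each such set of $\bk{\parConf}$ is contained in the corresponding one of $\bkc{\parConf}$. For $\bkc{\parConf}\subc\parConf$, note first that $\bk{\parConf}\subc\parConf$ directly from the definition: $\dir{\parConfBad}{v_i}{v_j}$ is by construction a set of bad entries of $\dir{\parConf}{v_i}{v_j}$, hence a subset of it, and the \emph{Via} sets of $\bk{\parConf}$ are empty. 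It therefore suffices to show, keeping ``current $\parConfBad\subc\parConf$'' as an invariant throughout the closure, that each entry added by the two augmentation rules lies in the corresponding set of $\parConf$. To this end fix a graph $\lnode$ with terminals $v_1,v_2,v_3$ realizing $\parConf$.

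Rule~1 (additions to \emph{Dir} from a bad pair $(x_{ij},x_{jk})$): the target is $x_{ij}+x_{jk}+1\in\dir{\parConf}{v_i}{v_k}$. Take in $\lnode$ a $v_i$-$v_j$ path $P$ of parity $x_{ij}$ inside $\lnode-v_k$ and a $v_j$-$v_k$ path $P'$ of parity $x_{jk}$ inside $\lnode-v_i$. If $P$ and $P'$ met only at $v_j$, their concatenation would be a simple $v_i$-$v_k$ path through $v_j$ of parity $x_{ij}+x_{jk}$, putting this parity into $\dir{\parConf}{v_i}{v_k}\cup\via{\parConf}{v_i}{v_j}{v_k}$ and contradicting badness; so they meet at some vertex $u\neq v_j$, which we take to be the first vertex of $P$ (from $v_i$) lying on $P'$. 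Then $u$ precedes $v_j$ on $P$ and follows it on $P'$, so $Q:=P[v_i,u]\cdot P'[u,v_k]$ is a simple $v_i$-$v_k$ path avoiding $v_j$, and its parity lies in $\dir{\parConf}{v_i}{v_k}$. Writing $C$ for the closed walk $P[u,v_j]\cdot P'[v_j,u]$ (of length $\geq 1$), one gets $|Q|\equiv|P|+|P'|+|C|\equiv x_{ij}+x_{jk}+|C|\pmod 2$, and since $x_{ij}+x_{jk}\notin\dir{\parConf}{v_i}{v_k}$ we must have $|Q|\equiv x_{ij}+x_{jk}+1$; hence $x_{ij}+x_{jk}+1\in\dir{\parConf}{v_i}{v_k}$, with $C$ playing the role of the parity ``switch''.

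Rule~2 (additions to \emph{Via} from a good pair $(x_{ij},x_{jk})$ of $\parConfBad$ with $x_{ij}+x_{jk}\notin\dir{\parConfBad}{v_i}{v_k}$): the target is $x_{ij}+x_{jk}\in\via{\parConf}{v_i}{v_j}{v_k}$. By goodness $x_{ij}+x_{jk}\in\dir{\parConf}{v_i}{v_k}\cup\via{\parConf}{v_i}{v_j}{v_k}$, and if it lands in $\via{\parConf}{v_i}{v_j}{v_k}$ we are done. The work is to rule out $x_{ij}+x_{jk}\in\dir{\parConf}{v_i}{v_k}$, and the natural attempt is to show Rule~1 has already forced $x_{ij}+x_{jk}$ into $\dir{\parConfBad}{v_i}{v_k}$, contradicting the guard: since $x_{ij},x_{jk}$ are bad entries, each participates in a bad pair, and whenever such a pair has $x_{ij}$ as its left entry or $x_{jk}$ as its right entry, the relation $x_{ij}+x_{jk}\in\dir{\parConf}{v_i}{v_k}$ pins the partner to $x_{jk}+1$ or $x_{ij}+1$ respectively, producing a bad pair inside $\parConfBad$ whose entries sum to $x_{ij}+x_{jk}+1$, so Rule~1 does add $x_{ij}+x_{jk}$ to $\dir{\parConfBad}{v_i}{v_k}$.

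The main obstacle is the leftover case: both $x_{ij}$ and $x_{jk}$ are bad \emph{only} through bad pairs in which $x_{ij}$ sits on the right ($(r,x_{ij})$ with $r\in\dir{\parConf}{v_k}{v_i}$) and $x_{jk}$ on the left ($(x_{jk},s)$ with $s\in\dir{\parConf}{v_k}{v_i}$), while $x_{ij}+x_{jk}$ is nonetheless a \emph{good} entry of $\dir{\parConf}{v_i}{v_k}$. Here the bookkeeping above fails and one must revisit the realizer $\lnode$: the pairs $(r,x_{ij})$ and $(x_{jk},s)$ each force, via the switch argument of Rule~1, direct paths certifying entries of $\dir{\parConf}{v_k}{v_j}$ and $\dir{\parConf}{v_j}{v_i}$, and tracing how these cross the realizing paths for $x_{ij}$, $x_{jk}$ and $x_{ij}+x_{jk}$ should produce either a via-path of parity $x_{ij}+x_{jk}$ (so that parity lies in $\via{\parConf}{v_i}{v_j}{v_k}$ after all) or a fresh bad pair witnessing that $x_{ij}+x_{jk}$ is itself bad, contradicting goodness; this is exactly the flavour of the case analysis carried out in the remainder of the proof of this lemma, which I would invoke to finish. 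Finally one checks that re-triggering the rules on the enlarged $\parConfBad$ preserves the invariant, since every \emph{Dir} entry added by Rule~1 lies in $\parConf$ and the \emph{Via} analysis above applies equally to pairs involving the new entries, so the process terminates inside $\parConf$.
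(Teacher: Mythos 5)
Your treatment of the first containment and of Rule~1 is correct, and in fact more careful than the paper's own argument at that point: the paper only notes that a realizer of $\parConf$ must contain some $v_i$--$v_k$ path whose parity, by badness of the pair, has to be $x_{ij}+x_{jk}+1$; your surgery on $P$ and $P'$ (first vertex of $P$ lying on $P'$) is exactly what is needed to see that this parity is witnessed by a path avoiding $v_j$, i.e.\ that it lies in $\dir{\parConf}{v_i}{v_k}$ itself and not merely in $\dir{\parConf}{v_i}{v_k}\cup\via{\parConf}{v_i}{v_j}{v_k}$.

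The genuine gap is in Rule~2, and you flag it yourself. Your bookkeeping correctly covers the case where $x_{ij}$ or $x_{jk}$ is bad through a pair lying between $\dir{\parConf}{v_i}{v_j}$ and $\dir{\parConf}{v_j}{v_k}$ (then step~1 already places $x_{ij}+x_{jk}$ in $\dir{\parConfBad}{v_i}{v_k}$ and the guard blocks the addition), but in the leftover case --- both entries bad only through pairs with $\dir{\parConf}{v_k}{v_i}$, while $x_{ij}+x_{jk}$ sits in $\dir{\parConf}{v_i}{v_k}$ as a good entry --- you give no argument. Deferring to ``the case analysis carried out in the remainder of the proof of this lemma'' does not work: that analysis enumerates bad kernels and exhibits gadgets realizing their closures; it says nothing about whether a parity added to $\via{\bkc{\parConf}}{v_i}{v_j}{v_k}$ in step~2 belongs to $\via{\parConf}{v_i}{v_j}{v_k}$ of the original configuration, which is what $\bkc{\parConf}\subc\parConf$ literally demands (recall that in a realizable configuration the \emph{Via} set excludes every parity already in the corresponding \emph{Dir} set, so your leftover case is precisely the scenario in which the containment of the \emph{Via} sets would fail). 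For comparison, the paper's own proof of the step-2 case is just the same realizability one-liner as step~1 (``similar argument''): goodness of the pair guarantees that $x_{ij}+x_{jk}$ is realizable as a $v_i$--$v_k$ path, i.e.\ membership in the union $\dir{\parConf}{v_i}{v_k}\cup\via{\parConf}{v_i}{v_j}{v_k}$, which is also all that the subsequent lemma (extending a mimicking network of $\bkc{\parConf}$ to one of $\parConf$) actually uses; it does not carry out the finer Dir-versus-Via placement that your leftover case isolates. So either settle for the union-level statement (immediate from goodness, as in your opening sentence of Rule~2) or genuinely resolve the leftover case; as written, your proposal does neither, and hence does not establish the claim in the stated form.
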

\begin{proof}
  It is clear from definition of $\bkc{\parConf}$ that $\bk{\parConf} \subc 
  \bkc{\parConf}$.  
  Suppose $x_{ij},x_{jk}$ form a bad pair in $\parConf$, and hence are present in
  $\bk{\parConf}$. Since, $\parConf$ is realizable, there must a path from $v_i$ to
  $v_k$ in any graph with parity configuration $\parConf$. Since by definition of a bad pair,
  the parity of that path cannot be $x_{ij}+ x_{jk}$, it must be $x_{ij}+x_{jk}+1$.
  Therefore any entry added in step $1$ of construction of $\bkc{\parConf}$ must also
  be present in the corresponding set in $\parConf$.
  Similar argument also holds for entries added in step $2$ of construction of $\bkc{\parConf}$.
  Therefore the claim holds.
\end{proof}

We show that in order to find a parity mimicking network of any realizable parity configuration that
satisfies the required planarity conditions, it 
is sufficient to give a parity mimicking network obeying those conditions for 
the \emph{closure} of its \emph{bad kernel}. 

\begin{lemma}
 Let $\parConf$ be a parity configuration with respect to terminals 
 $\{v_1,v_2,v_3\}$ and let $\bkc{\parConf}$ denote the closure of the bad kernel
 of $\parConf$.  Suppose $L''$ is a planar parity mimicking network of the
 parity configuration $\bkc{\parConf}$, with terminals lying on outer face. We can 
 construct a planar parity mimicking network $L'$ for configuration $\parConf$, with terminals 
 lying on outer face, by adding edges to $L''$ using the following iterative operation:
 \begin{itemize}
  \item For every entry $x_{ij} \in \dir{\parConf}{v_i}{v_j}$, if there is not
  a direct path in $L''$ from $v_i$ to $v_j$ of parity $x_{ij}$ already, then
  add a path of parity $x_{ij}$ (length one or two) from $v_i$ to $v_j$, disjoint
  from all currently existing paths in the network.
 \end{itemize}
\end{lemma}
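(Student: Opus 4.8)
The plan is to show that the graph $L'$ produced by the stated iterative operation (i) is planar with $v_1,v_2,v_3$ on a common face and (ii) realizes $\parConf$, i.e.\ $\dir{L'}{v_i}{v_j}=\dir{\parConf}{v_i}{v_j}$ and $\via{L'}{v_i}{v_k}{v_j}=\via{\parConf}{v_i}{v_k}{v_j}$ for all choices of distinct $i,j,k$. The one structural observation that will drive everything is that each added path has length one or two, so a length-two added path contributes a single internal vertex $m$ of degree $2$ whose only neighbours are the two terminals it joins; consequently an added path can never be \emph{spliced} into a longer simple path, since any simple path visiting $m$ must traverse the whole added path from terminal to terminal. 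I will combine this with the already-proved fact $\bkc{\parConf}\subc\parConf$ and with the bad/good dichotomy of direct entries.

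Planarity is the easy part. I would embed $L''$ with the face carrying $v_1,v_2,v_3$ as the outer face; its boundary walk meets the three terminals in some cyclic order. For each unordered pair $\{v_i,v_j\}$ I would draw all added $v_i$-$v_j$ paths (both orientations, if entries of both $\dir{\parConf}{v_i}{v_j}$ and $\dir{\parConf}{v_j}{v_i}$ were added) as nested non-crossing arcs just outside the boundary arc joining $v_i$ to $v_j$ that avoids the third terminal; the three families do not interfere, and the region bounded by the innermost arc of each family is a face of $L'$ incident to $v_1,v_2,v_3$, which becomes the new outer face. For the direct sets, deleting $v_k$ removes every added path incident to $v_k$ and leaves $L''-v_k$ together with the added $v_i$-$v_j$ paths; by the no-splicing observation any $v_i$-$v_j$ path in $L'-v_k$ either lies in $L''-v_k$ (parity in $\dir{L''}{v_i}{v_j}=\dir{\bkc{\parConf}}{v_i}{v_j}$) or is one added $v_i$-$v_j$ path, and since the construction adds a $v_i$-$v_j$ path of parity $x_{ij}$ exactly when $x_{ij}\in\dir{\parConf}{v_i}{v_j}\setminus\dir{L''}{v_i}{v_j}$, we get $\dir{L'}{v_i}{v_j}=\dir{\bkc{\parConf}}{v_i}{v_j}\cup\dir{\parConf}{v_i}{v_j}=\dir{\parConf}{v_i}{v_j}$ using $\bkc{\parConf}\subc\parConf$.

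The via sets are the substance. A $v_i$-$v_j$ path through $v_k$ splits at $v_k$ into a $v_i$-$v_k$ piece avoiding $v_j$ and a $v_k$-$v_j$ piece avoiding $v_i$, and by no-splicing each piece either lies in $L''$ or is a single added path (which must join $v_i$ to $v_k$, resp.\ $v_k$ to $v_j$, since it avoids the third terminal). For $\via{L'}{v_i}{v_k}{v_j}\subseteq\via{\parConf}{v_i}{v_k}{v_j}$: if both pieces lie in $L''$ the path lies in $L''$, its parity $p$ satisfies $p\notin\dir{L''}{v_i}{v_j}$, so $p\in\via{L''}{v_i}{v_k}{v_j}=\via{\bkc{\parConf}}{v_i}{v_k}{v_j}\subseteq\via{\parConf}{v_i}{v_k}{v_j}$; otherwise some piece, say the $v_i$-$v_k$ one, is an added path of parity $a$, so $a\notin\dir{L''}{v_i}{v_k}=\dir{\bkc{\parConf}}{v_i}{v_k}$, hence $a$ is not a bad entry of $\dir{\parConf}{v_i}{v_k}$ (bad entries lie in the bad kernel, hence in $\bkc{\parConf}$), i.e.\ $a$ is a \emph{good} entry, so $(a,b)$ is not a bad pair, so $a+b\in\dir{\parConf}{v_i}{v_j}\cup\via{\parConf}{v_i}{v_k}{v_j}$, and $p=a+b\notin\dir{\parConf}{v_i}{v_j}$ forces $p\in\via{\parConf}{v_i}{v_k}{v_j}$. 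For the reverse inclusion, given $p\in\via{\parConf}{v_i}{v_k}{v_j}$ I would use realizability of $\parConf$ to fix a decomposition $p=a+b$ with $a\in\dir{\parConf}{v_i}{v_k}$, $b\in\dir{\parConf}{v_k}{v_j}$ coming from an actual via path: if $a\notin\dir{\bkc{\parConf}}{v_i}{v_k}$ the construction added a $v_i$-$v_k$ path of parity $a$ with a fresh internal vertex, which concatenates with any $v_k$-$v_j$ path of parity $b$ in $L'-v_i$ to give a simple via path of parity $p$ (simplicity is automatic since the fresh vertex is disjoint from everything), and symmetrically if $b\notin\dir{\bkc{\parConf}}{v_k}{v_j}$; in the remaining case $a\in\dir{\bkc{\parConf}}{v_i}{v_k}$, $b\in\dir{\bkc{\parConf}}{v_k}{v_j}$, and I would argue $p\in\via{\bkc{\parConf}}{v_i}{v_k}{v_j}$ (so $L''$, hence $L'$, already realizes it) because $(a,b)$ cannot be a bad pair of $\parConf$ (else $p=a+b\notin\via{\parConf}{v_i}{v_k}{v_j}$), so it is a good pair of entries present in $\bkc{\parConf}$, and since $p\notin\dir{\parConf}{v_i}{v_j}\supseteq\dir{\bkc{\parConf}}{v_i}{v_j}$, step~$2$ of the closure construction — applied to a fixpoint, so entries produced by step~$1$ are fed back in — places $p$ in $\via{\bkc{\parConf}}{v_i}{v_k}{v_j}$.

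I expect the main obstacle to be exactly that last case. It requires knowing that the \emph{closure} $\bkc{\parConf}$ is genuinely saturated: that iterating the two augmentation steps to a fixpoint captures every via-parity forced by a good pair of direct entries already present in $\bkc{\parConf}$, and — dually — that the iteration never creates a via entry outside $\parConf$ (which is where $\bkc{\parConf}\subc\parConf$ and the bad/good dichotomy must be re-examined across the iteration, not just for a single pass). One also has to keep careful track of the standing assumption that $\parConf$, and therefore $\bkc{\parConf}$, is realizable, so that the decomposition $p=a+b$ used in the reverse inclusion exists in the first place. Everything else reduces to degree-two bookkeeping about the short added paths.
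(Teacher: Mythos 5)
Your proposal follows the same route as the paper's proof for everything the paper actually argues: planarity is handled identically (the added length-one/two paths are drawn in the outer face shared by $v_1,v_2,v_3$), and your soundness argument --- every parity you add is a \emph{good} entry because all bad entries of $\parConf$ already sit in the direct sets of $\bkc{\parConf}$, hence the only new terminal-to-terminal parities an added path can create are already permitted in $\parConf$ --- is exactly the paper's argument, just carried out more carefully through your degree-two ``no-splicing'' observation about the fresh internal vertices. Where you diverge is that you also attempt the converse inclusions, i.e.\ that every entry of $\parConf$ (in particular every via entry) is actually \emph{realized} in $L'$; the paper's proof is silent on this direction and only checks that no unwanted parities are created. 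Your reduction of completeness to the single residual case, where the witnessing decomposition $p=a+b$ has $a\in\dir{\bkc{\parConf}}{v_i}{v_k}$ and $b\in\dir{\bkc{\parConf}}{v_k}{v_j}$, is correct, and the obstacle you flag there is genuine: under the paper's literal one-pass definition of the closure, step~$2$ ranges over pairs of bad-kernel entries, so a pair involving an entry introduced by step~$1$ need not contribute a via entry to $\bkc{\parConf}$, and then the hypothesis on $L''$ (which realizes $\bkc{\parConf}$ exactly, so in particular has no $v_i$--$v_j$ path of parity $p$ at all) gives you no via path of parity $p$ to inherit; your ``applied to a fixpoint'' reading is an assumption, not something the stated definition supplies. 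So for the portion of the lemma the paper proves you match it; for the completeness direction you go beyond the paper, and the saturation issue you isolate is a real loose end that neither your write-up nor the paper discharges --- it must come either from defining the closure as a fixpoint, from showing that in the residual case some alternative decomposition of $p$ always uses a freshly added path, or from the explicit networks in the case analysis realizing the needed via parities.
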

\begin{proof}
 By hypothesis, terminals $v_1,v_2,v_3$ all lie on the outer face of $L''$.
 It can be seen easily that we can repeatedly add direct paths from $v_i$
 to $v_j$ without intersecting others by drawing them on the outer face 
 (see~\cref{fig:bad_kernel_eg}). Hence $L'$ is planar with $v_1,v_2,v_3$ on the same face.
 As noted above, $\bkc{\parConf}$ is a sub-configuration of $\parConf$. 
 Since all paths of parity corresponding
 to bad entries of $\parConf$ have already been added 
 in $L''$, the remaining entries for which paths 
 are yet to be added are good entries. Now, at any step in the above procedure, if we 
 add a path of parity $x_{ij}$ from $v_i$ to $v_j$ disjoint from all existing paths in the network, the
 only possible extra terminal to terminal paths that can be formed are \emph{Via} paths from $v_i$
 to $v_k$ via $v_j$, of parity $x_{ij}+x_{jk}$, and from $v_k$ to $v_j$ via
 $v_i$, of parity $x_{ki}+x_{ij}$. By definition of a good entry,
 both $x_{ij}+x_{jk}, x_{ki}+x_{ij}$ must exist in $\left(\dir{\parConf}{v_i}{v_k} 
 \bigcup \via{\parConf}{v_i}{v_j}{v_k}\right )$ and $\left(\dir{\parConf}{v_k}{v_j} 
 \bigcup \via{\parConf}{v_k}{v_i}{v_j}\right )$ respectively. Therefore this does not create any
 paths of unwanted parities in $L'$, and hence we can safely
 construct $L'$ by this operation.
\end{proof}

Now all that remains to show is how to construct parity mimicking networks
for closures of all possible \emph{bad kernels}. 
For visual aid in figures, we will call the direct sets $\dir{\parConf}{v_1}{v_2},
\dir{\parConf}{v_2}{v_3},\dir{\parConf}{v_3}{v_1}$ as sets in \emph{upper row} and the sets 
$\dir{\parConf}{v_2}{v_1}$, $\dir{\parConf}{v_3}{v_2},$ 
$\dir{\parConf}{v_1}{v_3}$ as the sets in \emph{lower row}.
We will make a few observations regarding bad pairs of 
a realizable parity configuration $\parConf$ which follow easily from definition of
a \emph{bad pairs}. We observe that:
\begin{observation}\label{obs:bad_pairs_struct}
\begin{enumerate}
 \item There can be at most two bad pairs between $\dir{\parConf}{v_i}{v_j}$
  and $\dir{\parConf}{v_j}{v_k}$. This follows from the observation that at least one
  of $\{0,1\}$ must be present in \\ 
  $\left(\dir{\parConf}{v_i}{v_k} \bigcup \via{\parConf}{v_i}{v_j}{v_k} \right )$, 
  since there is some path from $v_i$ to $v_k$ if $\parConf$
  is realizable.
 \item If there are two bad pairs between $\dir{\parConf}{v_i}{v_j}$ and
  $\dir{\parConf}{v_j}{v_k}$, then each of $\dir{\parConf}{v_i}{v_j}$ and 
  $\dir{\parConf}{v_j}{v_k}$ has both $0,1$ as entries, and the bad pairs are disjoint.
  For example, if ($x_{ij},x_{jk}$) form a bad pair bewteen sets $\dir{\parConf}{v_i}{v_j}$, and 
  $\dir{\parConf}{v_j}{v_k}$, then the other bad pair between these sets, if 
  it exists, must be ($x_{ij}+1,x_{jk}+1$).
\item Suppose a direct set $\dir{\parConf}{v_i}{v_j}$ has both $0,1$ as entries. 
  Then there can be no bad pairs formed between 
  $\dir{\parConf}{v_i}{v_k}$ and $\dir{\parConf}{v_k}{v_j}$.
\item Bad pairs can be formed only between \emph{Dir}
  sets within upper row, or \emph{Dir} sets within lower row, not across.
\end{enumerate}
\end{observation}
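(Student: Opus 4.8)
The plan is to prove all four parts directly from the definition of a \emph{bad pair} together with the realizability of $\parConf$, reducing everything to a finite check over the four candidate pairs $(x_{ij},x_{jk}) \in \{0,1\}^2$. The single fact I would establish first and then reuse is this: since $\parConf$ is realizable, any graph realizing it has a $v_i$-to-$v_k$ path, whose parity $p$ therefore lies in $\dir{\parConf}{v_i}{v_k} \cup \via{\parConf}{v_i}{v_j}{v_k}$ --- if the path avoids $v_j$ it witnesses $p \in \dir{\parConf}{v_i}{v_k}$, and if it passes through $v_j$ then either $p$ is already in $\dir{\parConf}{v_i}{v_k}$, or, by the convention of definition~\ref{def:parity_config} that \emph{Via} sets drop parities present in the matching \emph{Dir} set, $p \in \via{\parConf}{v_i}{v_j}{v_k}$. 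Consequently at most one of the two values $0,1$ can be ``forbidden'', meaning absent from $\dir{\parConf}{v_i}{v_k} \cup \via{\parConf}{v_i}{v_j}{v_k}$, and a pair $(x_{ij},x_{jk})$ is bad precisely when $x_{ij}+x_{jk}$ equals that forbidden value.

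For part 1, I would observe that of the (at most) four candidate pairs $(x_{ij},x_{jk})$, exactly two have sum $0$ and exactly two have sum $1$; since at most one sum value is forbidden, at most two pairs are bad. For part 2, if two bad pairs exist they must share the same unique forbidden sum $s$, and the two pairs in $\{0,1\}^2$ with that sum are $\{(0,0),(1,1)\}$ when $s=0$ and $\{(0,1),(1,0)\}$ when $s=1$; in either case they are disjoint as pairs of entries and between them exhaust $\{0,1\}$ in each of $\dir{\parConf}{v_i}{v_j}$ and $\dir{\parConf}{v_j}{v_k}$, which forces both of these sets to contain both $0$ and $1$.

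Part 3 is then immediate: a bad pair between $\dir{\parConf}{v_i}{v_k}$ and $\dir{\parConf}{v_k}{v_j}$ would need a sum not lying in $\dir{\parConf}{v_i}{v_j} \cup \via{\parConf}{v_i}{v_k}{v_j}$, which is impossible once $\dir{\parConf}{v_i}{v_j}$ already contains both $0$ and $1$. For part 4, I would use that the indices $i,j,k$ of a bad pair are forced to be distinct, hence $\{v_i,v_j,v_k\}=\{v_1,v_2,v_3\}$, so the ``directed path'' $v_i \to v_j \to v_k$ underlying the pair is a cyclic rotation either of $v_1\to v_2\to v_3$ or of $v_1\to v_3\to v_2$; in the first case both \emph{Dir} sets involved lie in the upper row, in the second both lie in the lower row, and no admissible pair mixes the rows.

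The only place that needs genuine care --- and the step I would flag as the main, if modest, obstacle --- is the opening claim that a realizing graph's $v_i$-to-$v_k$ path deposits its parity into $\dir{\parConf}{v_i}{v_k} \cup \via{\parConf}{v_i}{v_j}{v_k}$: one has to handle the corner case where $v_j$ is a cut vertex separating $v_i$ from $v_k$ (so that every such path necessarily runs through $v_j$) and to invoke the \emph{Via}/\emph{Dir} bookkeeping convention correctly. Once that is in place, parts 1--4 are all finite case checks.
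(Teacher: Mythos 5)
Your proposal is correct and follows essentially the same route as the paper: the paper also derives all four parts directly from the definition of a bad pair, using the realizability observation that some $v_i$-to-$v_k$ path forces at least one of $\{0,1\}$ into $\dir{\parConf}{v_i}{v_k} \cup \via{\parConf}{v_i}{v_j}{v_k}$, so at most one sum value can be ``forbidden''. Your finite case checks for parts 1--4 (including the disjointness/rotation arguments and the cut-vertex corner case handled by the \emph{Via}/\emph{Dir} convention) just spell out what the paper leaves as following easily from the definition.
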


To enumerate on the bad kernels, we can adopt without loss of generality, the following two
conventions:
\begin{itemize}
  \item Number of bad pairs between upper row sets $\geq$ Number of bad pairs between lower row sets.
  \item Number of bad pairs between $\dir{\parConf}{v_1}{v_2},\dir{\parConf}{v_2}{v_3} \geq$ 
   Number of bad pairs between $\dir{\parConf}{v_2}{v_3},\dir{\parConf}{v_3}{v_1}\geq $ 
   Number of bad pairs between $\dir{\parConf}{v_3}{v_1},\dir{\parConf}{v_1}{v_2},$
  \end{itemize}    
  The other cases are handled by symmetry.
We make the following claim:
\begin{claim}
 Any realisable parity configuration $\parConf$ can have at most $6$ bad pairs. 
\end{claim}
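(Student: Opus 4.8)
The plan is to bound the number of bad pairs by a careful counting argument over the three pairs of consecutive direct sets, exploiting the structural constraints in Observation~\ref{obs:bad_pairs_struct}. By item~4 of the observation, every bad pair lives strictly within the upper row or strictly within the lower row, so it suffices to bound the number of bad pairs within each row separately and add. Within a single row there are three pairs of consecutive sets, say $(\dir{\parConf}{v_1}{v_2},\dir{\parConf}{v_2}{v_3})$, $(\dir{\parConf}{v_2}{v_3},\dir{\parConf}{v_3}{v_1})$, and $(\dir{\parConf}{v_3}{v_1},\dir{\parConf}{v_1}{v_2})$, and item~1 says each such pair contributes at most two bad pairs. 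So the trivial bound is $2\cdot 3 = 6$ per row, i.e.\ $12$ overall, and the work is to show the per-row count is actually at most $3$.

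First I would argue that in a single row the number of bad pairs is at most $3$. Suppose toward contradiction a row has $\geq 4$ bad pairs among its three consecutive pairs of sets. Then by item~1 (at most two per consecutive pair), at least two of the three consecutive pairs each carry two bad pairs; by item~2, whenever a consecutive pair $(\dir{\parConf}{v_i}{v_j},\dir{\parConf}{v_j}{v_k})$ carries two bad pairs, \emph{both} $\dir{\parConf}{v_i}{v_j}$ and $\dir{\parConf}{v_j}{v_k}$ contain $\{0,1\}$. So among the three sets in this row, at least two consecutive pairs being ``doubly bad'' forces all three of $\dir{\parConf}{v_1}{v_2},\dir{\parConf}{v_2}{v_3},\dir{\parConf}{v_3}{v_1}$ to equal $\{0,1\}$ (each of the three row-sets lies in at least one of the two doubly-bad consecutive pairs). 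But then item~3 applies: if $\dir{\parConf}{v_i}{v_j}=\{0,1\}$, there can be no bad pair between $\dir{\parConf}{v_i}{v_k}$ and $\dir{\parConf}{v_k}{v_j}$ — which are exactly the other two consecutive pairs of this row. Hence at most one consecutive pair of this row can carry any bad pair at all, contradicting that two of them are doubly bad. Therefore each row has at most $3$ bad pairs.

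Combining the two rows gives at most $3+3=6$ bad pairs for $\parConf$, which is the claim. I would also double-check the boundary of this argument: the case where exactly one consecutive pair in a row is doubly bad (contributing $2$) and we want to know whether the remaining two consecutive pairs can contribute $2$ more between them — here item~2 says the doubly-bad pair has both its sets equal to $\{0,1\}$, and then item~3 kills bad pairs between the other two consecutive pairs that share a vertex appropriately, so in fact a doubly-bad consecutive pair forces the other two consecutive pairs of its row to be bad-pair-free, giving the per-row count exactly $2$ in that scenario and at most $3$ overall (the $3$ arising only when each consecutive pair contributes at most one). The main obstacle is bookkeeping the index patterns in items 2 and 3 correctly — making sure that ``$\dir{\parConf}{v_i}{v_j}=\{0,1\}$ forbids bad pairs between $\dir{\parConf}{v_i}{v_k}$ and $\dir{\parConf}{v_k}{v_j}$'' lines up, for each choice of doubly-bad consecutive pair, with precisely the remaining two consecutive pairs of the same row; a small table over the three cyclic rotations of $(1,2,3)$ settles this. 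The conventions adopted just before the claim (ordering the rows and the consecutive pairs by number of bad pairs) further reduce the cases one must actually write out.
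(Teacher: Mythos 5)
Your proof has a genuine gap: the per-row bound of $3$ is false, and the step that derives it misapplies part~3 of observation~\ref{obs:bad_pairs_struct}. The two rows are the two cyclic orientations of the triangle, so if $\dir{\parConf}{v_i}{v_j}$ is an upper-row set, then the two sets $\dir{\parConf}{v_i}{v_k}$ and $\dir{\parConf}{v_k}{v_j}$ appearing in part~3 go around the triangle the other way and both lie in the \emph{lower} row (for instance, $\dir{\parConf}{v_1}{v_2}\supseteq\{0,1\}$ forbids bad pairs between the lower-row sets $\dir{\parConf}{v_1}{v_3}$ and $\dir{\parConf}{v_3}{v_2}$). Hence a doubly-bad consecutive pair in a row constrains the \emph{opposite} row, not ``the other two consecutive pairs of this row'' as you assert, and your contradiction never materializes: nothing in observation~\ref{obs:bad_pairs_struct} prevents all three upper-row consecutive pairs from being doubly bad simultaneously. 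Indeed this happens --- it is exactly the case $(6,0)$ of the appendix enumeration, for whose closure the paper exhibits a mimicking network --- so a single row can carry six bad pairs, and the intended route ``at most $3$ per row, hence $6$ in total'' cannot work.

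The correct accounting is inherently cross-row, which is what the paper's proof does. Assuming more than six bad pairs, the conventions place at least four in the upper row, hence two between $\dir{\parConf}{v_1}{v_2}$ and $\dir{\parConf}{v_2}{v_3}$; by part~2 both of these sets contain $\{0,1\}$, and part~3 then eliminates two of the three lower-row consecutive pairs. A short case analysis on $\dir{\parConf}{v_3}{v_1}$ finishes: if it has two bad entries, part~3 also kills the last lower-row pair, so all bad pairs lie in the upper row and there are at most six; if it has at most one bad entry, the two upper-row pairs involving it contribute at most one bad pair each (by the disjointness in part~2), so the upper row has at most four and the lower row at most two. Either way the total is at most six, a contradiction. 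Any repair of your write-up needs this interaction between the two rows rather than a row-by-row bound.
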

\begin{proof}
 Let the number of bad pairs in $\parConf$ be more than $6$. We can then assume using our
 conventions that the upper row sets have at least $4$ bad pairs, and two of them must be 
 between $\dir{\parConf}{v_1}{v_2},\dir{\parConf}{v_2}{v_3}$. This implies, by 
 parts $2$ and $3$ of observation~\ref{obs:bad_pairs_struct} above, that 
 $\dir{\parConf}{v_1}{v_2}$ and $\dir{\parConf}{v_2}{v_3}$ each
 have both $0,1$ as entries, and hence there cannot be any bad pairs between
 $\dir{\parConf}{v_1}{v_3}, \dir{\parConf}{v_3}{v_2}$ and between $\dir{\parConf}{v_2}{v_1},
 \dir{\parConf}{v_1}{v_3}$ in the lower row. Now there are two cases:
 \begin{itemize} 
  \item If $\dir{\parConf}{v_3}{v_1}$ has two bad entries, each forming bad pairs with
   other upper row sets, then there cannot be any bad pairs between the lower
   row sets. Since the total number of bad pairs cannot be more than $6$ in the
   upper row, this leads to a contradiction. 
 \item If $\dir{\parConf}{v_3}{v_1}$ has lesser than two bad entries,
   they can form at most two bad pairs. Then the total number
   of bad pairs in upper row is at most four, and in lower row at most two, which also 
   leads to a contradiction.
 \end{itemize}
   Hence total number of bad pairs in $\parConf$ cannot be more than $6$.
\end{proof}

This gives a bound on number of types of bad kernels we need to consider.
We can write the number of bad pairs of $\parConf$ as $(n_u,n_l)$, where $n_u$ is the
number of bad pairs in upper row and $n_l$ the number of bad pairs in the
lower row. Since $n_l\leq n_u\leq 6$ and $n_l+n_u\leq 6$, we can lexicographically
enumerate all cases from $(6,0)$ to $(1,0)$, and construct explicitly, the 
required mimicking networks for closure of each case.
We list the cases below starting from $(6,0)$ to $(1,1)$.
The cases lying in between $(6,0)$ and $(1,1)$ that are not drawn
are those which cannot occur as bad kernels. We give some examples
of such a cases, others have a similar argument.
Case remaining after $(1,1)$ is $(1,0)$, which is
shown in proof idea, so we do not draw it here.

In all of the figures below, the entries of the parity configuration
tables joined by the red lines denote bad pairs. In the corresponding
gadgets, if no parity expression is written beside an edge, then it
is a path of length one or two according to if it has a single or
a double arrow respectively. If a parity expression is written then
the length is according to the parity expression. The variables can
take values in $\{0,1\}.$
We only show $Via$ sets when required. Otherwise we assume them to be
empty by default.
The only case that is a bit tedious to verify is that of $(4,0)$, 
as shown in~\cref{fig:satanic}. 
For ease of verification, we have listed all the paths from $v_1$ to $v_3$
along with their lengths (counting single arrows as length $1$, double arrows
as length $2$.)
All of these must be of same parity in that particular case. Paths from $v_2$ to $v_1$
follow a symmetric pattern.
\begin{figure}[hbt!]
 \begin{minipage}[t]{\textwidth}
  \includegraphics[scale=0.6]{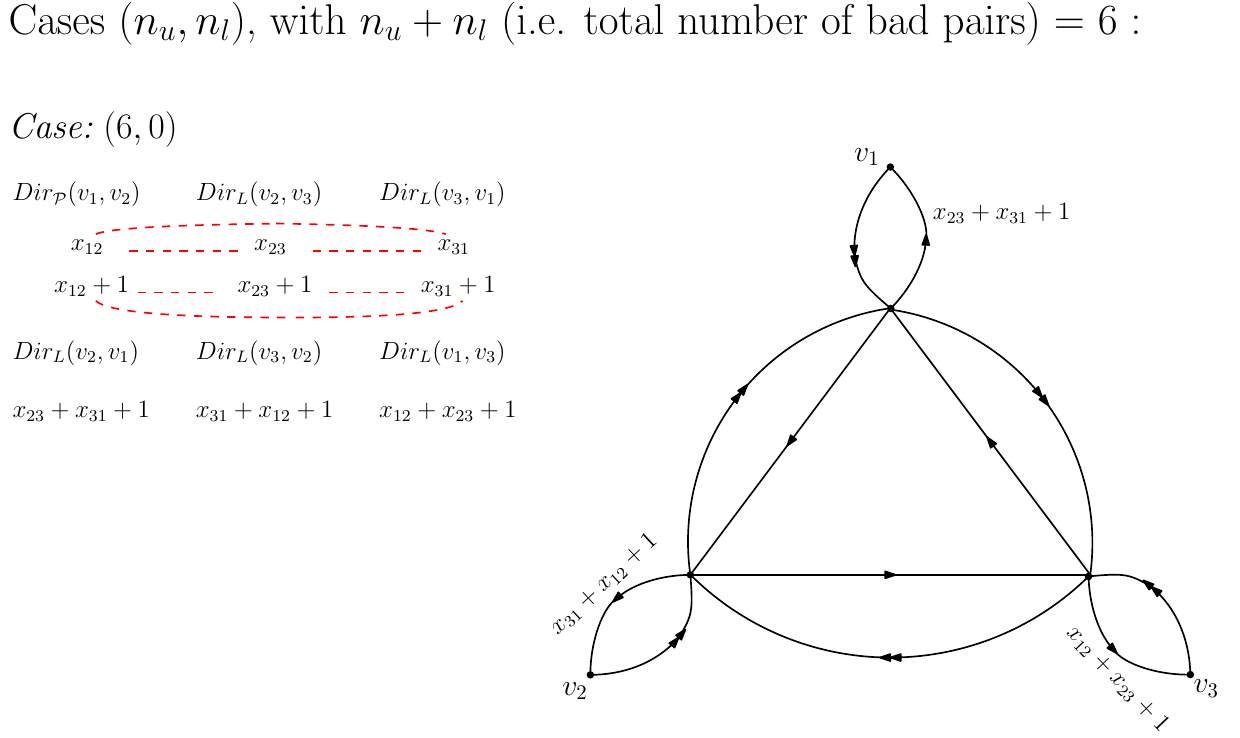}
 \end{minipage}
\end{figure}
\begin{figure}[hbt!]
 \begin{minipage}[t]{\textwidth}
  \includegraphics[scale=0.6]{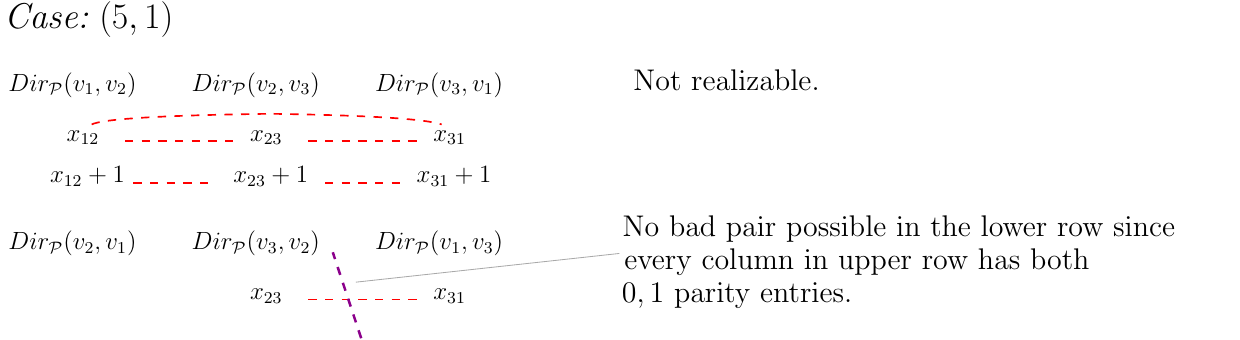}
 \end{minipage}
\end{figure}
\begin{figure}[hbt!]
 \begin{minipage}[t]{\textwidth}
  \includegraphics[scale=0.6]{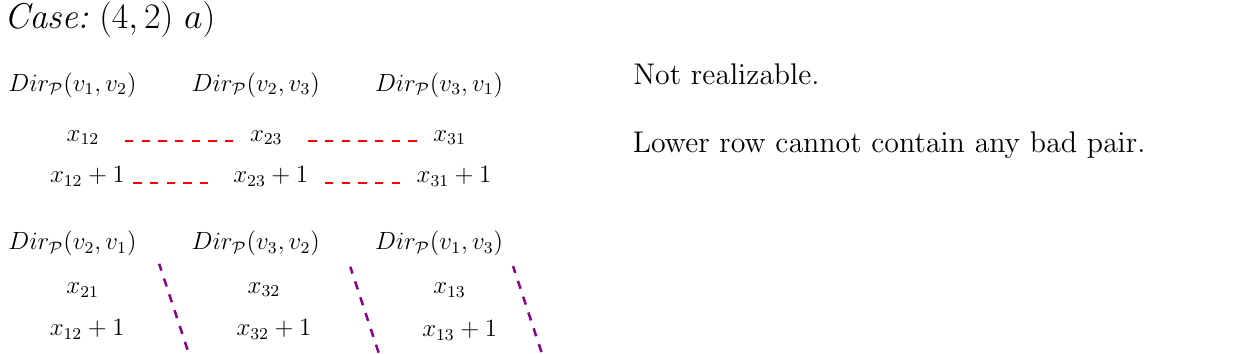}
 \end{minipage}
\end{figure}
\begin{figure}[hbt!]
 \begin{minipage}[t]{\textwidth}
  \includegraphics[scale=0.6]{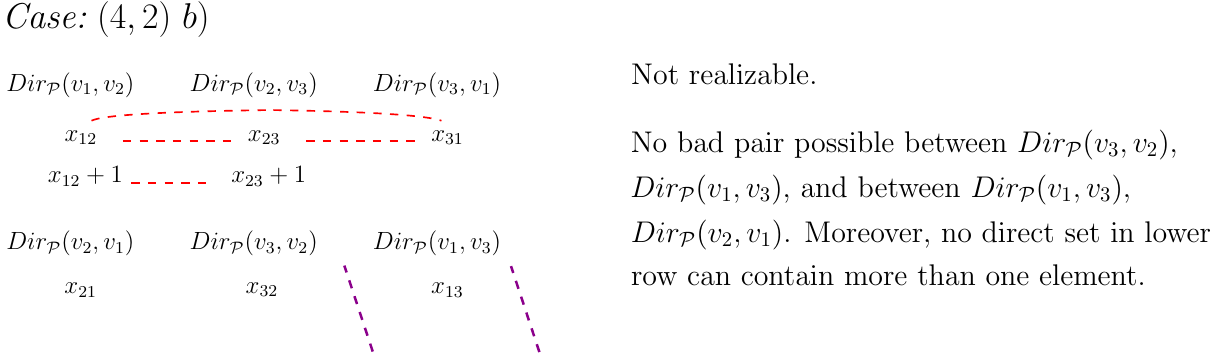}
 \end{minipage}
\end{figure}
\begin{figure}[hbt!]
 \begin{minipage}[t]{\textwidth}
  \includegraphics[scale=0.6]{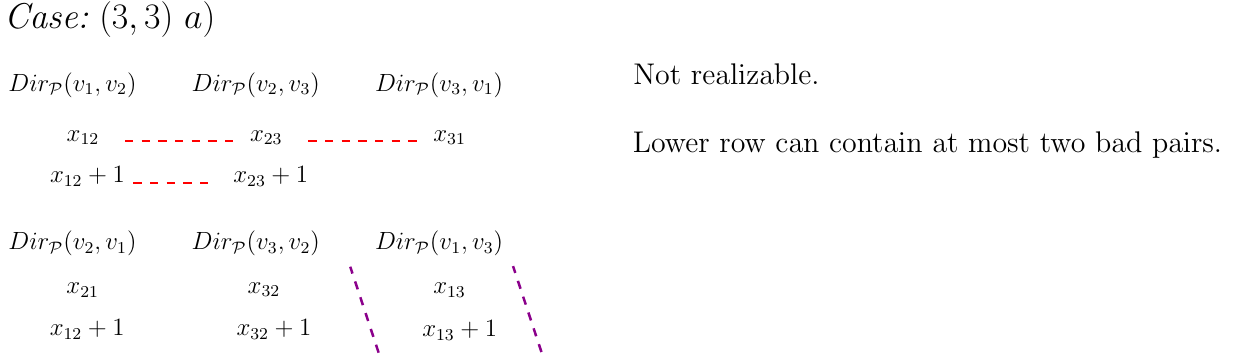}
 \end{minipage}
\end{figure}
\begin{figure}[hbt!]
 \begin{minipage}[t]{\textwidth}
  \includegraphics[scale=0.6]{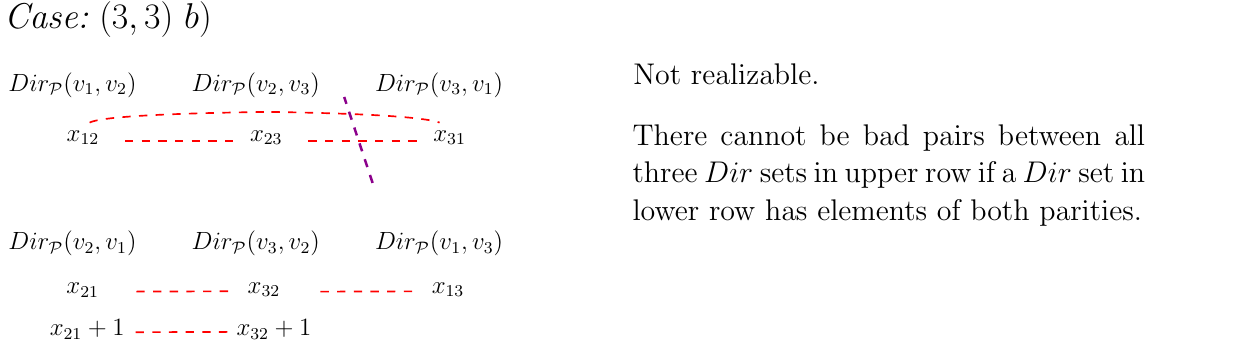}
 \end{minipage}
\end{figure}
\begin{figure}[hbt!]
 \begin{minipage}[t]{\textwidth}
  \includegraphics[scale=0.6]{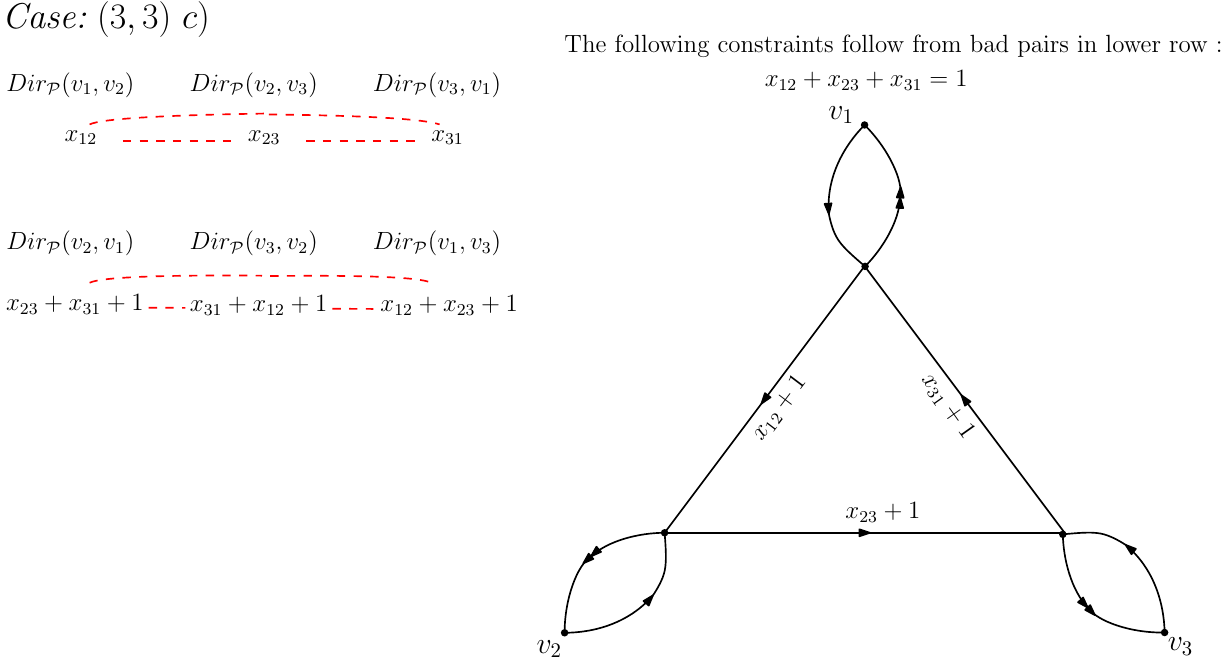}
 \end{minipage}
\end{figure}
\begin{figure}[hbt!]
 \begin{minipage}[t]{\textwidth}
  \includegraphics[scale=0.6]{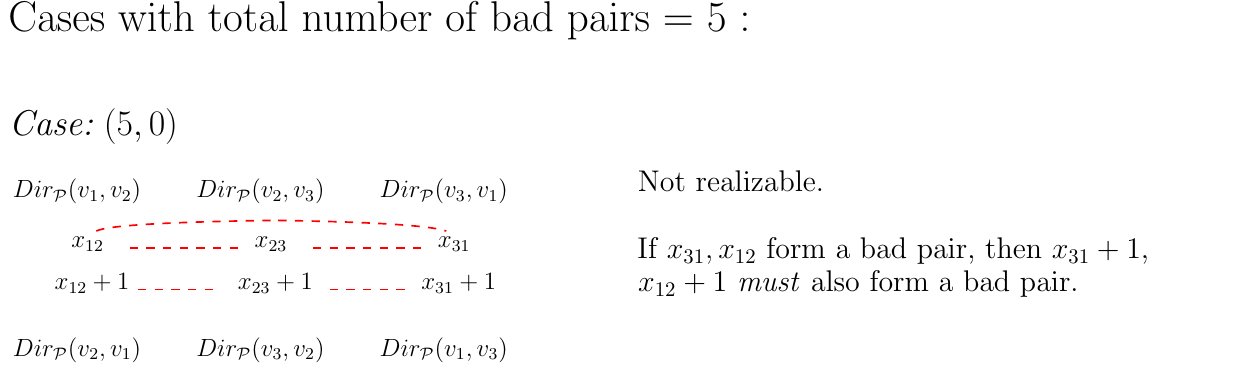}
 \end{minipage}
\end{figure}
\begin{figure}[hbt!]
 \begin{minipage}[t]{\textwidth}
  \includegraphics[scale=0.6]{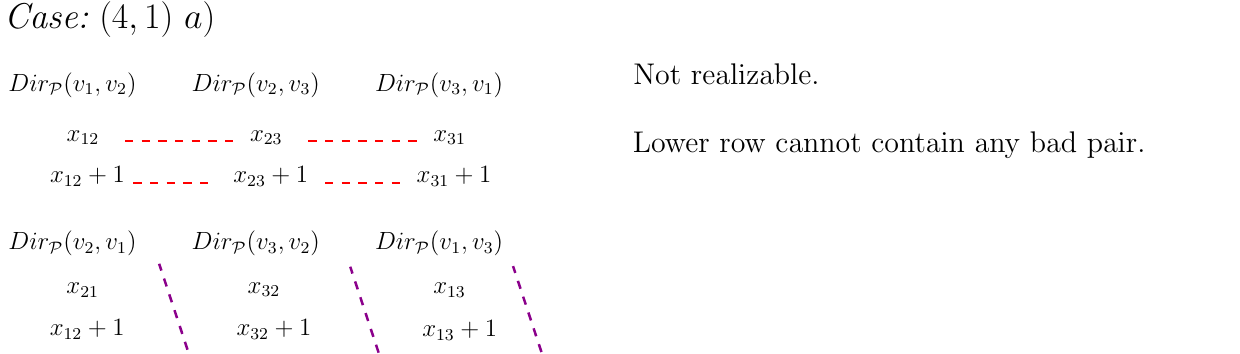}
 \end{minipage}
\end{figure}
\begin{figure}[hbt!]
 \begin{minipage}[t]{\textwidth}
  \includegraphics[scale=0.6]{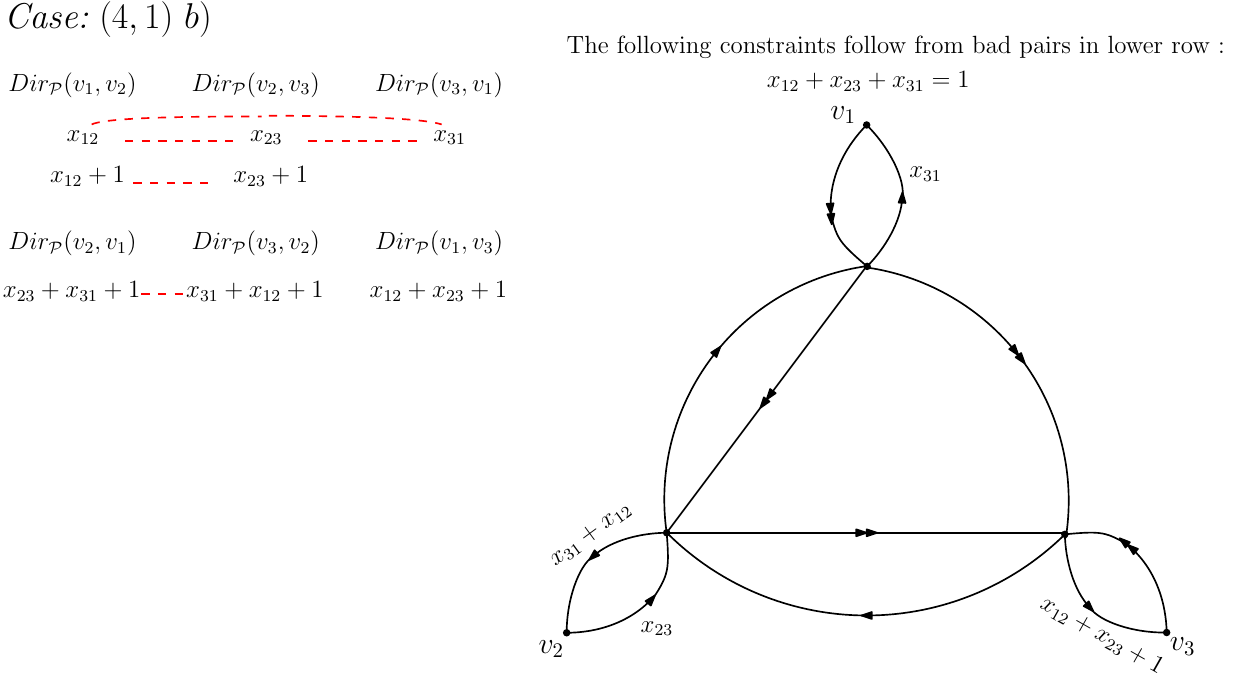}
 \end{minipage}
\end{figure}
\begin{figure}[hbt!]
 \begin{minipage}[t]{\textwidth}
  \includegraphics[scale=0.6]{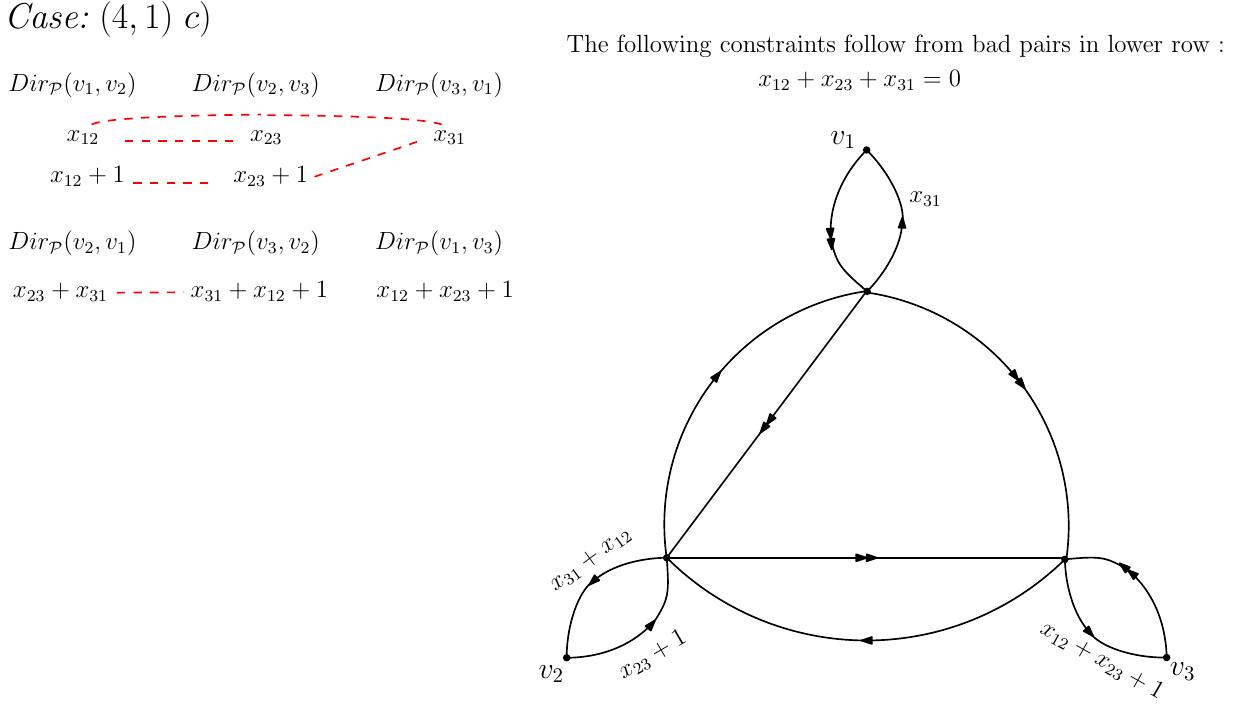}
 \end{minipage}
\end{figure}
\begin{figure}[hbt!]
 \begin{minipage}[t]{\textwidth}
  \includegraphics[scale=0.6]{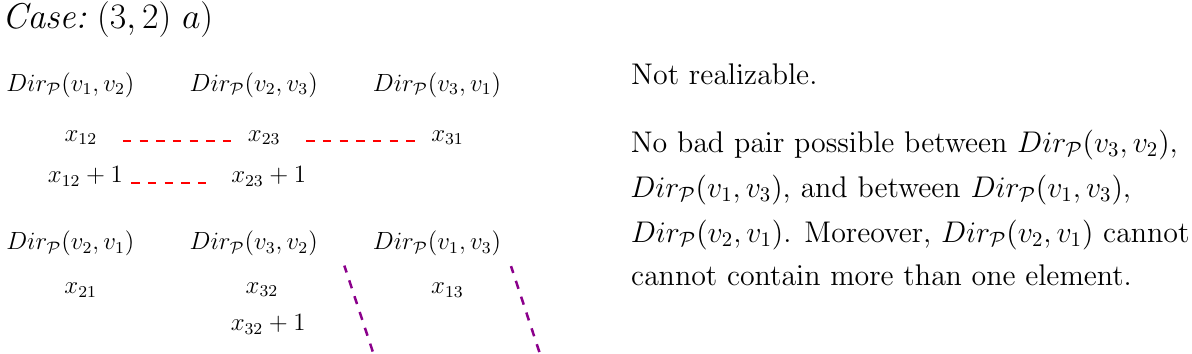}
 \end{minipage}
\end{figure}
\begin{figure}[hbt!]
 \begin{minipage}[t]{\textwidth}
  \includegraphics[scale=0.6]{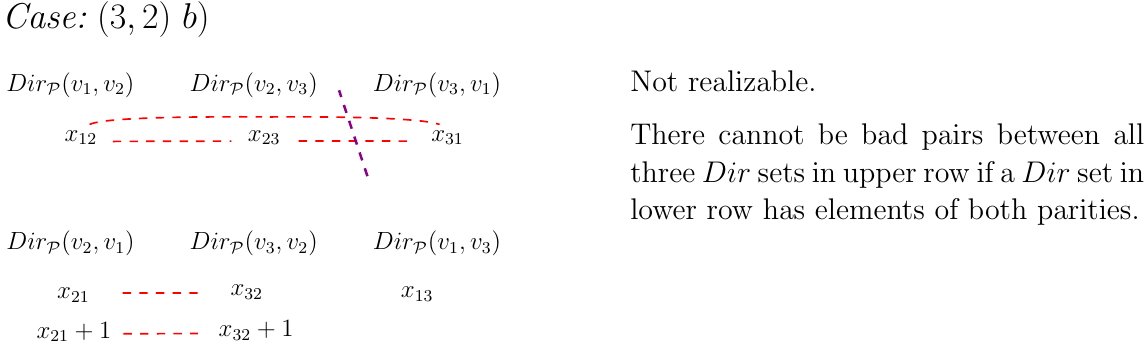}
 \end{minipage}
\end{figure}
\begin{figure}[hbt!]
 \begin{minipage}[t]{\textwidth}
  \includegraphics[scale=0.6]{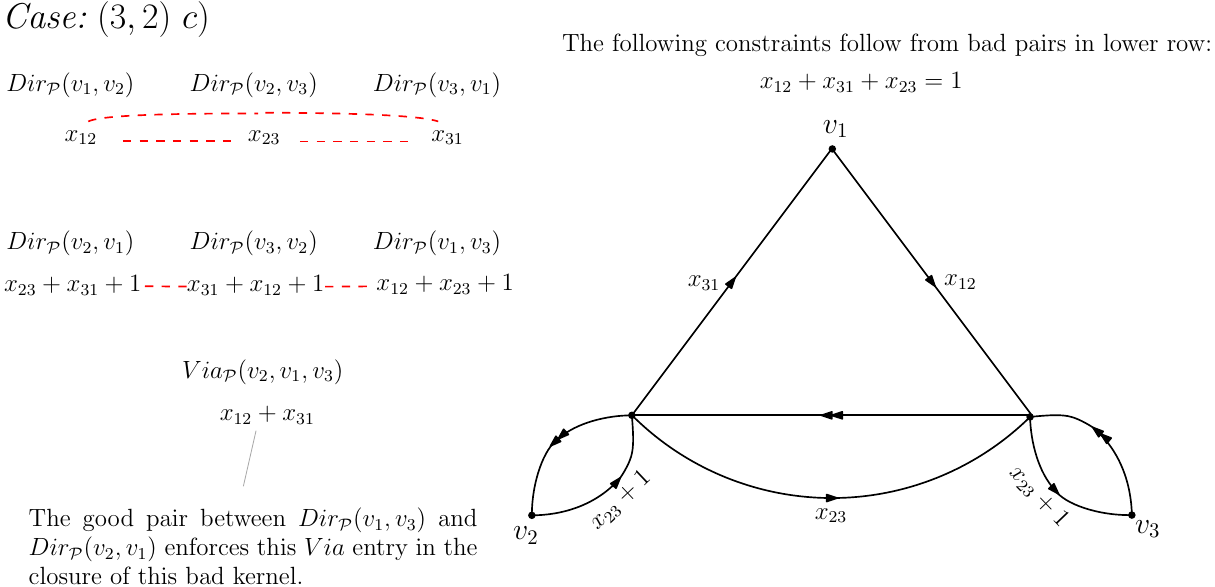}
 \end{minipage}
\end{figure}
\begin{figure}[hbt!]
 \begin{minipage}[t]{\textwidth}
  \includegraphics[scale=0.6]{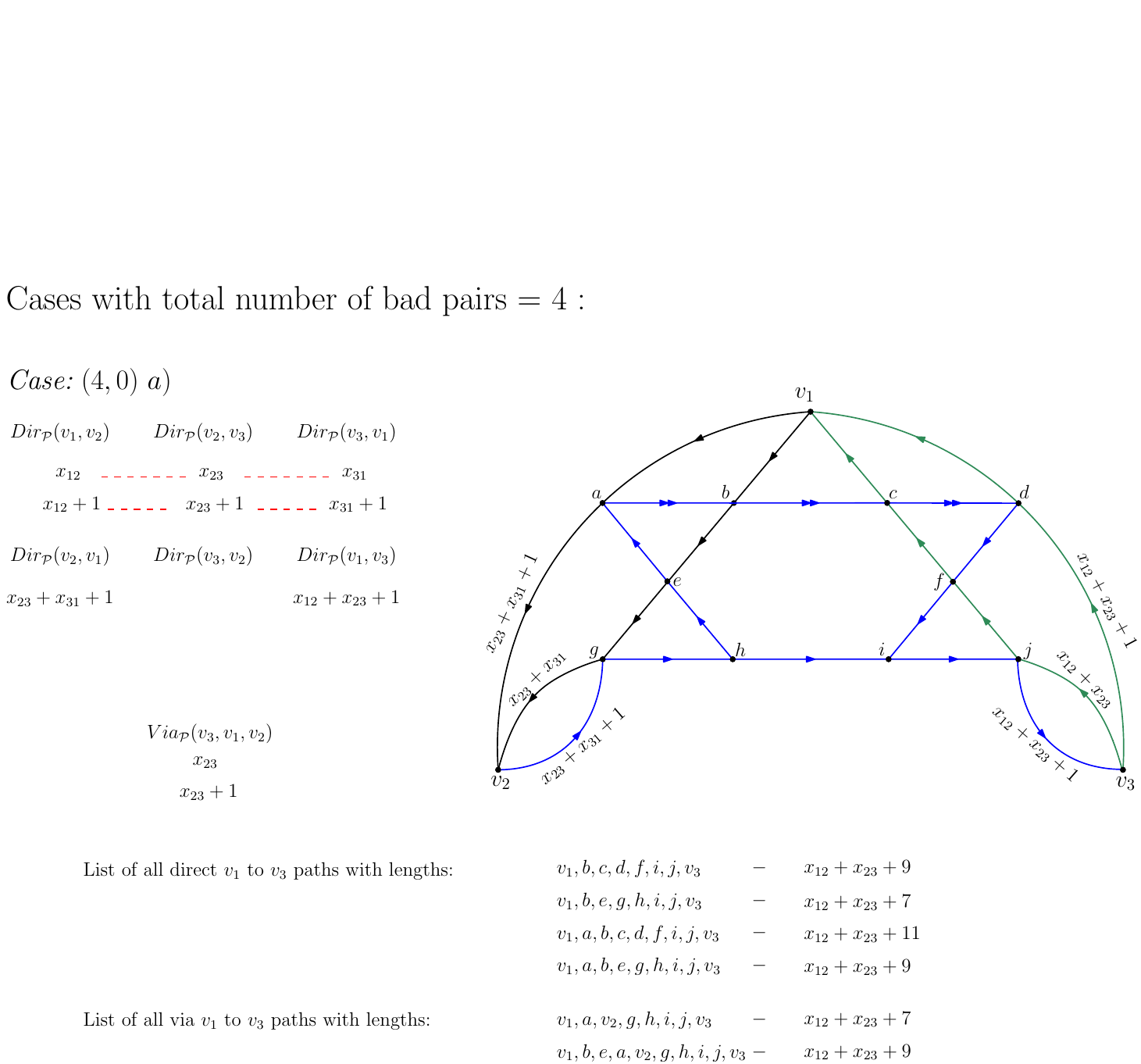}
\caption{Case $(4,0)$, with list of all paths from $v_1$ to $v_3$
 for ease of verification.}\label{fig:satanic}
 \end{minipage}
\end{figure}
\begin{figure}[hbt!]
 \begin{minipage}[t]{\textwidth}
  \includegraphics[scale=0.6]{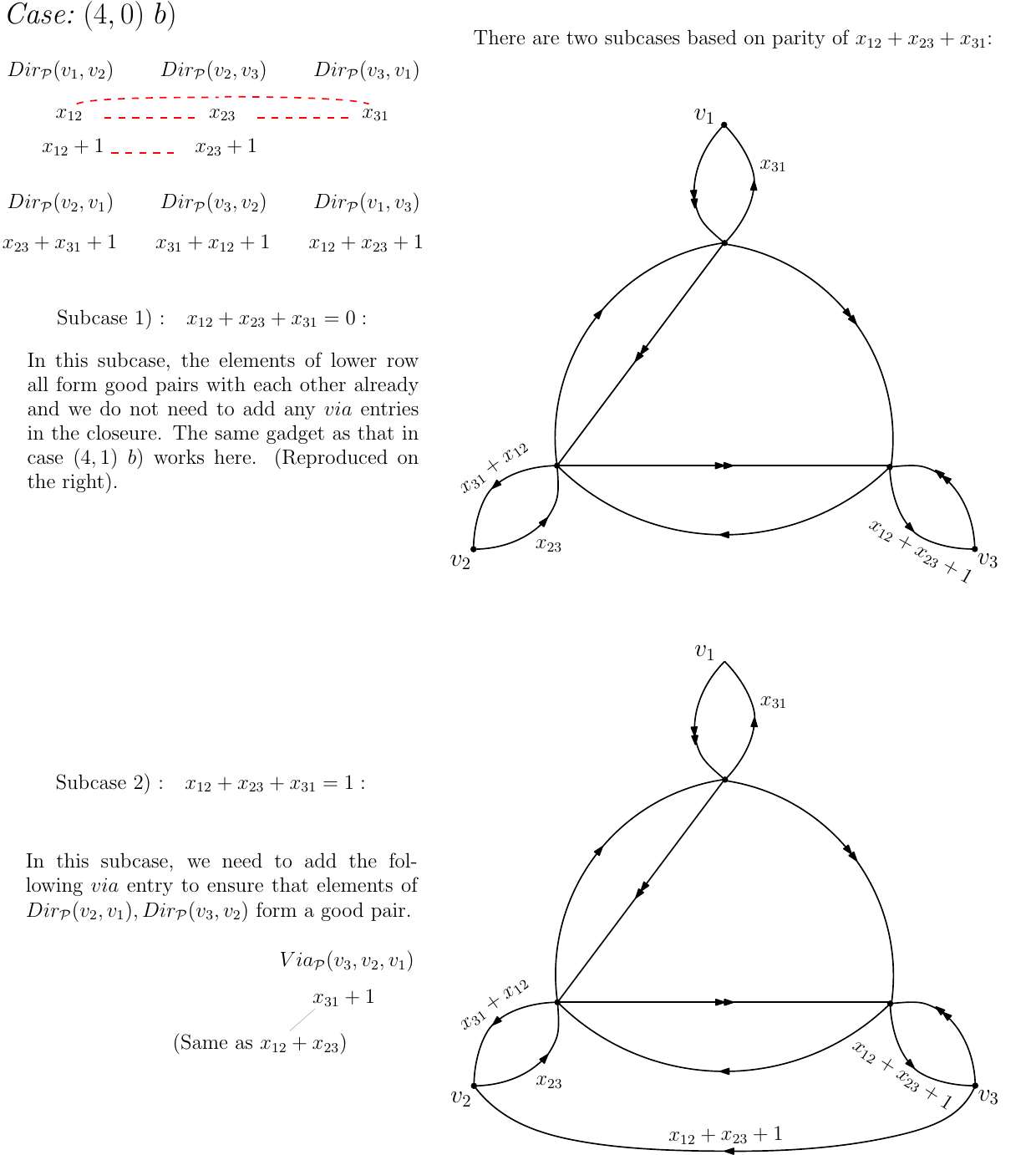}
 \end{minipage}
\end{figure}
\begin{figure}[hbt!]
 \begin{minipage}[t]{\textwidth}
  \includegraphics[scale=0.6]{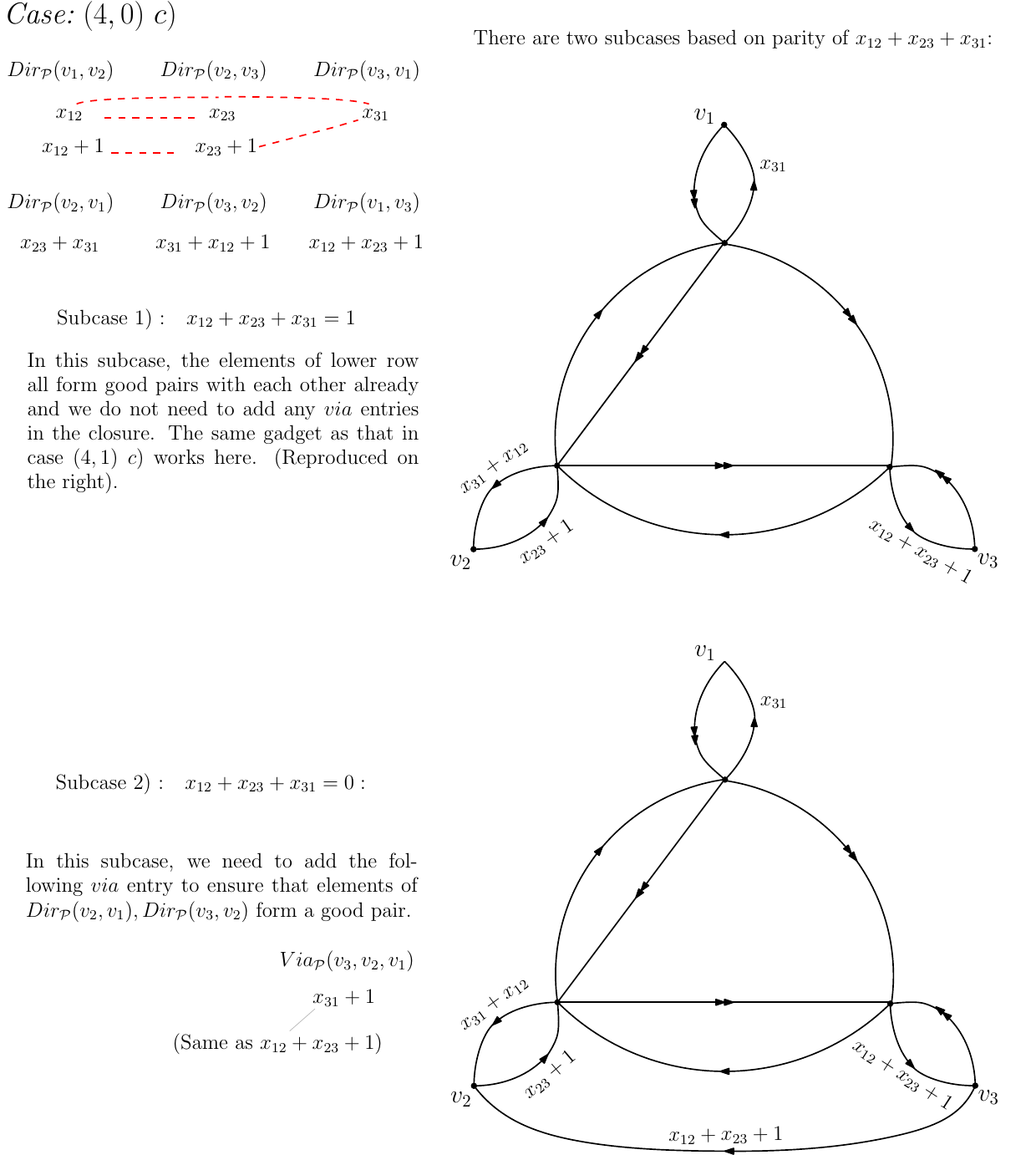}
 \end{minipage}
\end{figure}
\begin{figure}[hbt!]
 \begin{minipage}[t]{\textwidth}
  \includegraphics[scale=0.6]{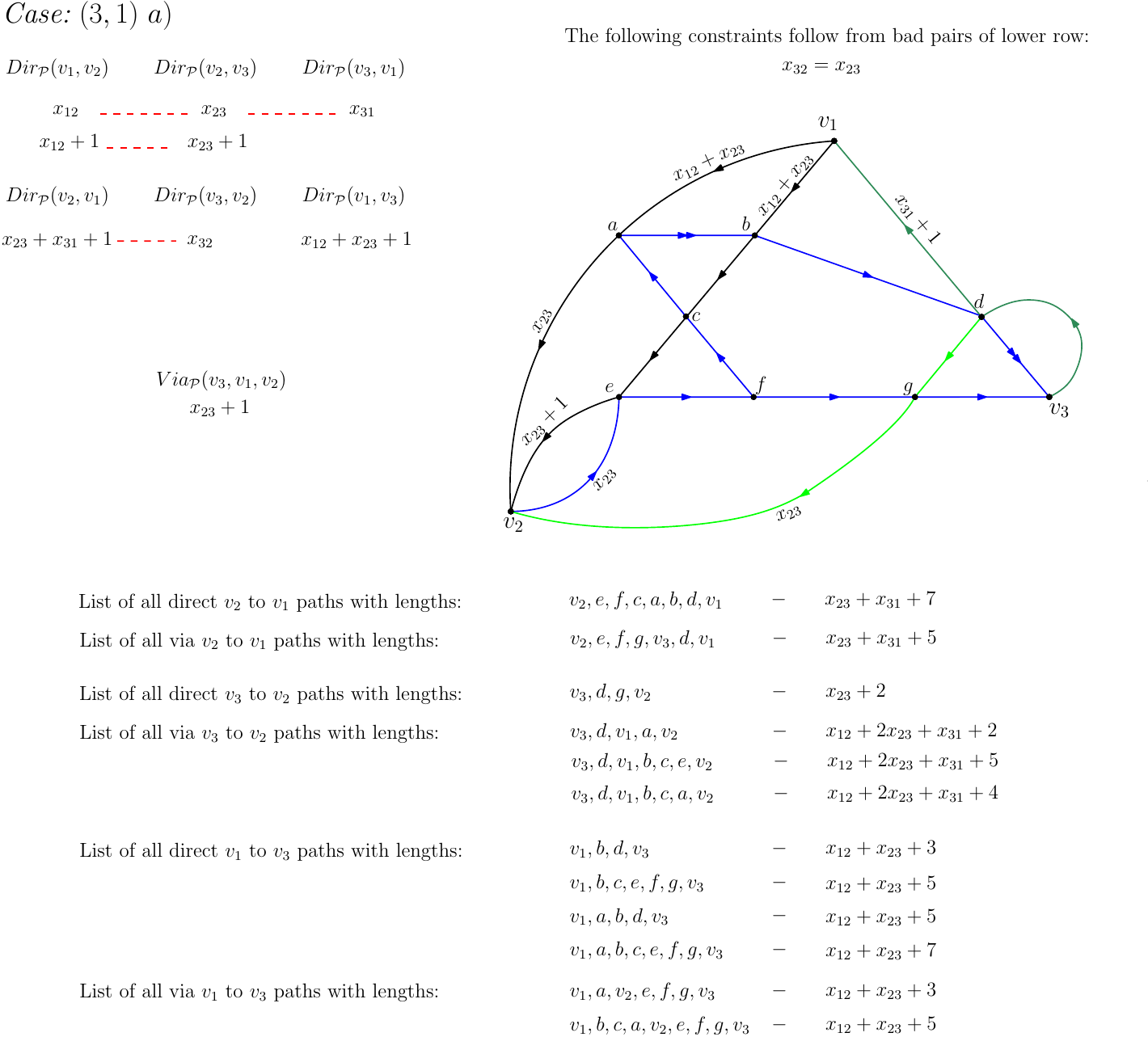}
 \end{minipage}
\end{figure}
\begin{figure}[hbt!]
 \begin{minipage}[t]{\textwidth}
  \includegraphics[scale=0.6]{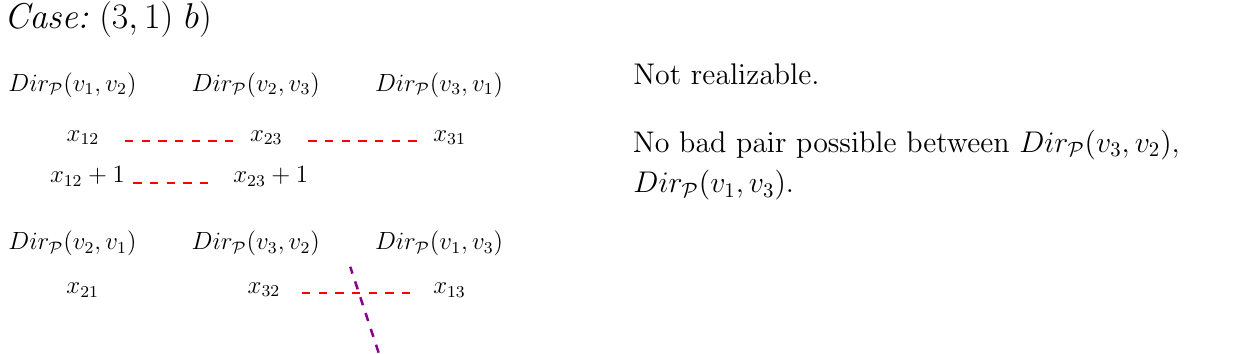}
 \end{minipage}
\end{figure}
\begin{figure}[hbt!]
 \begin{minipage}[t]{\textwidth}
  \includegraphics[scale=0.6]{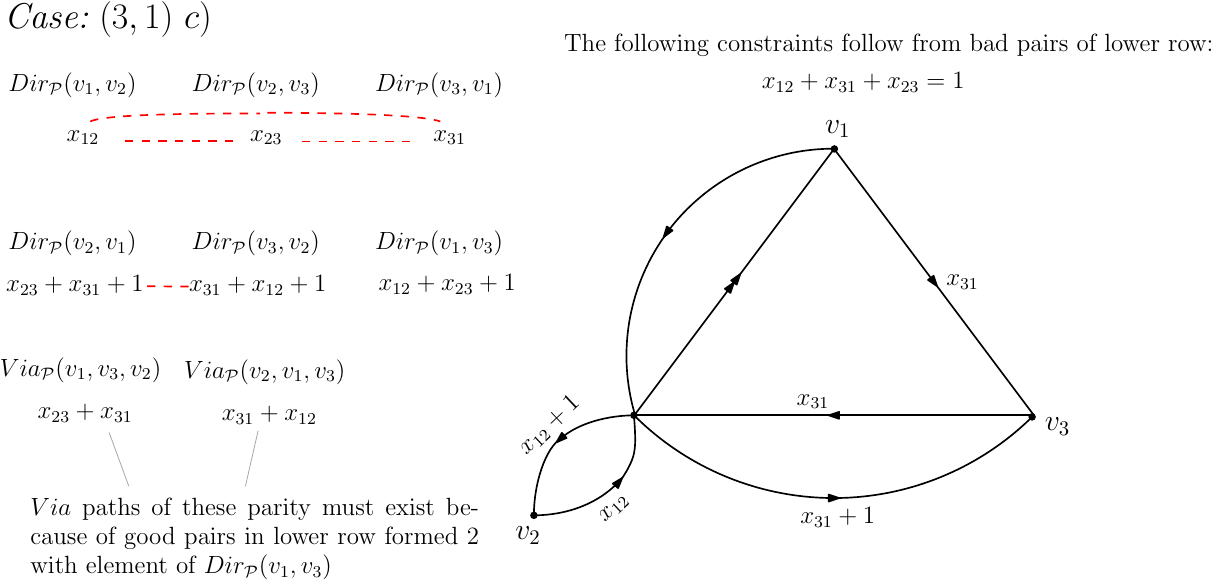}
 \end{minipage}
\end{figure}
\begin{figure}[hbt!]
 \begin{minipage}[t]{\textwidth}
  \includegraphics[scale=0.6]{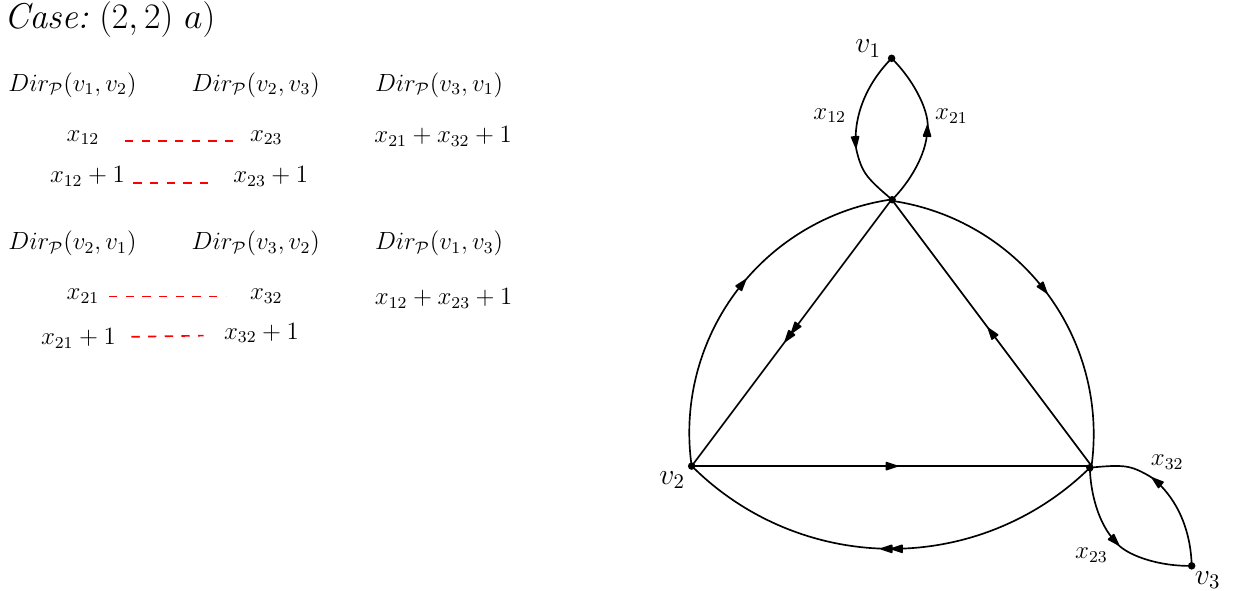}
 \end{minipage}
\end{figure}
\begin{figure}[hbt!]
 \begin{minipage}[t]{\textwidth}
  \includegraphics[scale=0.6]{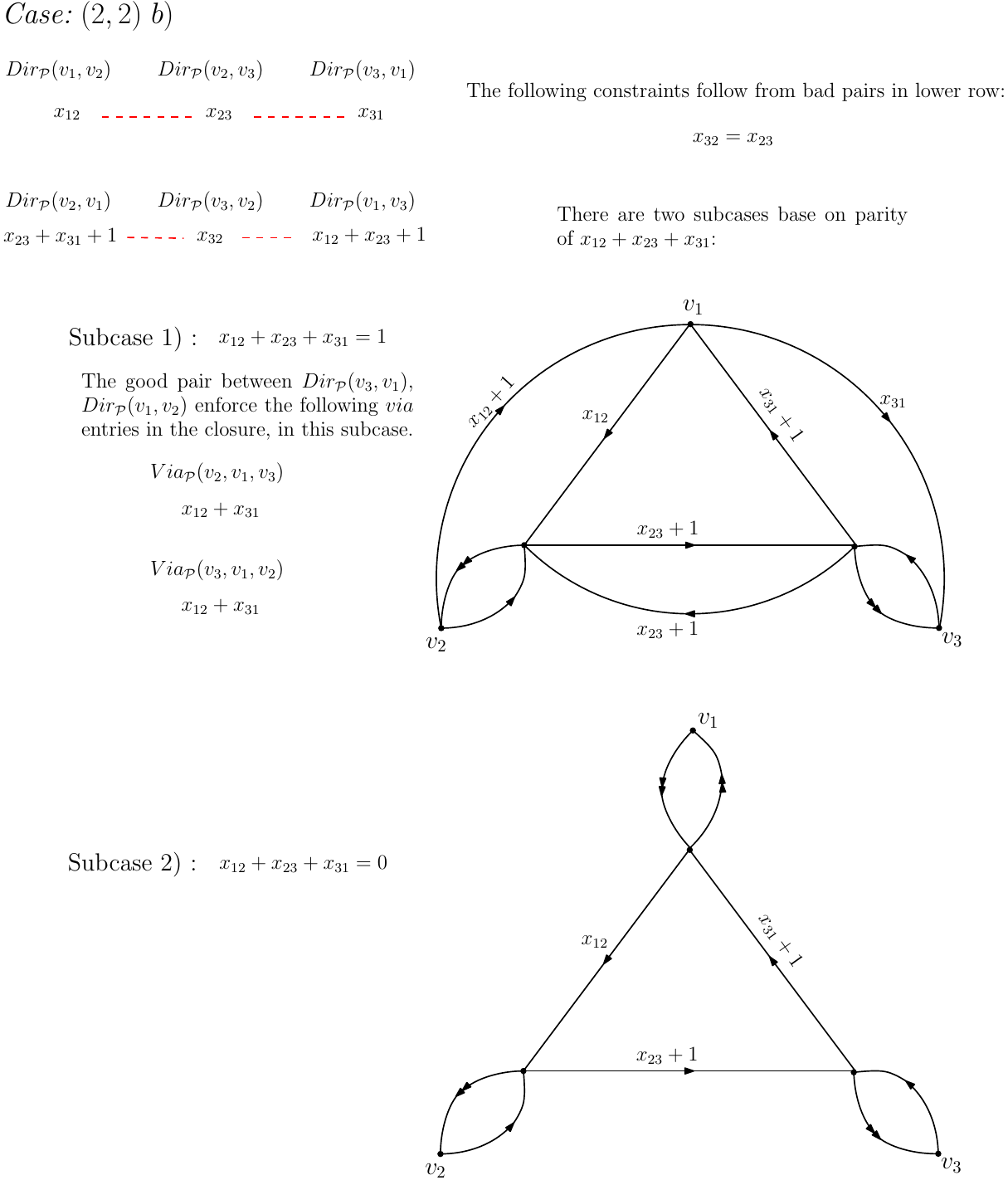}
 \end{minipage}
\end{figure}
\begin{figure}[hbt!]
 \begin{minipage}[t]{\textwidth}
  \includegraphics[scale=0.6]{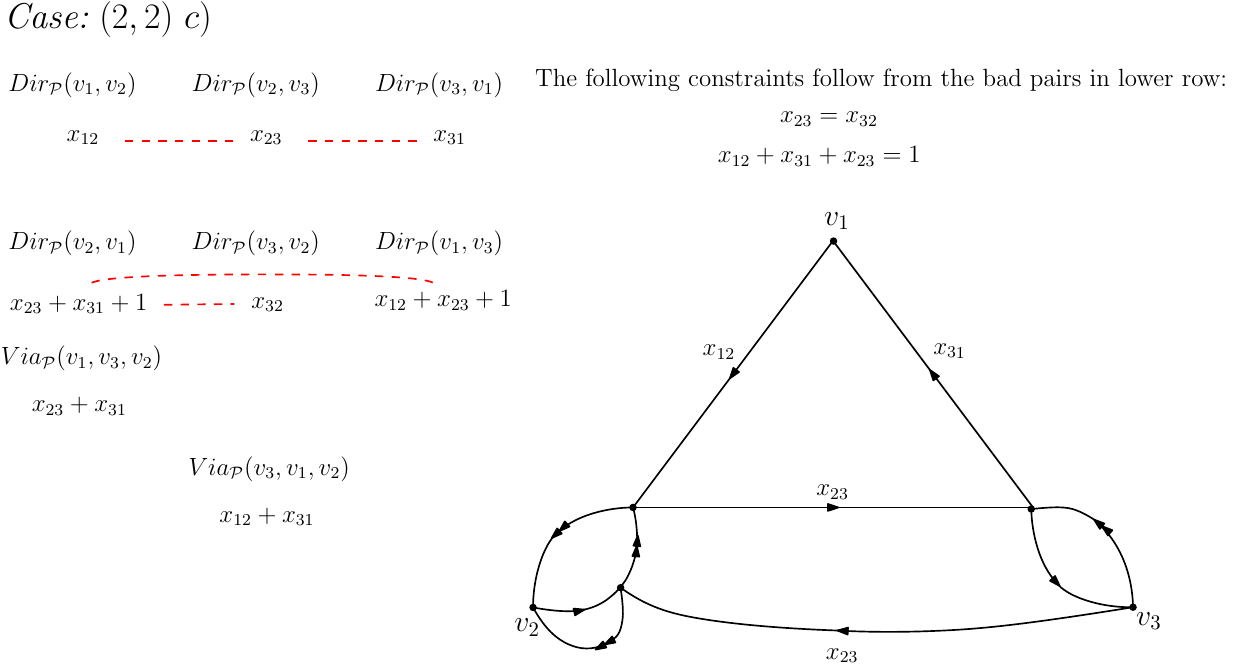}
 \end{minipage}
\end{figure}
\begin{figure}[hbt!]
 \begin{minipage}[t]{\textwidth}
  \includegraphics[scale=0.6]{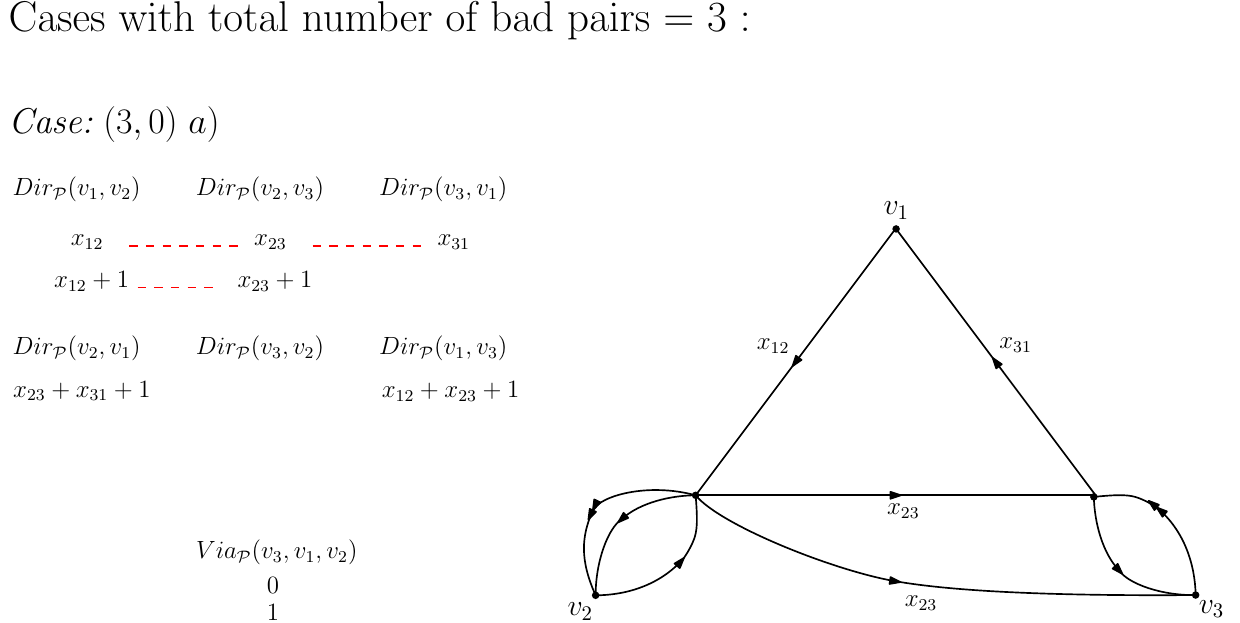}
 \end{minipage}
\end{figure}
\begin{figure}[hbt!]
 \begin{minipage}[t]{\textwidth}
  \includegraphics[scale=0.6]{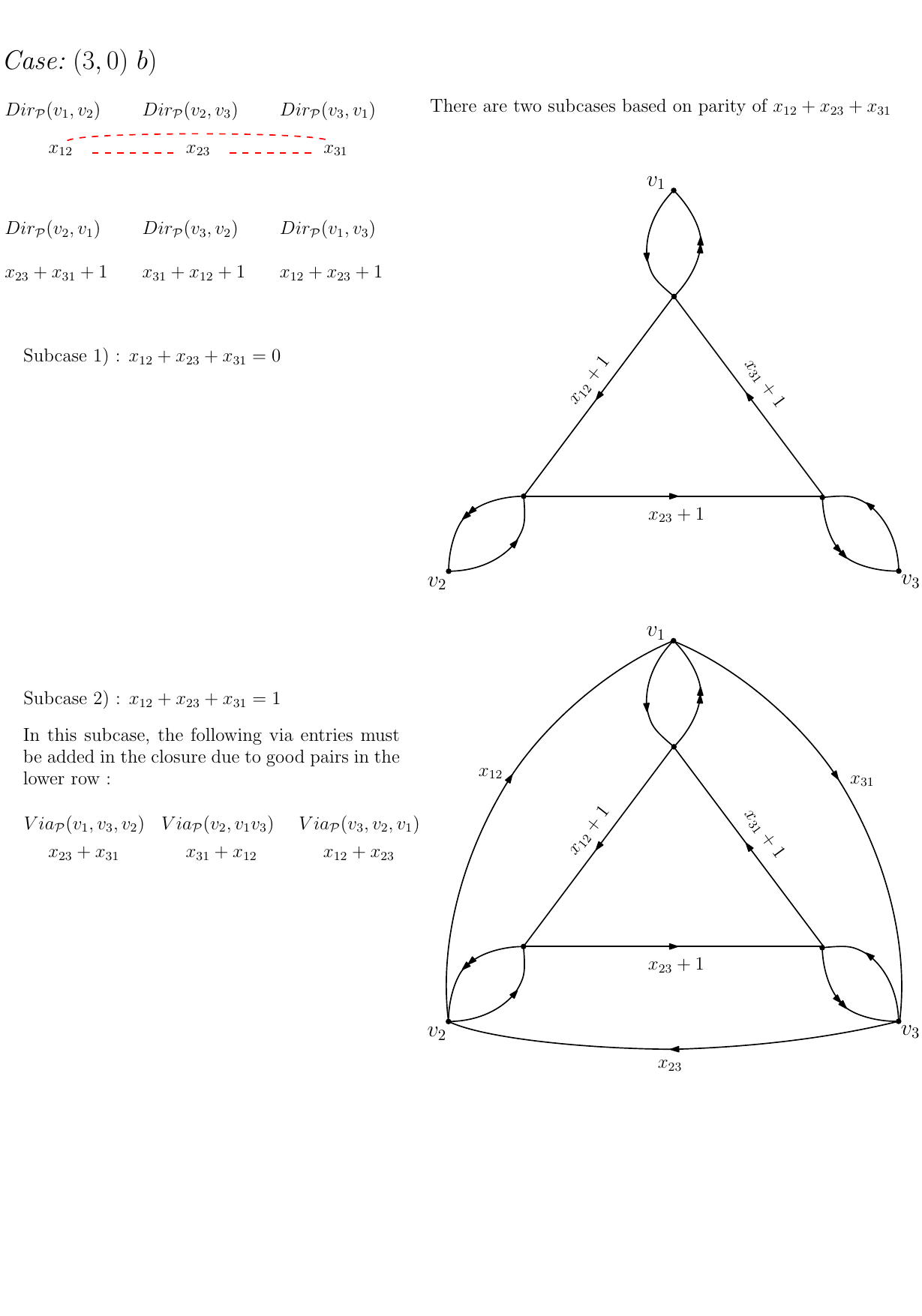}
 \end{minipage}
\end{figure}
\begin{figure}[hbt!]
 \begin{minipage}[t]{\textwidth}
  \includegraphics[scale=0.6]{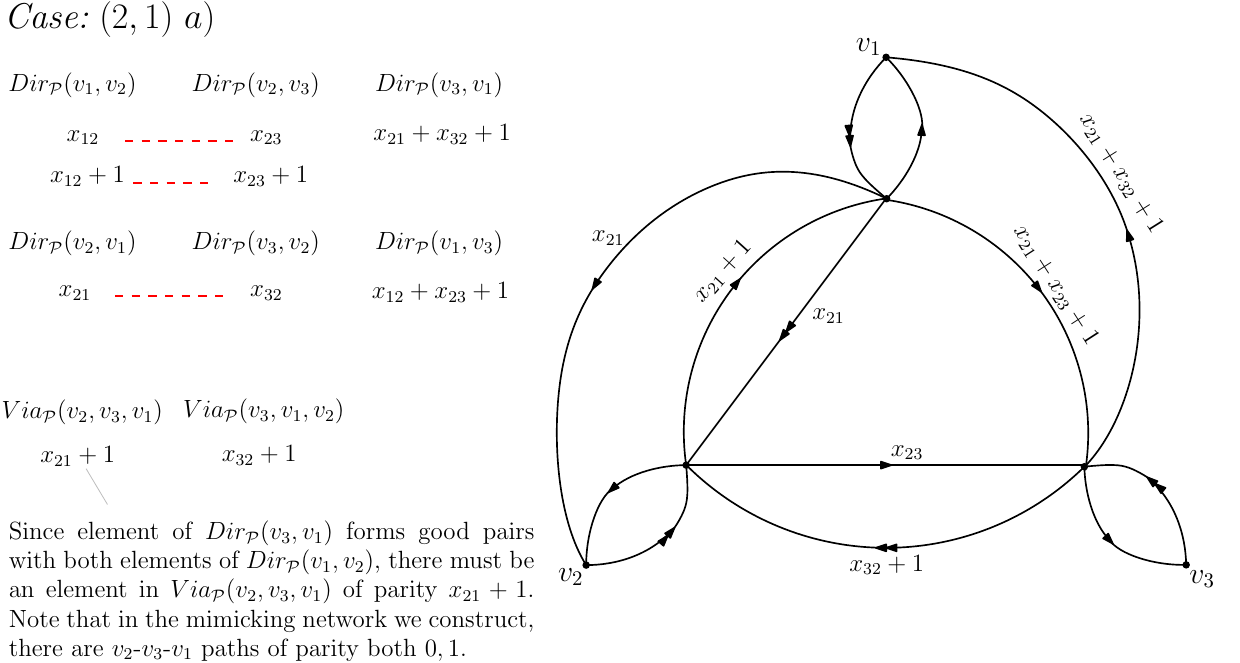}
 \end{minipage}
\end{figure}
\begin{figure}[hbt!]
 \begin{minipage}[t]{\textwidth}
  \includegraphics[scale=0.6]{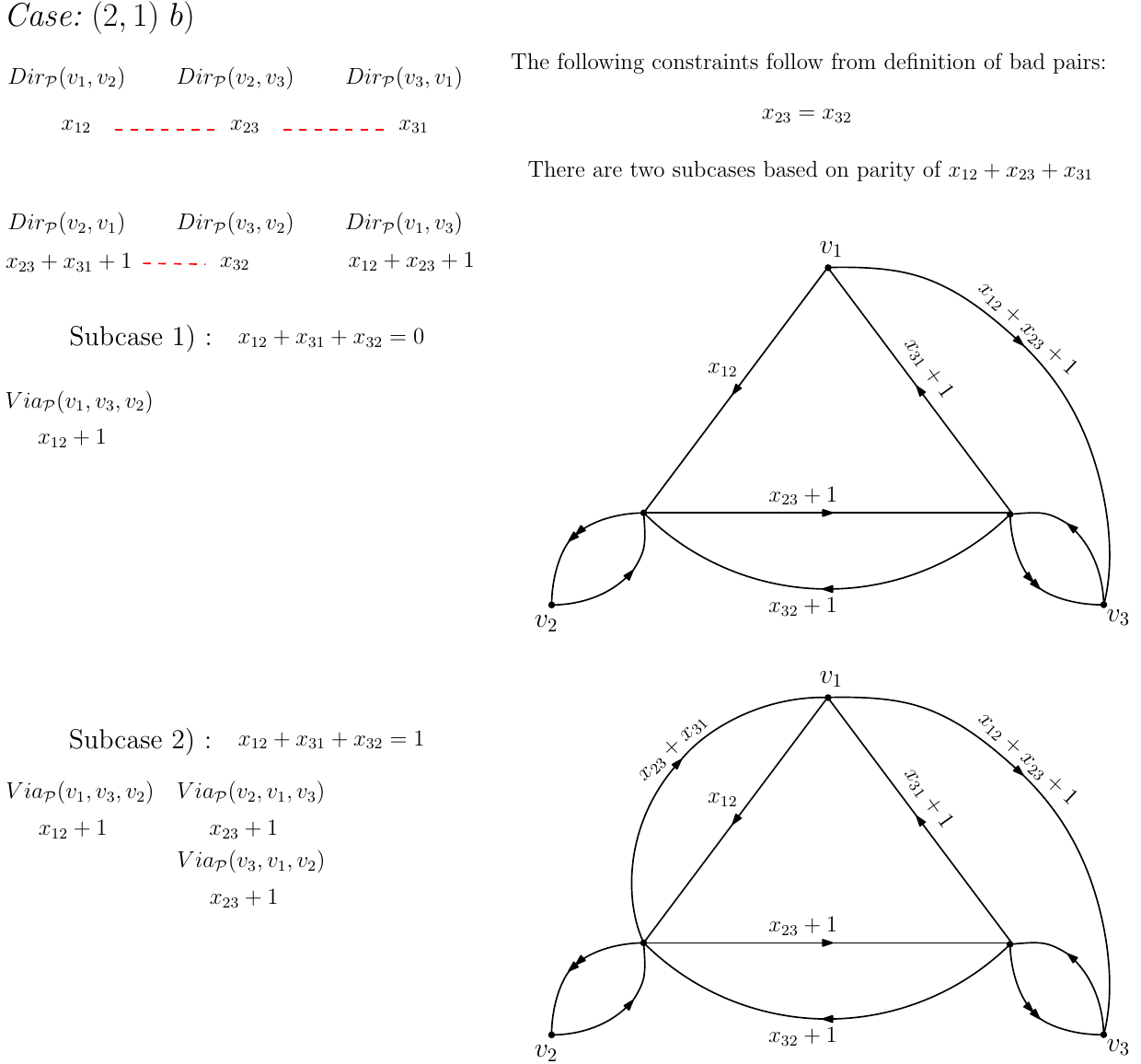}
 \end{minipage}
\end{figure}
\begin{figure}[hbt!]
 \begin{minipage}[t]{\textwidth}
  \includegraphics[scale=0.6]{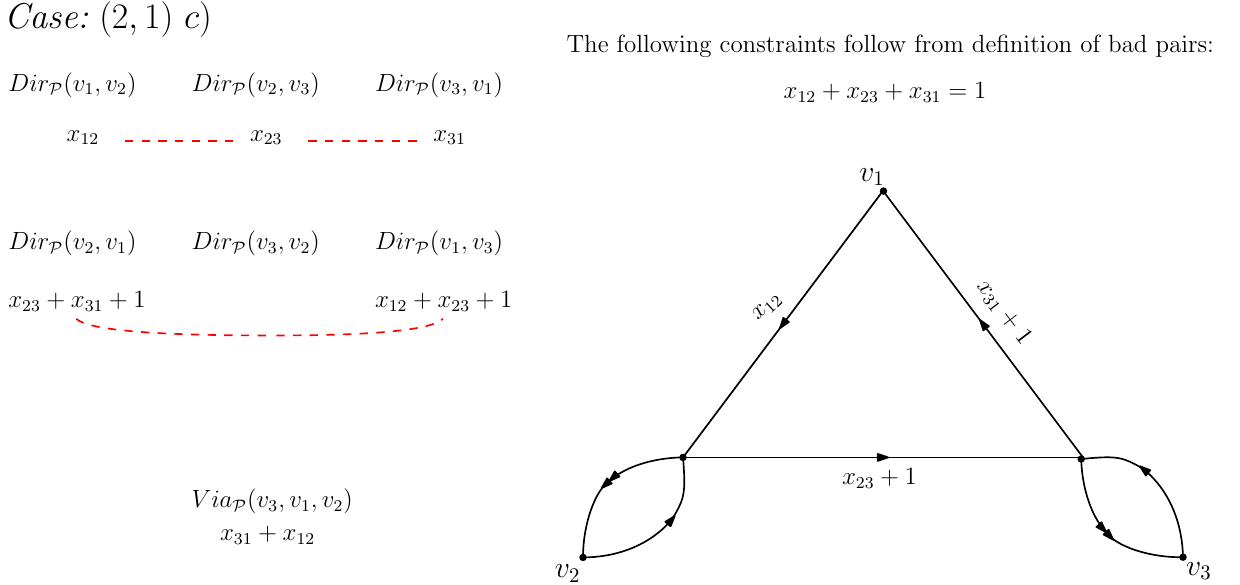}
 \end{minipage}
\end{figure}
\begin{figure}[hbt!]
 \begin{minipage}[t]{\textwidth}
  \includegraphics[scale=0.6]{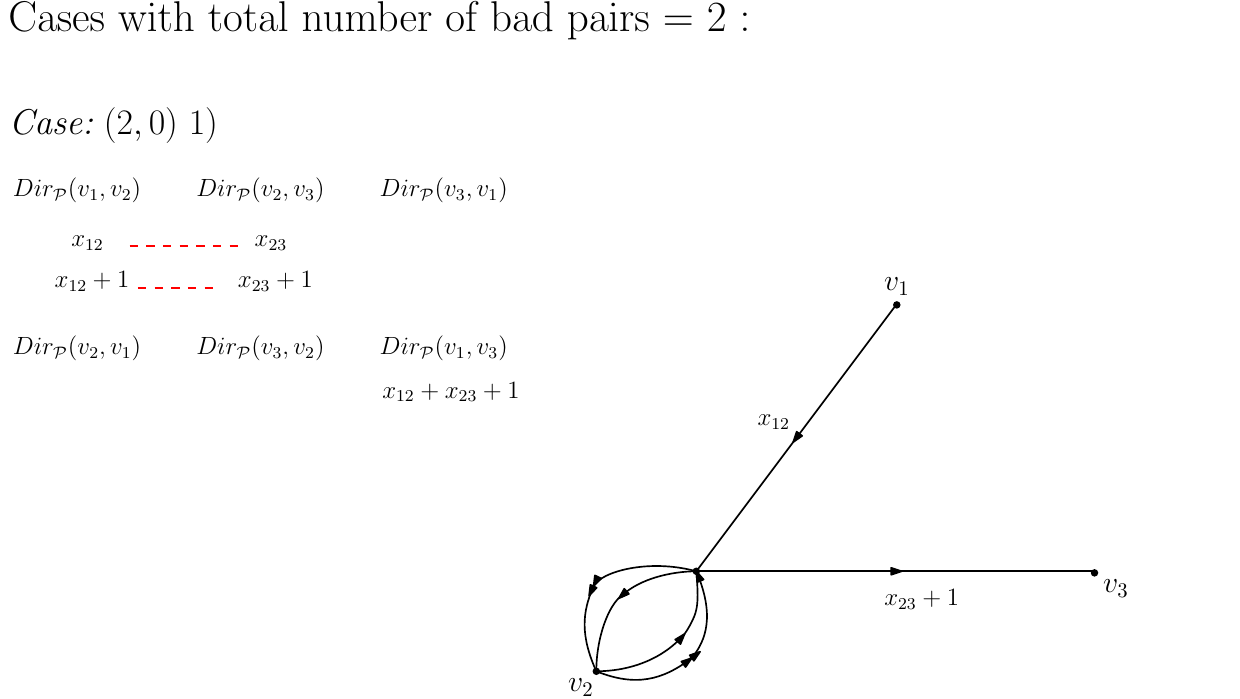}
 \end{minipage}
\end{figure}
\begin{figure}[hbt!]
 \begin{minipage}[t]{\textwidth}
  \includegraphics[scale=0.6]{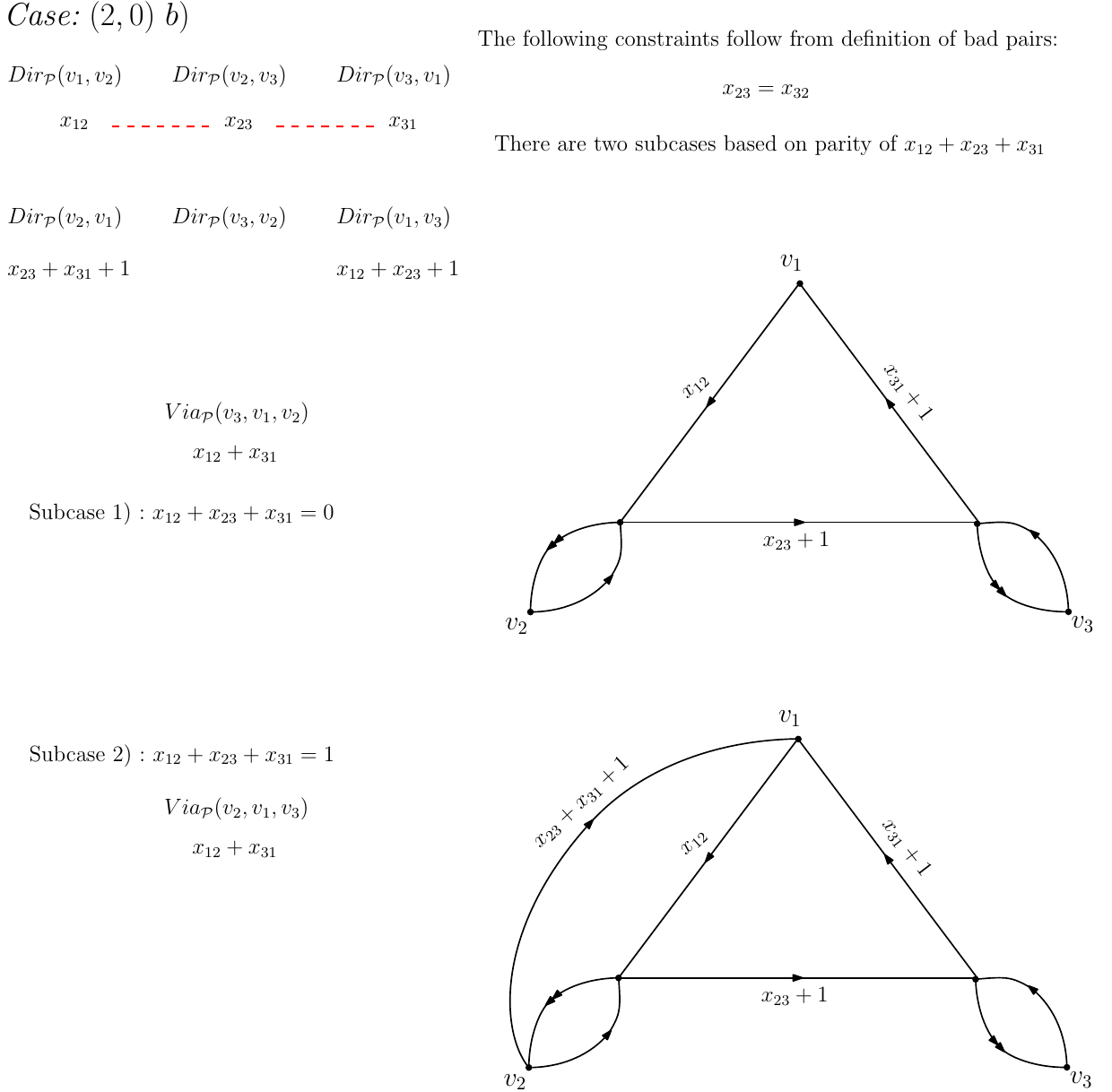}
 \end{minipage}
\end{figure}
\begin{figure}[hbt!]
 \begin{minipage}[t]{\textwidth}
  \includegraphics[scale=0.6]{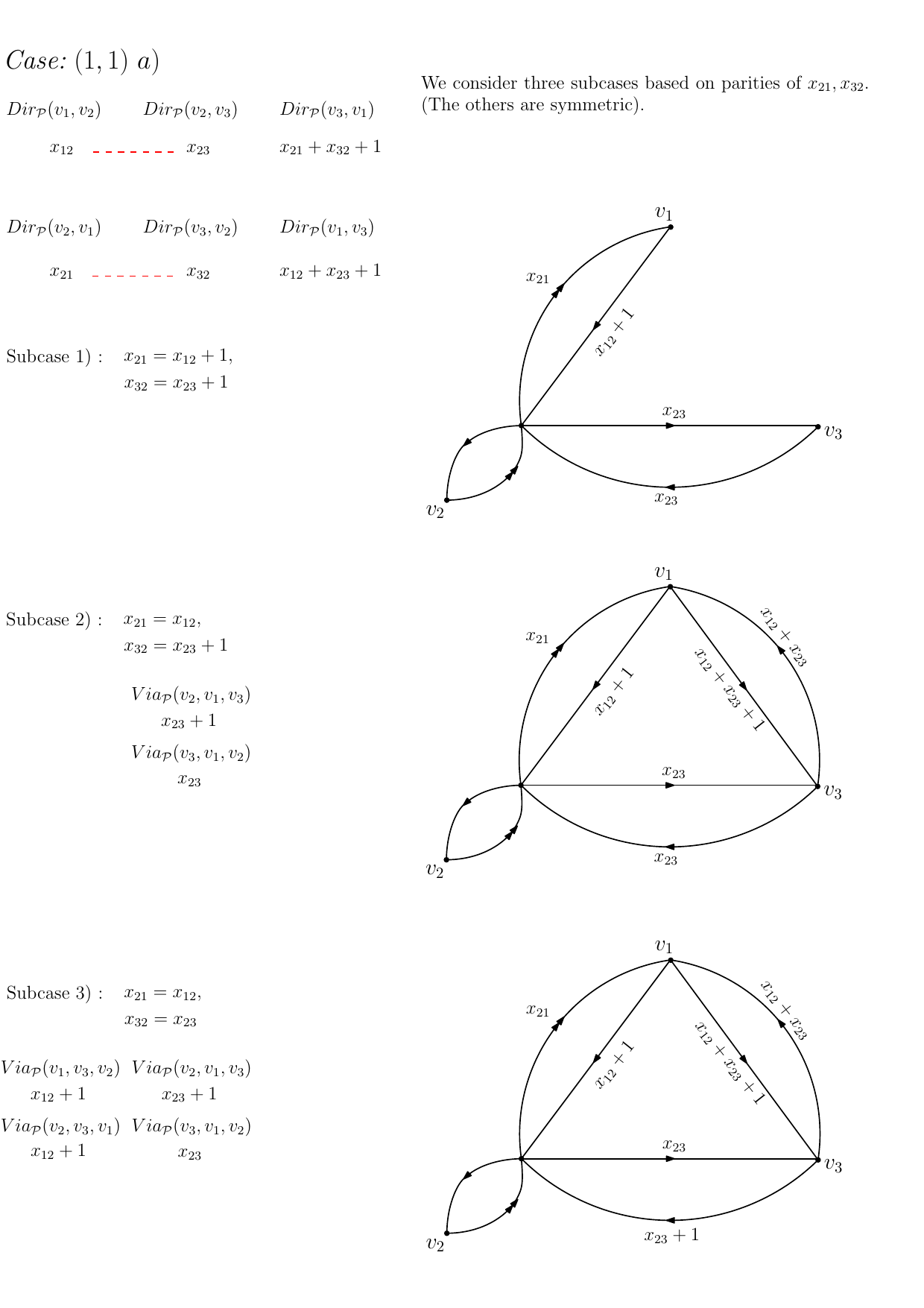}
 \end{minipage}
\end{figure}
\begin{figure}[hbt!]
 \begin{minipage}[t]{\textwidth}
  \includegraphics[scale=0.6]{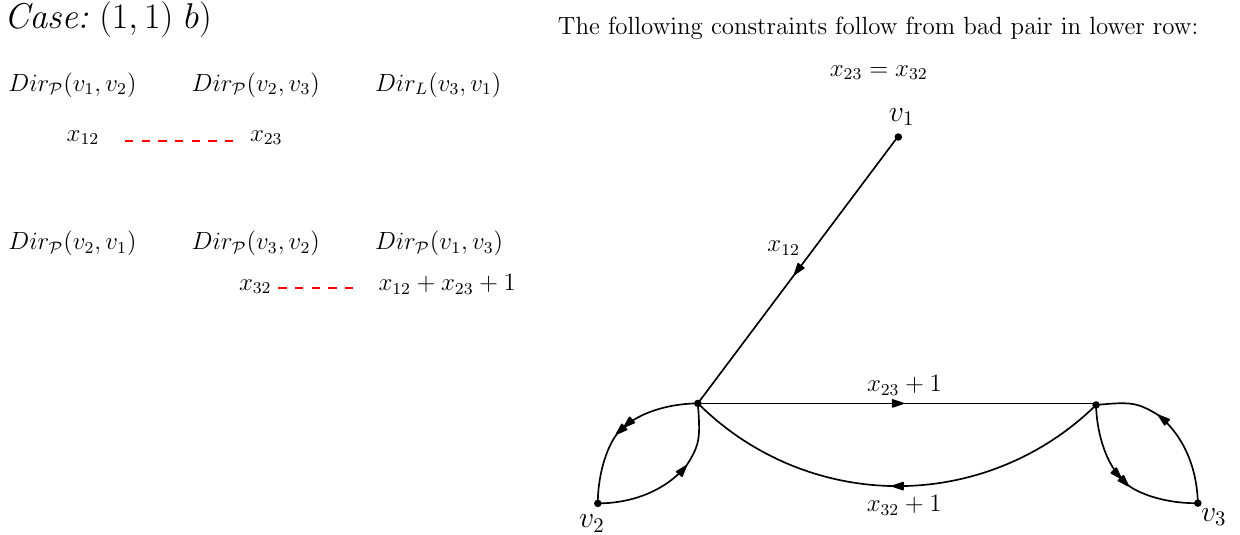}
 \end{minipage}
\end{figure}
\end{proof}

%

%
%
%
%
%
%
%
%
\end{document}